\documentclass[a4paper,reqno,11pt]{amsart} %Loads amsthm, amsmath, amsfonts + a4paper and right-numbered equations

\usepackage[T1]{fontenc} 
\usepackage[utf8]{inputenc}
\usepackage[english]{babel} 
\usepackage{newpxtext} %Palatino

\usepackage{scalerel}

\usepackage{amssymb,textcomp,mathrsfs,stmaryrd,braket,commath,relsize}
\usepackage{bm} %To use many alphabets
	\SetSymbolFont{stmry}{bold}{U}{stmry}{m}{n} %Avoids meaningless warning for bold math symbols
\usepackage[new]{old-arrows} %For long hook-arrows
    
\usepackage{mathtools}
\usepackage{thm-restate}

\theoremstyle{plain} %Math enviroments I
    \newtheorem{theorem}{Theorem}[section]
    \newtheorem*{theorem*}{Theorem}
	
	\newtheorem{citedtheorem}{Theorem}%[section]

    \newtheorem{proposition}[theorem]{Proposition}
    \newtheorem*{proposition*}{Proposition}
    
	\newtheorem{corollary}[theorem]{Corollary}
    \newtheorem*{corollary*}{Corollary}
    
	\newtheorem{lemma}[theorem]{Lemma}
    \newtheorem*{lemma*}{Lemma}
    
	\newtheorem{conjecture}[theorem]{Conjecture}
    \newtheorem*{conjecture*}{Conjecture}
    
\theoremstyle{definition} %Math enviroments II
    \newtheorem{definition}[theorem]{Definition}
    \newtheorem*{definition*}{Definition}

    \newtheorem*{notation*}{Notation}

\theoremstyle{remark} %Math enviroments III
    \newtheorem{remark}[theorem]{Remark}
    \newtheorem*{remark*}{Remark}

	\newtheorem*{example*}{Example}

\numberwithin{equation}{section}
\numberwithin{figure}{section}

\usepackage{etoolbox} 
    \newcommand{\addQEDstyle}[2]{\AtBeginEnvironment{#1}{\pushQED{\qed}\renewcommand{\qedsymbol}{#2}}
    \AtEndEnvironment{#1}{\popQED}} %Symbol at the end of environment: call as \addQEDstyle{environmentname}{symbolname}
    \addQEDstyle{remark}{$\circ$}
    \addQEDstyle{remark*}{$\circ$} %Triangles at the end of remarks
	\addQEDstyle{example}{$\circ$}
    \addQEDstyle{example*}{$\circ$} %Triangles at the end of examples
	\addQEDstyle{definition}{$\circ$}
    \addQEDstyle{definition*}{$\circ$} %Triangles at the end of definitions

\usepackage{tikz}
	\usetikzlibrary{cd} %Commutative diagrams
	
\usepackage{subcaption}

\AtBeginDocument{\def\MR#1{}} %No MR number in bibliography
\apptocmd{\sloppy}{\hbadness 10000\relax}{}{} %For badboxes in .bbl (uses etoolbox)

\makeatletter %Print 2020AMSsubjclass (instead of 2010)
	\@namedef{subjclassname@2020}{
		\textup{2020} Mathematics Subject Classification}
\makeatother

\usepackage{ytableau}
\usepackage[percent]{overpic}
\usepackage{enumitem}
\usepackage{rotating}

\usepackage{multirow}
\usepackage{tabularx,stackengine,collcell}
\let\endminwd\relax
\newcolumntype{L}[1]{>{\collectcell\xminwd l{#1}}l<{\endminwd\endcollectcell}}
\newcolumntype{C}[1]{>{\collectcell\xminwd c{#1}}c<{\endminwd\endcollectcell}}
\newcolumntype{R}[1]{>{\collectcell\xminwd r{#1}}r<{\endminwd\endcollectcell}}
\def\minwd#1#2#3\endminwd{\stackengine{0pt}{#3}{\rule{#2}{0pt}}{O}{#1}{F}{F}{L}}
\newcommand\xminwd[1]{\minwd#1}
\allowdisplaybreaks

\usepackage[colorlinks,psdextra,pdfencoding=auto]{hyperref}
\hypersetup{colorlinks,linkcolor=red,filecolor=green,urlcolor=blue,citecolor=blue} 

\begin{document}

%%%%%%%%%%%%%%%%%%%%%%%%%%%%%%%%%%%%%%%%%%%%%%%%%%%%%%%%%%%%%%%%%%%%%%%%%%%%%%%%%%%%%%%%%%%%%%%%

\global\long\def\Mob{\varphi}
\global\long\def\domain{D}
%\global\long\def\linkpatt{\alpha}%{\omega}

\global\long\def\summ{\sigma}
\global\long\def\mult{\mathrm{m}}

\global\long\def\link#1#2{\{#1,#2\}}

\newcommand{\maxu}{\Lambda}

\global\long\def\bR{\mathbb{R}}
\global\long\def\bRpos{\mathbb{R}_{> 0}}
\global\long\def\bRnn{\mathbb{R}_{\geq 0}}
\global\long\def\bZ{\mathbb{Z}}
\global\long\def\bN{\mathbb{N}}
\global\long\def\bZpos{\mathbb{Z}_{>0}}
\global\long\def\bZneg{\mathbb{Z}_{<0}}
\global\long\def\bZnn{\mathbb{Z}_{\geq 0}}
\global\long\def\bQ{\mathbb{Q}}
\global\long\def\bC{\mathbb{C}}
\global\long\def\bH{\mathbb{H}}
\global\long\def\bD{\mathbb{D}}
\global\long\def\bA{\mathbb{A}}

 \global\long\def\sF{\mathcal{F}}
 \global\long\def\sZ{\mathcal{Z}}
 \global\long\def\sD{\mathcal{D}}
 \global\long\def\sG{\mathcal{G}}
 \global\long\def\sC{\mathcal{C}}
 \global\long\def\sL{\mathcal{L}}
 \global\long\def\sA{\mathcal{A}}
 \global\long\def\sE{\mathcal{E}}
 \global\long\def\sR{\mathcal{R}}
 \global\long\def\sS{\mathcal{S}}
 \global\long\def\sP{\mathcal{P}}
 \global\long\def\sM{\mathcal{M}}
 \global\long\def\sK{\mathcal{K}}
 \global\long\def\sV{\mathcal{V}}
 \global\long\def\sP{\mathcal{P}}
 
\global\long\def\ii{\mathfrak{i}}

\newcommand{\Sym}{\mathfrak{S}}
\newcommand{\LP}{\mathsf{LP}}
\newcommand{\GLP}{\mathsf{MLP}}
\newcommand{\PLP}{\mathsf{PLP}}
\newcommand{\TL}{\mathsf{TL}}
\newcommand{\vTL}{\mathsf{TL}}
\newcommand{\Uqsltwo}{{\mathsf{U}_q}}

\newcommand{\dir}{\mathrm{dir}}

\newcommand{\multii}{\varsigma}

\newcommand{\NB}{\textnormal{NB}^\lambda}
\newcommand{\SYT}{\textnormal{SYT}^\lambda}
\newcommand{\RSYT}{\textnormal{RSYT}^\lambda_\multii}
\newcommand{\CSYT}{\textnormal{CSYT}^\lambda_\multii}
\newcommand{\CSYTBar}{\textnormal{CSYT}^{\Bar{\lambda}}_\multii}
\newcommand{\Fill}{\textnormal{Fill}^\lambda_\multii}
\newcommand{\Fillof}[2]{{\textnormal{Fill}_{#1}^{#2}}}
\newcommand{\NBof}[1]{{\textnormal{NB}^{#1}}}
\newcommand{\RSYTof}[1]{{\textnormal{RSYT}^{#1}_\multii}}

\newcommand{\ord}{\textnormal{ord}}
\newcommand{\idpt}{\mathrm{p}_\multii}%{p_\multii}
\newcommand{\idptof}[1]{{\mathrm{p}_{#1}}}%{{p_{#1}}}
\newcommand{\sym}{\mathrm{s}_\multii}%{s_\multii}

\global\long\def\PartF{\mathcal{Z}}
\global\long\def\CobloF{\mathcal{U}}
\global\long\def\SpechtP{\mathcal{P}}
\global\long\def\chamber{\mathfrak{X}}

\newcommand{\re}{\text{\upshape Re}}
\newcommand{\im}{\text{\upshape Im}}

\global\long\def\SLE{\mathrm{SLE}}
\global\long\def\GFF{\mathrm{GFF}}
\global\long\def\aSLE{\alpha\textnormal{-}\mathrm{SLE}}
\global\long\def\iaSLE{\imath(\alpha)\textnormal{-}\mathrm{SLE}}

\global\long\def\PR{\mathbb{P}}
\global\long\def\EX{\mathbb{E}}

\newcommand{\giv}{\,|\,}

%Derivatives notation
\global\long\def\ud{\mathrm{d}}
\global\long\def\der#1{\frac{\ud}{\ud#1}}
\global\long\def\pder#1{\frac{\partial}{\partial#1}}
\global\long\def\pdder#1{\frac{\partial^{2}}{\partial#1^{2}}}
\global\long\def\pddder#1{\frac{\partial^{3}}{\partial#1^{3}}}

\newcommand{\GreenK}{\mathsf{G}}
\newcommand{\ExcK}{\mathsf{K}}
\newcommand{\Walks}{\mathscr{W}}
\newcommand{\walk}{\omega}

\newcommand{\GreenKH}{\mathscr{G}}
\newcommand{\PoissonKH}{\mathscr{P}}
\newcommand{\ExcKH}{\mathscr{K}}

\newcommand{\KWleq}{\stackrel{\scriptsize{()}}{\leftarrow}}
\newcommand{\DPleq}{\preceq} % partial order of DPs
\newcommand{\DPgeq}{\succeq} % partial order of DPs
\newcommand{\CItilingsof}{\mathcal{C}}
\newcommand{\diff}{\mathrm{d}} % stochastic differential

\global\long\def\summ{q}%{p}%{\sigma}
\global\long\def\mult{\mathrm{m}}
\newcommand{\conn}{\vartheta_{\mathrm{GFF}}}

\newcommand{\event}{\mathrm{Conn}}

\global\long\def\quote#1{$``${{#1}}$"$}

\global\long\def\SolSp{\mathcal{S}}%{\mathcal{C}}

\global\long\def\fugacity{\nu}
\newcommand{\Vir}{\mathrm{Vir}}
\newcommand{\SymGrp}{\mathfrak{S}}
\newcommand{\Rows}{\mathfrak{R}^\lambda}%{\textnormal{R}^\lambda}
\newcommand{\Columns}{\mathfrak{C}^\lambda}%{\textnormal{C}^\lambda}

\newcommand{\super}[1]{^{\scaleobj{0.85}{(#1)}}}
\newcommand{\sub}[1]{_{\scaleobj{0.85}{(#1)}}}
\newcommand{\superscr}[1]{^{\scaleobj{0.85}{#1}}}
\newcommand{\subscr}[1]{_{\scaleobj{0.85}{#1}}}

\newcommand{\bs}[1]{{\boldsymbol{#1}}}

\global\long\def\nested{\boldsymbol{\underline{\Cap}}}
\global\long\def\unnested{\boldsymbol{\underline{\cap\cap}}}
\newcommand{\rainbow}[1]{{\nested_{#1}}} 

\newcommand{\BSAOP}{\mathcal{P}}%{\Delta}

\global\long\def\one{\scalebox{0.9}{\textnormal{1}} \hspace*{-.75mm} \scalebox{0.6}{\raisebox{.3em}{\bf |}} }
\global\long\def\onesmall{\scalebox{0.7}{\textnormal{1}} \hspace*{-.75mm} \scalebox{0.5}{\raisebox{.3em}{\bf |}} }

\newcommand{\wt}{\widetilde}
\newcommand{\wh}{\widehat}
\newcommand{\wc}{\widecheck}
\newcommand{\ol}{\overline}
\newcommand{\ul}{\underline}

\newcommand{\Up}{\mathrm{U}}
\newcommand{\Down}{\mathrm{D}}
\newcommand{\Left}{\mathrm{L}}
\newcommand{\Right}{\mathrm{R}}
\newcommand{\Inner}{\mathrm{I}}

\newcommand{\DirReg}[1]{{\bs{\colon} \!\!\! \int_{\bH} |\nabla #1|^2 \ud z \bs{\colon}}}

\newcommand{\dist}{\mathrm{dist}}
\newcommand{\disthaus}{\dist_{\mathrm{H}}}

\newcommand{\OML}{\mathrm{LL}^{\!\mathrm{odd}}}
\newcommand{\config}{\mathrm{Config}}

\newcommand{\sgn}{\mathrm{sgn}}
\newcommand{\loc}{\mathrm{loc}}
\newcommand{\rhoSum}[1]{\varrho_{\Sigma{#1}}}%{{}^{#1}\!\varrho}%{\Sigma\varrho}%{\ol{\rho}}

\newcommand{\Index}{i}
\newcommand{\curve}{\gamma}%{\Gamma}%{\eta}

\newcommand{\DC}{\mathcal{DC}}
\newcommand{\PJ}{\mathcal{X}}
\newcommand{\NJ}{\mathcal{Y}}
\newcommand{\FN}{\mathfrak{N}}

\newcommand{\pp}{\tilde{x}}
\newcommand{\np}{\tilde{y}}

\newcommand{\ownvec}{\mathrm{n}}

%%%%%%%%%%%%%%%%%%%%%%%%%%%

\title{Level lines of the Gaussian free field and $c=1$ degenerate conformal blocks}
%\bigskip{}
%\author{Alex Karrila, Eveliina Peltola, and Lukas Schoug}
%\date{\today}

\vspace{2.5cm}

\begin{center}
{%\LARGE 
\huge
\bf \scshape{
Level lines of the Gaussian free field \\[.5em] and $c=1$ degenerate conformal blocks
}}
\end{center}

\vspace{0.75cm}

\begin{center}
{\Large \scshape Alex Karrila}{\footnotesize\footnotemark[1]} \\
{\footnotesize{\protect\url{alex.karrila@abo.fi}}}\\
\bigskip{}
{\Large \scshape Eveliina Peltola}{\footnotesize\footnotemark[2]\textsuperscript{,}\,\footnotemark[3]} \\
{\footnotesize{\protect\url{eveliina.peltola@aalto.fi}}}\\
\bigskip{}
{\Large \scshape Lukas Schoug}{\footnotesize\footnotemark[4]} \\
{\footnotesize{\protect\url{lschoug@kth.se}}} 
\end{center}

\footnotetext[1]{\r{A}bo Akademi Matematik, Henriksgatan 2, 20500, \r{A}bo, Finland.}
\footnotetext[2]{Department of Mathematics and Systems Analysis, 
P.O. Box 11100, 00076, Aalto University, Finland.}
\footnotetext[3]{Division of Mathematics, University of Cologne, Weyertal 86-90, 50931 Cologne, Germany.}
\footnotetext[4]{KTH Royal Institute of Technology, Lindstedtsv\"agen 25, 100 44, Stockholm, Sweden.}

\setcounter{footnote}{0}

\vspace{0.75cm}

\begin{center}
\begin{minipage}{0.85\textwidth} %\footnotesize
{\scshape Abstract.}
We consider Gaussian free field (GFF) on simply connected domains with piecewise constant Dirichlet boundary data. We show that the crossing probabilities for its level lines are determined by conformal blocks of primary fields in a conformal field theory (CFT) with central charge $c = 1$ which are degenerate at each insertion. Alternatively, the crossing probabilities are ratios of explicit partition functions of fused multiple $\SLE_4$ curves, which can be written in terms of fused Specht polynomials introduced recently by Lafay, Peltola~\&~Roussillon in a representation-theoretic context.

\smallskip

We also prove that for the metric graph GFF introduced by Lupu, with appropriate boundary conditions, the crossing probabilities for its level sets converge in the scaling limit to our formulas. In particular, the geometry of the level-set percolation for both the continuum GFF and the metric graph GFF has a CFT description in terms of the aforementioned $c=1$ conformal blocks, which are linearly independent and solve the Belavin-Polyakov-Zamolodchikov (BPZ) PDEs of arbitrary orders.

\smallskip

Interestingly, not all combinatorial boundary conditions are amenable for the GFF models --- while the CFT contains conformal blocks of primary fields labeled by generalized Dyck paths (i.e., semi-standard Young tableaux), the ones appearing in the above models satisfy specific monotonicity constraints.
\end{minipage}
\end{center}

\vspace*{2mm}

\begin{center}
\includegraphics[width=.6\textwidth]{exploration_step_4_filled}
\end{center}

%\vspace{0.75cm}
\newpage

{\hypersetup{linkcolor=black}
\setcounter{tocdepth}{2}
\tableofcontents}

\newpage

\section{Introduction}
The Gaussian free field (GFF), denoted $\Psi$ (\quote{massless free boson}) 
is a Gaussian random distribution whose covariance kernel is given by the Green's function of the (positive) Laplacian operator. It is ubiquitous in random geometry and mathematical physics. 
It can be defined in terms of a probability measure in a suitable Sobolev space. 
Heuristically speaking, the GFF is a model for a random harmonic function --- it however fails to be an honest function\footnote{Our exposition here is heuristic; the exactly analogous formulas however work for the discretized GFF, see Section~\ref{subsec:DGFF}. The precise definition of the (continuum) GFF is recalled in Section~\ref{subsec:GFF}.}
and has to be treated as a distribution instead.
Roughly speaking, the GFF (in two-dimensional space) has a formal (Euclidean) action functional
\begin{align*}
S(\Psi) =  \int_\bC |\nabla \Psi(z)|^2 \, \ud z ,
\end{align*}
where \quote{$\nabla \Psi$} is the formal gradient of $\Psi$ and \quote{$\ud z$} the two-dimensional Lebesgue measure. 
The expected value with respect to the probability measure of $\Psi$ is formally expressed as
\begin{align*}
\EX[ f(\Psi) ] = \frac{\int f(\Psi) e^{-S(\Psi)} D \Psi}{\int e^{-S(\Psi)} D \Psi} ,
\end{align*}
where the integral is thought of as taken over the space of distributions (the probability space for $\Psi$) and the denominator 
can be formally expressed in terms of the determinant of the Laplacian (which could be zeta-regularized to yield a finite quantity):
\begin{align*}
\int e^{-S(\Psi)} D \Psi = \int \exp \Big( \int_\bC \Psi(z) \Delta \Psi(z) \, \ud z \Big) D \Psi 
= (\det (-\Delta))^{-1/2} .
\end{align*}
On domains $\domain \subsetneq \bC$ with boundary, a boundary condition, or boundary data, for the GFF is defined by adding to it a bounded harmonic function $u \colon \domain \to \bR$ with given boundary values. 
The \emph{partition function} of the GFF \emph{with boundary data} $u$ is defined as~\cite{Dubedat:SLE_and_free_field} 
\begin{align}\label{eq:partition_function_dubedat}
\CobloF_u^{\GFF} \coloneqq \frac{ e^{- \frac{1}{2} ( u,u )_{\nabla}} }{(\det_\zeta (-\Delta))^{1/2}} ,
\end{align}
where $\det_\zeta (-\Delta)$ is the $\zeta$-regularized determinant of the positive Dirichlet Laplacian~\cite{OPS:Extremals_of_determinants_of_Laplacians} (which we will not need in the present work) 
and the regularized Dirichlet norm\footnote{More generally, one would integrate the volume form of the underlying surface, and the regularization procedure may depend on the metric.} 
\begin{align} \label{eq: regularized Dirichlet norm}
( u,u )_{\nabla} \coloneqq \; \DirReg{u} ,
\end{align}
is the contribution of the boundary data (see~\cite{Dubedat:SLE_and_free_field} and Section~\ref{subsec:partition_functions_conformal_blocks}).

To understand the underlying geometry, it is quite natural to consider level sets of the GFF, even though it is not an honest function.
Indeed, despite the fact that the GFF is not pointwise defined, one can make sense its \quote{zero-height} level lines
as Schramm's celebrated SLE (Schramm-Loewner evolution)
curves~\cite{Schramm:Scaling_limits_of_LERW_and_UST, LSW:Conformal_invariance_of_planar_LERW_and_UST, Schramm:ICM}.
This yields a natural coupling between GFF and the $\SLE_4$~\cite{Dubedat:SLE_and_free_field, Schramm-Sheffield:Contour_lines_of_2D_discrete_GFF, Schramm-Sheffield:A_contour_line_of_the_continuum_GFF, Miller-Sheffield:Imaginary_geometry1, Wang-Wu:Level_lines_of_Gaussian_free_field_I}.
The \quote{zero-height} level line can be thought of as a contour separating two regions where the boundary data for the GFF has a jump, see Equation~\eqref{eq:harmonic_function}. 
More generally, with boundary data involving several jumps of varying heights gives rise to a model for multiple level lines,
which are multiple $\SLE_4$ curves~\cite{Wang-Wu:Level_lines_of_Gaussian_free_field_I,Peltola-Wu:Global_and_local_multiple_SLEs_and_connection_probabilities_for_level_lines_of_GFF, Karrila:Computation_of_pairing_probabilities_in_multiple-curve_models}. 
More general multiple $\SLE$ curves have been investigated over a couple of decades in,
e.g.,~\cite{Cardy:SLE_and_Dyson_circular_ensembles, BBK:Multiple_SLEs_and_statistical_mechanics_martingales, Dubedat:Commutation_relations_for_SLE,Kozdron-Lawler:Configurational_measure_on_mutually_avoiding_SLEs, Lawler:Partition_functions_loop_measure_and_versions_of_SLE, 
Karrila:Multiple_SLE_local_to_global,
Peltola-Wu:Global_and_local_multiple_SLEs_and_connection_probabilities_for_level_lines_of_GFF, 
Karrila:UST_branches_martingales_and_multiple_SLE2, 
BPW:On_the_uniqueness_of_global_multiple_SLEs, 
Healey-Lawler:N_sided_radial_SLE,
Sun-Yu:SLE_partition_functions_via_conformal_welding_of_random_surfaces,
AMY:Multiple_SLE_from_CLE,HPW:Multiradial_SLE_with_spiral, AHSY:Conformal_welding_of_quantum_disks_and_multiple_SLE_the_non-simple_case,
Karrila:Computation_of_pairing_probabilities_in_multiple-curve_models}.

In the present work, we are interested in the GFF with specific Dirichlet boundary data (as in Equation~\eqref{eq:harmonic_function} given below,
and exemplified in Figure~\ref{fig:GFF_bdata_conn}) and its connections to conformal field theory (CFT). 
The point of junction of our first results are certain conformal block functions of a CFT with central charge $c = 1$
(which are degenerate at each insertion, and coincide with those introduced in~\cite{LPR:Fused_Specht_polynomials_and_c_equals_1_degenerate_conformal_blocks},
satisfying the Belavin-Polyakov-Zamolodchikov (BPZ) PDEs) ---  see Theorem~\ref{thm:CFT properties}.
Specifically, our results express the following observables (for suitably chosen GFF boundary data) in terms of these conformal blocks:
\begin{itemize}[leftmargin=*]
\item the probabilities of the different connectivities of the GFF level lines (see Theorem~\ref{thm:corss_proba_H});
\item the partition functions of the Dirichlet GFF (see Equation~\eqref{eq:GFF part fcn in terms of conf block fcn}); and
\item the associated $\SLE_4$ partition functions of the GFF level lines 
(see Sections~\ref{subsec:local_description_unconditional_case}--\ref{subsec:local_description_conditional_case}). 
\end{itemize}
We will also establish the following results: 
\begin{itemize}[leftmargin=*]
\item the analogous connection probabilities in the metric graph GFF introduced in~\cite{Lupu:From_loop_clusters_and_random_interlacements_to_the_free_field} converge to the above GFF  
connection probabilities (Theorem~\ref{thm:corss_proba_MGFF}); and
\item the GFF level lines give rise to a global multiple $\SLE_4$ model, uniquely characterized by invariance of distribution under $\SLE_4$ resampling (Theorems~\ref{thm:conditional_level_lines_SLE4_intro}~\&~\ref{thm:global_SLE}).
\end{itemize}

In the rest of the introduction, we explain the setup and results in some more detail, but still in part informally (leaving the more precise statements to the bulk of this article).

\subsection{Connection probabilities, partition functions, and conformal blocks}

\begin{figure}
\includegraphics[width=.47\textwidth]{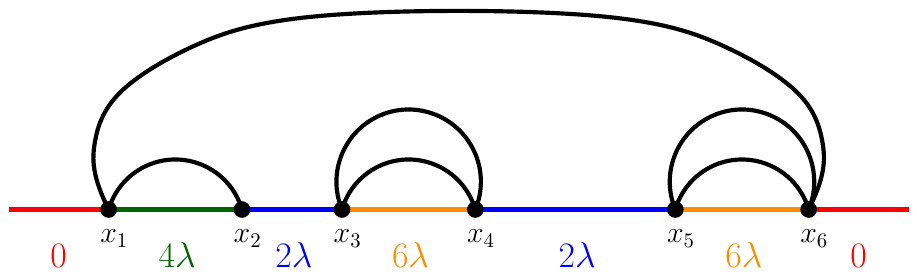}
\quad
\includegraphics[width=.47\textwidth]{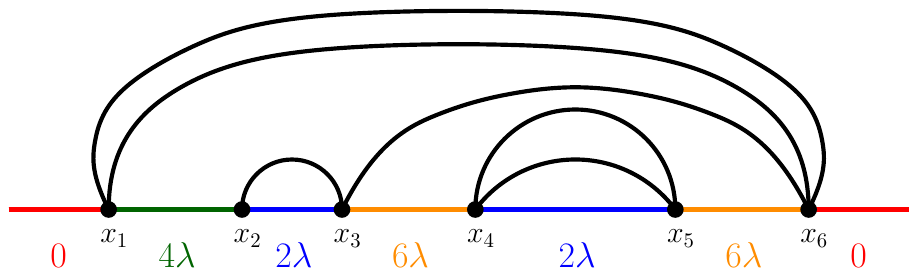}
\caption{
\label{fig:GFF_bdata_conn}
Two possible connectivities $\conn$ formed by the GFF level lines that can appear with the boundary conditions illustrated along the real line.
%A schematic illustration for Theorem~\ref{thm:corss_proba_H}: two possible connectivities $\conn = \{ \{1,2\}, \{1,6\}, \{3,4\},\{3,4\}, \{5,6\} \{5,6\} \}$ (left) and $\conn = \{ \{1,6\}, \{1,6\}, \{2,3\}, \{3,6\}, \{4,5\},\{4,5\}\}$ (right) for the level lines $\OML_\beta$ when the boundary condition is given by $u_\beta$, corresponding to the Dyck path $\beta = (0,2,1,3,1,3,0) \in \GLP_\multii$ of valence $\multii = (2,1,2,2,2,3)$.
}
\end{figure}

We consider the upper half-plane $\bH=\{z\in\bC\;|\;\im(z)>0\}$, and fix boundary points 
\begin{align*}
-\infty = x_0 < x_1 < \cdots < x_p < x_{p+1} = +\infty .
 \end{align*}
Throughout, fix $\lambda \coloneqq \sqrt{\pi/8}$.
Let $\Phi$ be a zero-boundary GFF on $\bH$ and let $u = u_\beta$ be the bounded harmonic function in $\bH$ with piecewise constant boundary values
\begin{align}\label{eq:harmonic_function}
u(x) = u_\beta(x) \coloneqq 
\begin{cases}
0 , & x \in (-\infty,x_1) \cup (x_p,+\infty), \\
2 \beta_j \lambda ,  & x\in (x_j,x_{j+1}), \ j \in \{1,\dots,p-1\},
\end{cases}
\end{align}
where the jumps of the boundary condition are encoded by a generalized Dyck path $\beta$ of $p \geq 2$ steps, i.e., a vector $ \beta = (\beta_0, \beta_1, \ldots, \beta_p) \in \bZnn^{p+1}$ with $\beta_0 = \beta_p =0$;
the step directions 
\begin{align*}
\dir \beta_{j} \coloneqq \beta_{j} - \beta_{j-1} ,
\qquad j \in \{1,\dots,p\} ,
\end{align*}
encode the jumps in the boundary data --- see Figure~\ref{fig:GFF_bdata}. 
We collect the step sizes of $\beta$ into a multiindex 
$\multii = (|\beta_1-\beta_0|,|\beta_2-\beta_1|, \ldots, |\beta_p - \beta_{p-1}|) \in \bZpos^p$. 
The set of all generalized Dyck paths $\beta$ with these step sizes will be denoted\footnote{The notation $\GLP_\multii$ comes from a combinatorial bijection between such boundary conditions with so-called 
\emph{maximally sloped $\multii$-valenced link patterns} (see Section~\ref{subsec:comb_notation} and especially Definition~\ref{def:Dyck_max_sloped} for details).
For instance, the statement of Theorem~\ref{thm:corss_proba_H} hence treats the boundary data $u_\beta$ as a (valenced) link pattern $\beta$.}
 as $\GLP_\multii$. 

We will show in Proposition~\ref{prop:partition_functions_agree} that 
the partition function $\CobloF^{\GFF}_\beta = \CobloF^{\GFF}_{u_\beta}$ of the GFF with boundary data $u_\beta$ 
yields a particular $\SLE_4$ partition function:
\begin{align}
\label{eq:GFF part fcn in terms of conf block fcn}
\CobloF^{\GFF}_\beta (x_1,\ldots,x_p) =  \frac{\CobloF_\beta (x_1,\ldots,x_p)}{(\det_\zeta (-\Delta))^{1/2}} ,
\end{align}
namely a \quote{conformal block function} introduced in~\cite{LPR:Fused_Specht_polynomials_and_c_equals_1_degenerate_conformal_blocks},
\begin{align} \label{eq:CobloF_ms}
\CobloF_\beta (x_1,\ldots,x_p) \coloneqq \prod_{1 \leq i < j \leq p} (x_j - x_i)^{\frac{1}{2} \dir \beta_{j} \,\dir \beta_i} , \qquad \beta \in \GLP_\multii .
\end{align}

\begin{figure}
\includegraphics[width=.75\textwidth]{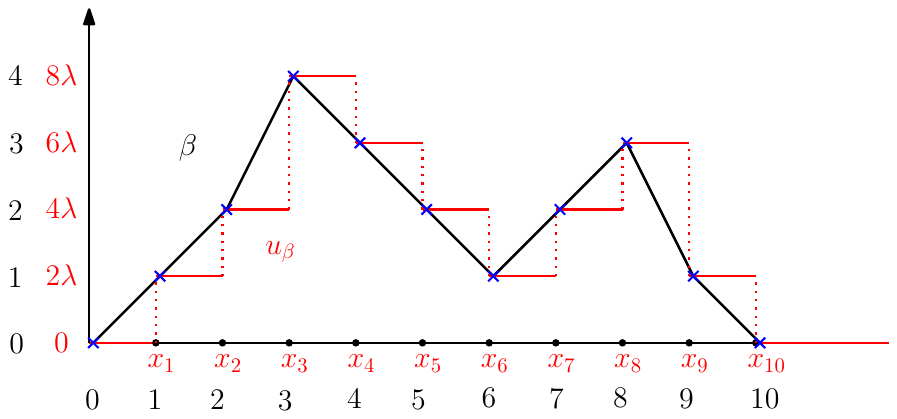}
\caption{
\label{fig:GFF_bdata}
Illustration of the induced boundary data~\eqref{eq:harmonic_function} for the GFF from the generalized Dyck path $\beta = (\beta_0, \beta_1, \ldots, \beta_{10}) = (0,1,2,4,3,2,1,2,3,1,0)$ over the multiindex $\multii = (s_1,s_2, \ldots,s_{10}) = (1,1,2,1,1,1,1,1,2,1)$. 
}
\end{figure}

Throughout, we denote by $\PR_\beta$ the law of a GFF with boundary condition $u_\beta$.
More generally, we will let $\Psi$ be a GFF with some boundary condition, and define 
\begin{align} \label{eq:OML}
\OML = \OML(\Psi) \coloneqq \big\{ \eta \;|\; \eta \textnormal{ is a level line of } \Psi \textnormal{ of height } (2k+1)\lambda \textnormal{ for some } k \in \bZ \big\} ,
\end{align} 
comprising the collections of level lines\footnote{These curves are such that any subset of curves and initial segments is a local set (that is, the law of the GFF outside said collection is that of a zero-boundary GFF plus a harmonic function), and the corresponding boundary condition for said GFF is $2k\lambda$ and $(2k+2)\lambda$, for some $k \in \bZ$, on the two respective sides of a level line of height $(2k+1)\lambda$. Furthermore, there is only one such local set coupling of the GFF with $N$ simple disjoint (except at the boundary) random curves, and in this coupling the curves are determined by the GFF.}  
of $\Psi$ whose heights are odd integer multiples of $\lambda$. 
In particular, under the measure $\PR_\beta$, the fields $\Psi$  and $\Phi + u_\beta$ have the same law, 
and $\OML = \OML_\beta$ has the law of the collection of level lines of $\Phi + u_\beta$ of heights $(2k+1)\lambda$ for $k \in \{0,1,\dots,\Lambda-1\}$, where $\Lambda = \Lambda_\beta \coloneqq \tfrac{1}{2\lambda} \max u_\beta$, and comprises curves $\bs\eta = (\eta_1,\dots,\eta_N)$ with $N$ given by Equation~\eqref{eq: definition of N}. 
These curves form a random topological configuration that can be combinatorially encoded into a random valenced link pattern $\conn$ in $\LP_\multii$ (see Figure~\ref{fig:GFF_bdata_conn} and Definition~\ref{def:link_pattern}).
As our main result, we compute these probabilities explicitly:

\begin{citedtheorem} \label{thm:corss_proba_H}
Fix valences $\multii = (s_1,\ldots,s_p) \in \bZpos^p$ and boundary data $\beta \in \GLP_\multii$. 
We have 
\begin{align} \label{eq:corss_proba_H}
\PR_{\beta} [\conn = \alpha] 
= \; & M_{\beta, \alpha}
\; \frac{\PartF_{\alpha}(x_1, \ldots, x_p)}{\CobloF_{\beta}(x_1, \ldots, x_p)} , \qquad \alpha \in \LP_\multii ,
\end{align} 
where $\PartF_{\alpha}$ are the pure partition functions defined below, and
$M_{\beta, \alpha} = \one\{ \beta \KWleq \alpha \} \in \{0,1\}$ are entries of 
the incidence matrix of a binary relation $\smash{\KWleq}$ obtained from~\cite[Definition~2.2]{KKP:Boundary_correlations_in_planar_LERW_and_UST}, 
for the link patterns obtained from $\beta$ and $\alpha$ via the \quote{unfusing} map $\imath$ defined in Equation~\eqref{eq:iotamap}.
\end{citedtheorem}

We prove Theorem~\ref{thm:corss_proba_H} in Section~\ref{subsec:convergence_connection_probabilities}.
In the present work, regarding the combinatorial matrix $M$, 
we only need to know that the interior link pattern $\alpha$ is possible for the GFF with boundary data $\beta$ if and only if $\smash{\beta \KWleq \alpha}$, that is, if and only if $M_{\beta, \alpha} =1$.
See Lemma~\ref{lem:KWleq importance}.

In Theorem~\ref{thm:corss_proba_MGFF} in Section~\ref{subsec:mGFF_Connection_probabilities}, we will prove that similar probabilities for the metric graph GFF
converge to our formula~\eqref{eq:corss_proba_H}. This is essentially a consequence of the fact that first passage sets 
for the metric graph GFF converge to those of the GFF in the Hausdorff metric~\cite{ALS:First_passage_sets_of_2D_GFF}.
The setup and details for the discrete case are given in Section~\ref{sec:mGFF} (see also \cite{Liu-Wu:Scaling_limits_of_crossing_probabilities_in_metric_graph_GFF}, which motivated us to include the discrete case).

Theorem~\ref{thm:corss_proba_H} could have been equivalently expressed in terms of families of first passage sets of the GFF. 
Intuitively, the \emph{first passage set} $\bA_{a}^u(\Phi) = \bA_{a}(\Phi+u)$ of $\Phi$ is a random fractal set 
comprising points in $\overline{\bH}$ that are reachable from the boundary by following points where $\Phi + u$ takes values at least $a \in \bR$. 
Consider the collection $S$ of frontiers (see Definition~\ref{def:frontier}) of the first passage sets $(\bA_{2k\lambda}(\Psi))_{k \in \bZ}$ of a GFF $\Psi$.  
Under the law $\PR_\beta$, the set $S$ has the law of the collection of frontiers of $\smash{\bA_0^u,\dots,\bA_{2(\maxu-1)\lambda}^u}$
with $\maxu = \maxu_\beta$ and $u = u_\beta$ as above.  
Again, this is a collection of curves $\eta_1,\dots,\eta_N$, with $N$ given by Equation~\eqref{eq: definition of N}. 
and one can show that $S$ and $\OML$ are almost surely the same under $\PR_\beta$, whenever $\beta \in \GLP_\multii$ (Lemma~\ref{lem:frontiers_are_level_lines}).

Theorem~\ref{thm:corss_proba_H} confirms the expectation from~\cite[Remark~3.19]{LPR:Fused_Specht_polynomials_and_c_equals_1_degenerate_conformal_blocks},
where the general conformal block functions were introduced. The detailed combinatorial formulas rely on 
some intuition acquired from an ongoing companion work~\cite{Karrila-Peltola:Boundary_double-dimer_patterns_and_CFT} concerning the double-dimer model. 
Special cases of Theorem~\ref{thm:corss_proba_H} were established in~\cite{Peltola-Wu:Global_and_local_multiple_SLEs_and_connection_probabilities_for_level_lines_of_GFF, Liu-Wu:Scaling_limits_of_crossing_probabilities_in_metric_graph_GFF}, respectively for level lines of the GFF with alternating boundary data ($\multii = (1,1,\ldots,1)$) and for a special first-order fusion case with $\multii = (2,2,\ldots,2)$.

\medskip

Let us also remark that the formula~\eqref{eq:corss_proba_H} in Theorem~\ref{thm:corss_proba_H} actually involves 
all conformal block functions $\{\CobloF_\beta \colon \beta \in \LP_\multii \}$ 
and not only the ones indexed by the so-called \quote{maximally sloped} patterns $\beta \in \GLP_\multii$. 
Interestingly, while the latter naturally index the possible GFF boundary data~\eqref{eq:harmonic_function}, the interior connectivity pattern can be more general.
The probability amplitudes for the interior connectivity patterns $\alpha \in \LP_\multii$ are obtained from the \quote{pure\footnote{The term \quote{pure partition function} is motivated by the key property of these functions given in Lemma~\ref{lem:PartF_fusion} and proven in~\cite{Karrila-Peltola:Boundary_double-dimer_patterns_and_CFT}, and by their role as $\SLE_4$ partition functions 
(Proposition~\ref{prop:alpha-conditional marginal law of one level line}).} partition functions}
$\PartF_\alpha$, which are linear combinations of general conformal blocks.
Their definition involves some combinatorial quantities of independent interest, for which we refer to the literature and to the bulk of this article
(and which are not necessary for understanding the results of the present work).

\begin{definition*}%[{Pure partition functions}] 
We define the \emph{pure partition functions} as the finite linear combinations
\begin{align}\label{eq:PartF}
\PartF_\alpha(x_1,\ldots,x_p) \coloneqq \sum_{\substack{\beta \in \LP_\multii \\ \alpha \DPleq \beta }} (-1)^{|\alpha/\beta|} \# \CItilingsof (\alpha / \beta ) \, (\# \LP_{\geq_\multii \beta}) \, \CobloF_\beta (x_1,\ldots,x_p) , \qquad \alpha \in \LP_\multii ,
\end{align}
where \quote{$\DPleq$} is a natural nesting partial order on $\LP_\multii$ discussed in Section~\ref{subsec:comb_notation}, 
$|\alpha/\beta|$ is the number of atomic square tiles in the skew Young diagram~$\alpha/\beta$ 
(see Figure~\ref{fig:skew_Young_diagram}) naturally extended to $\LP_\multii$ via the \quote{unfusing} map $\imath \colon \LP_\multii \to \LP_N$ defined in Equation~\eqref{eq:iotamap},
$\# \CItilingsof (\alpha / \beta)$ the number of cover-inclusive Dyck tilings 
of~$\alpha/\beta$ according to~\cite[Definitions~2.1~\&~2.8]{KKP:Boundary_correlations_in_planar_LERW_and_UST}
(whose precise definition we will not need in the present work),
the set $\LP_{\geq_\multii \beta}$ is defined in~\eqref{eq: partial order larger set} using a refined partial order $\geq_\multii$ defined in Equation~\eqref{eq: partial order}, 
and $\{\CobloF_\beta \colon \beta \in \LP_\multii \}$ are the general conformal blocks from Definition~\ref{def:CobloF}
(originally introduced in~\cite{LPR:Fused_Specht_polynomials_and_c_equals_1_degenerate_conformal_blocks}). 
\end{definition*}

Because the functions $\{\PartF_\alpha \colon \alpha \in \LP_\multii \}$ 
are linear combinations of the conformal block functions $\{\CobloF_\beta \colon \beta \in \LP_\multii \}$,
they can be interpreted as boundary correlation functions 
\begin{align}\label{eq:fusion_corrf}
`` \langle \phi_{1,s_1+1}(x_1) \cdots \phi_{1,s_p+1}(x_p) \rangle_\alpha " 
\end{align}
in a CFT with central charge $c=1$ of primary fields 
$(\phi_{1,s_1+1},\ldots, \phi_{1,s_p+1})$ with conformal weights 
$h_{1,s_j+1} = s_j^2/4$ for all $j$, 
which in particular belong to the first row of the Kac table of degenerate weights~\cite{Kac:Highest_weight_representations_of_infinite_dimensional_Lie_algebras}.
This is a consequence of the results in~\cite{LPR:Fused_Specht_polynomials_and_c_equals_1_degenerate_conformal_blocks}. 
Moreover, using their probabilistic/combinatorial interpretation established in our main Theorem~\ref{thm:corss_proba_H}, one can show that 
the functions $\{\PartF_\alpha \colon \alpha \in \LP_\multii \}$ are positive and linearly independent (which is far from obvious from their definition) 
--- so they are honest multiple $\SLE_4$ pure partition functions. 
Although the below statement mostly follows from the literature, 
we gather the key properties of the functions $\PartF_\alpha$
in the below Theorem~\ref{thm:CFT properties} and in Section~\ref{subsec:properties_pure_partition_functions}:

\begin{citedtheorem} \label{thm:CFT properties} 
The pure partition functions $\PartF_\alpha$ satisfy the following properties. 
\begin{itemize}
\item[\textnormal{(PDE)}] 
\textnormal{\bf System of $c=1$ BPZ PDEs:} 
With valences $\multii = (s_1,\ldots,s_p) \in \bZpos^p$, we have
\begin{align} \label{eq: BPZ PDE at kappa equals 4} 
\sD_{s_j+1}\super{x_j} 
\; \PartF_\alpha(x_1,\ldots ,x_p) = 0 , \qquad \textnormal{for all } j \in \{ 1,\ldots,p \},
\end{align}
where $\sD_{s_j+1}\super{x_j}$ are the BPZ differential operators
\begin{align} \label{eq: BPZ operator at kappa equals 4} 
\sD_{s_j+1}\super{x_j} 
:= \sum_{k=1}^{s_j+1} \sum_{\substack{m_1,\ldots ,m_k \geq 1 \\ m_1+\cdots +m_k = {s_j+1}}} \frac{ (-1)^{k-s_j-1} (s_j!)^2}{\prod_{l=1}^{k-1} (\sum_{i=1}^l m_i)(\sum_{i=l+1}^k m_i)} \times \sL_{-m_1}\super{x_j} \; \cdots \; \sL_{-m_k}\super{x_j} ,
\end{align} 
and where $\sL_{-m}\super{x_j}$ are the first order differential operators
\begin{align*} 
\sL_{-m}\super{x_j} 
:= -\sum_{\substack{1\leq i \leq p\\i\neq j}} \Big( (x_i-x_j)^{1-m} \pder{x_i} + \frac{(1-m) (s_i + 1)s_i}{2} (x_i-x_j)^{-m}\Big) .
\end{align*}

\medskip

\item[\textnormal{(COV)}] 
\textnormal{\bf Covariance:} 
For all M\"obius maps $\Mob \colon \bH \to \bH$ such that $\Mob(x_1) < \cdots < \Mob(x_p)$, we have
\begin{align}
\label{eq: COV general at kappa equals 2} 
& \PartF_\alpha(x_1, \ldots, x_p) 
= \prod_{j=1}^p |\Mob'(x_j)|^{s_j^2/4} \times \PartF_\alpha(\Mob(x_1), \ldots, \Mob(x_p)) .
\end{align}

\medskip

\item[\textnormal{(POS)}] 
\textnormal{\bf Positivity:} 
For each $\alpha \in \LP_\multii$, we have $\PartF_\alpha(x_1,\ldots ,x_p)>0$ for all $x_1 < \cdots < x_p$.

\medskip

\item[\textnormal{(LIN)}] 
\textnormal{\bf Linear independence:} 
The functions $\{\PartF_\alpha \; | \; \alpha \in \LP_\multii\}$ are linearly independent. 
\end{itemize}
\end{citedtheorem}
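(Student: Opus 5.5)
Since the $\PartF_\alpha$ are finite linear combinations~\eqref{eq:PartF} of the conformal block functions $\CobloF_\beta$, $\beta \in \LP_\multii$, the two linear properties (PDE) and (COV) will be inherited termwise. The other two properties, (POS) and (LIN), will come from the probabilistic interpretation in Theorem~\ref{thm:corss_proba_H}.

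\textbf{(PDE) and (COV).} For (PDE) and (COV) I would invoke~\cite{LPR:Fused_Specht_polynomials_and_c_equals_1_degenerate_conformal_blocks}. There each $\CobloF_\beta$ of Definition~\ref{def:CobloF} is obtained from fused Specht polynomials. It is shown to be annihilated by the BPZ operators $\sD_{s_j+1}\super{x_j}$ at $c=1$, with $h_{1,s_j+1} = s_j^2/4$. It is also shown to transform covariantly with these weights under Möbius maps preserving the order of the points. For the maximally sloped case~\eqref{eq:CobloF_ms} covariance can also be checked by hand. Under translations and dilations it is immediate, since the total degree is $\tfrac14\sum_{i\neq j}\dir\beta_i\dir\beta_j = -\tfrac14\sum_j s_j^2$, using $\sum_j \dir\beta_j = 0$. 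Under inversion one uses $\Mob(x_j)-\Mob(x_i) = (x_j-x_i)\,\Mob'(x_i)^{1/2}\Mob'(x_j)^{1/2}$. Linearity then transfers both properties to every $\PartF_\alpha$.

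\textbf{(POS).} Fix $\alpha \in \LP_\multii$. I would choose $\beta \in \GLP_\multii$ with $M_{\beta,\alpha}=1$, that is $\beta \KWleq \alpha$. A natural candidate is the maximally sloped pattern obtained by pushing $\alpha$ to its extreme nesting; such a choice should be provided by the combinatorics around Lemma~\ref{lem:KWleq importance}. Theorem~\ref{thm:corss_proba_H} then gives
\[
\PartF_\alpha = \CobloF_\beta\, \PR_\beta[\conn=\alpha] .
\]
Here $\CobloF_\beta>0$ by~\eqref{eq:CobloF_ms}, so positivity reduces to showing $\PR_\beta[\conn=\alpha]>0$ for every configuration $x_1<\cdots<x_p$. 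Lemma~\ref{lem:KWleq importance} says that $\alpha$ is possible for the GFF with boundary data $\beta$. I would upgrade this to strict positivity at every configuration in either of two ways. One is to note that the probability is real-analytic in $(x_1,\ldots,x_p)$ (it solves the PDE system by (PDE)) and is not identically zero. The other, more robustly, is a direct GFF argument: with positive probability the level lines follow prescribed tubes realizing $\alpha$, using absolute continuity of level lines with respect to $\SLE_4(\rho)$ curves in tubes.

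\textbf{(LIN) and the main obstacle.} Suppose $\sum_\alpha c_\alpha \PartF_\alpha = 0$. I would use asymptotics: the pure partition functions should have distinct leading behaviour as pairs of adjacent points collide. Then the linear relation can be peeled off by induction on $\#\LP_\multii$, removing one link at a time, with the fusion property of Lemma~\ref{lem:PartF_fusion} supplying the reduced functions. An alternative is to show that the matrix expressing $\PartF_\alpha$ in terms of the $\CobloF_\beta$ is triangular with respect to $\DPleq$ with nonzero diagonal (the $\beta=\alpha$ term has coefficient $\#\LP_{\geq_\multii\alpha}\neq 0$), and to cite linear independence of $\{\CobloF_\beta\}$ from~\cite{LPR:Fused_Specht_polynomials_and_c_equals_1_degenerate_conformal_blocks}. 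I expect this triangular route to be simplest, and I would try it first. The main obstacle is (POS): one must justify strict positivity of the crossing probability everywhere, not merely that it is nonzero somewhere.
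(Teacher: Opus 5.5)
Your treatment of (PDE), (COV) and (LIN) matches the paper. For (PDE) and (COV) it uses linearity plus \cite[Theorem~3.24]{LPR:Fused_Specht_polynomials_and_c_equals_1_degenerate_conformal_blocks}. For (LIN) it uses the triangularity of \eqref{eq:PartF} with respect to $\DPleq$ (diagonal entry $\#\LP_{\geq_\multii\alpha}\geq 1$), combined with linear independence of the $\CobloF_\beta$ from \cite[Proposition~3.17]{LPR:Fused_Specht_polynomials_and_c_equals_1_degenerate_conformal_blocks}. Your reduction of (POS) to $\PR_\beta[\conn=\alpha]>0$ via Theorem~\ref{thm:corss_proba_H} is also the paper's, but two steps you propose around it fail.

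First, a $\beta\in\GLP_\multii$ with $\beta\KWleq\alpha$ need not exist. In general $\PLP_\multii\neq\LP_\multii$: for instance $\GLP_{(2,2,2)}=\emptyset$, while $\LP_{(2,2,2)}$ contains the pattern of Figure~\ref{fig:impossible_pattern}, so no ``extreme nesting'' choice is available. The probabilistic route therefore yields (POS) only for $\alpha\in\PLP_\multii$. That is also all the paper's own argument reaches, since it only shows $\PR_\beta[\conn=\alpha]>0$ whenever $\beta\KWleq\alpha$. For $\alpha\in\LP_\multii\setminus\PLP_\multii$ a separate argument would be needed: fusing the positive functions $\PartF_{\imath(\alpha)}$ via Lemma~\ref{lem:PartF_fusion} only gives $\PartF_\alpha\geq 0$.

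Second, the analyticity upgrade is invalid. A nonnegative real-analytic function that is not identically zero can still vanish on a lower-dimensional set of configurations. So this route gives positivity only off a nowhere dense set, not at every $x_1<\cdots<x_p$.

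The paper obtains positivity at every configuration through a sequential reduction rather than simultaneous tubes. By Remark~\ref{rem:KWleq importance}, every $\alpha$ with $\beta\KWleq\alpha$ arises by repeatedly drawing a single level line of height $(2m+1)\lambda$ from an up-jump point to an admissible down-jump point, then splitting the domain. Since level lines are local sets, each resulting piece again carries a GFF with piecewise constant boundary data. It therefore suffices that one level line terminates at any prescribed admissible endpoint with positive probability, for every configuration of marked points. This is \cite[Lemma~2.5]{Miller-Wu:Intersections_of_SLE_paths:_the_double_and_cut_point_dimension_of_SLE}, whose proof carries over from imaginary geometry to level lines. Your tube argument applies the same idea to all curves at once. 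The sequential conditioning spares you from controlling several curves that leave one fused boundary point with force points immediately adjacent to the starting point.
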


\subsection{Multiple chordal Schramm-Loewner evolution, $\SLE_4$}

Level lines of the GFF can be described by variants of Schramm-Loewner evolution curves, $\SLE_\kappa$ with parameter $\kappa=4$~\cite{Schramm-Sheffield:Contour_lines_of_2D_discrete_GFF, Schramm-Sheffield:A_contour_line_of_the_continuum_GFF, Wang-Wu:Level_lines_of_Gaussian_free_field_I} --- 
in particular, marginals of the curves in $\OML$ and hence frontiers of the first passage set are described precisely by such $\SLE_4$ curves. 
Let 
\begin{align*}
(\eta_1,\dots,\eta_N) = \OML(\Phi+u_\beta) = \OML_\beta 
\end{align*}
for a GFF with boundary data $u_\beta$ as in Equation~\eqref{eq:harmonic_function}, where $N$ is given by Equation~\eqref{eq: definition of N}. 
Due to the domain Markov property of the GFF, the curves $\eta_1,\dots,\eta_N$ satisfy the following resampling property 
(which we will verify in Theorem~\ref{thm:resampling_property} in Section~\ref{subsec:resampling_property}):

\begin{citedtheorem}[{Resampling property; see Theorem~\ref{thm:resampling_property}}] 
\label{thm:conditional_level_lines_SLE4_intro}
For each $j \in \{1,\dots,N\}$, let $\domain_j$ be the connected component of $\bH \setminus \bigcup_{k \neq j} \eta_k$ containing $\eta_j$. The conditional law of the curve $\eta_j$, given the other curves $\eta_1,\dots,\eta_{j-1},\eta_{j+1},\dots,\eta_N$, is that of a chordal $\SLE_4$ curve in $\domain_j$.
\end{citedtheorem}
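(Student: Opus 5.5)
The approach is to condition on the other curves and use the local set property together with the domain Markov property of the GFF. Fix $j$ and write $\bs\eta_{-j} = (\eta_k)_{k\neq j}$. As stated in the footnote after~\eqref{eq:OML}, any subcollection of the level lines in $\OML_\beta$ is a local set of $\Psi = \Phi + u_\beta$. So the union $A_{-j} \coloneqq \bigcup_{k\neq j}\eta_k$ is a local set, and, conditionally on $A_{-j}$, the restriction of $\Psi$ to each component of $\bH\setminus A_{-j}$ is a zero-boundary GFF plus a harmonic function. That harmonic function has piecewise constant boundary values: on either side of a level line of height $(2k+1)\lambda$ it equals $2k\lambda$ or $(2k+2)\lambda$, and on the real boundary it equals $u_\beta$.

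First I will determine the boundary data of $\Psi$ restricted to $\domain_j$. The curve $\eta_j$ runs between two boundary points $a,b$ of $\domain_j$ and has some height $(2m+1)\lambda$. I will show that the boundary values on the counterclockwise arc $(a,b)$ of $\partial\domain_j$ are identically $2m\lambda$, and on the complementary arc identically $(2m+2)\lambda$ (or the other way round, depending on orientation). The argument is topological. Any boundary arc of $\domain_j$ on which the data took another value $2k\lambda$ would have to be separated from $\eta_j$'s side by a further level line of an intermediate height. That line would lie in $\domain_j$ and be disjoint from $\eta_j$, which contradicts the definition of $\domain_j$ as the component containing $\eta_j$. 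Concretely, I will use that the values $2k\lambda$ on the segments of $\partial\domain_j$ change only at endpoints of curves, and that two consecutive boundary arcs of $\domain_j$ met by no curve other than $\eta_j$ must carry equal heights. I will handle this by listing the finitely many arcs of $\partial\domain_j$ (pieces of $\bR$ and sides of the $\eta_k$) and tracking the height jumps.

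Second, I will invoke the characterization of level lines: for a GFF in a simply connected domain with boundary data $2m\lambda$ on one arc and $(2m+2)\lambda$ on the other, the level line of height $(2m+1)\lambda$ from $a$ to $b$ is a chordal $\SLE_4$ \cite{Schramm-Sheffield:A_contour_line_of_the_continuum_GFF, Wang-Wu:Level_lines_of_Gaussian_free_field_I}; note that $(2m+1)\lambda \pm \lambda$ gives exactly the $\pm\lambda$ jump. I will then identify $\eta_j$ with this level line of the conditional field. This needs the uniqueness of the local-set coupling stated in the footnote: inside $\domain_j$ the curve $\eta_j$ is a level line of the conditional GFF, it is determined by that field, and the level line of the conditional field is determined the same way, so the two coincide almost surely.

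I expect the main obstacle to be this identification together with the boundary-data computation: making sure the conditional field on $\domain_j$ has exactly two-valued data, with no extra boundary jumps coming from the real line or from curves touching at the boundary. I would handle it by conditioning on $\conn$ as well, since $\domain_j$ and the marked points depend on the topology, and by using the interpretation of the curves as frontiers of the first passage sets (Lemma~\ref{lem:frontiers_are_level_lines}) to control the boundary values. Conformal invariance then maps $\domain_j$ to $\bH$ to conclude.
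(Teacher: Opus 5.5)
Your proposal is correct and has the same skeleton as the paper's proof. By Remark~\ref{rmk:conditional_law_level_lines}, given $A=\bigcup_{k\neq j}\eta_k$, the curve $\eta_j$ is the level line from $\pp_j$ of $\Phi^A+h_A+u_\beta$. Everything therefore reduces to showing that $h_A+u_\beta$ takes one constant even value on each of the two arcs of $\partial D_j$ between $\pp_j$ and $\np_j$, with the two values differing by $2\lambda$.

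The difference is in how this is shown. The paper uses a staged exploration. It first reveals all level lines whose heights are $2\lambda$ below and $2\lambda$ above that of $\eta_j$. The separation property of each height class (data $\le 2(k-1)\lambda$ on the left regions, $\ge 2k\lambda$ on the right regions) then confines $\eta_j$ to a domain $\wt{D}_j$ with two adjacent data values. Revealing the remaining same-height lines in $\wt{D}_j$ carves out $D_j$, and curves of other heights never enter $\wt{D}_j$.

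You instead condition on everything at once and track values along $\partial D_j$. That works, but all the content sits in your claim that consecutive arcs of $\partial D_j$ meeting at a marked point $x_i$, in a wedge not containing $\eta_j$, carry equal values. The heuristic that a separating line of intermediate height would have to lie in $D_j$ is not the way to prove it: at a fused point it is not even clear which wedge such a line would enter. What you need to state explicitly is the following.
\begin{enumerate}
\item The $s_i$ curves at $x_i$ have the distinct consecutive heights $(2c+1)\lambda$ with $\min(\beta_{i-1},\beta_i)\le c<\max(\beta_{i-1},\beta_i)$; this is where $\beta\in\GLP_\multii$ enters.
\item These curves are ordered angularly at $x_i$ by height, lower heights to the left. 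This follows from the monotonicity of level lines in~\cite{Wang-Wu:Level_lines_of_Gaussian_free_field_I}, applied with equal starting points.
\item The curves meet $\bR$ only where they terminate. All cumulative force-point weights are even integers, so Lemma~\ref{lem:force_points} excludes bouncing off the boundary.
\end{enumerate}
With these, the two sides bounding each wedge at $x_i$ carry the same value. Hence \emph{every} component of $\bH\setminus\bigcup_k\eta_k$ has constant boundary data, and gluing the two components $D_j^\pm$ of $D_j\setminus\eta_j$ along $\eta_j$ gives the required two-valued data.

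Your route buys a uniform and more transparent statement about all complementary components at once. The price is a local analysis at fused points, which the paper's exploration avoids by using only the global separation property of each height class.

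Finally, conditioning additionally on $\conn$ is unnecessary. The point $\np_j$ and the domain $D_j$ are already functions of the other curves, and that measurability is also the only reason such extra conditioning would be harmless.
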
 

Moreover, the resampling property uniquely characterizes the law of the curve collection, once their topological configuration has been chosen. 
One can produce curve collections with variant configurations by convex combinations of these \quote{pure} measures. 
Unfortunately, not all topological configurations are possible to achieve by simply setting the boundary data as in~\eqref{eq:harmonic_function} in the GFF model; see the discussion in Section~\ref{subsec:discussion}.
We denote
\begin{align*}
\PLP_\multii \coloneqq \{ \alpha \in \LP_\multii \;|\; \textnormal{there exists } \beta \in \GLP_\multii \textnormal{ such that } M_{\beta, \alpha} = 1 \} ,
\end{align*}
and for each $\beta \in \GLP_\multii$, we denote $\PLP_\multii(\beta) \coloneqq \{ \alpha \in \LP_\multii \;|\; M_{\beta, \alpha} = 1 \} = \{ \alpha \in \LP_\multii \;|\; \smash{\beta \KWleq \alpha} \}$.

\begin{citedtheorem}[{Multiple chordal $\SLE_4$; see Theorem~\ref{thm:existence_fused_multiple_sle}}] \label{thm:global_SLE}
For each $\alpha \in \PLP_\multii$, there exists a unique law $\aSLE_4$ on curves $\eta_1,\dots,\eta_N \subset \overline{\bH}$ that satisfies the resampling property. 
\end{citedtheorem}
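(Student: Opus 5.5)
Existence and uniqueness are handled separately, and existence comes straight from the GFF. Given $\alpha \in \PLP_\multii$, choose $\beta \in \GLP_\multii$ with $M_{\beta,\alpha} = 1$. By Theorem~\ref{thm:corss_proba_H} and property (POS) of Theorem~\ref{thm:CFT properties}, the event $\{\conn = \alpha\}$ has positive probability under $\PR_\beta$:
\begin{align*}
\PR_\beta[\conn = \alpha] = \PartF_\alpha/\CobloF_\beta > 0 .
\end{align*}
Define $\aSLE_4$ as the law of $\OML_\beta = (\eta_1,\dots,\eta_N)$ under $\PR_\beta[\,\cdot \giv \conn = \alpha]$. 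This law has the resampling property for two reasons:
\begin{itemize}
\item Under $\PR_\beta$, Theorem~\ref{thm:conditional_level_lines_SLE4_intro} says the conditional law of $\eta_j$ given the other curves is chordal $\SLE_4$ in $\domain_j$.
\item The event $\{\conn = \alpha\}$ is measurable with respect to the other curves $(\eta_k)_{k\neq j}$ together with the endpoints of $\eta_j$. These endpoints are determined by which boundary arc of $\domain_j$ the curve must connect, and any simple curve in $\domain_j$ between them produces the same connectivity.
\end{itemize}
Conditioning on such an event therefore does not change the conditional law of $\eta_j$. The point to check is that, given the others, the topology of $\eta_j$ is determined by them, i.e.\ $\eta_j$ is a crossing of $\domain_j$ between two prescribed marked points. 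I expect this to be true because the curves are pairwise disjoint away from the boundary.

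Uniqueness is where the real work lies. I would follow the Markov chain coupling argument of Beffara--Peltola--Wu~\cite{BPW:On_the_uniqueness_of_global_multiple_SLEs}, adapted to the fused setting where several curves share a boundary endpoint $x_i$ with multiplicity $s_i$. Take two laws $\mathbb{Q}_1, \mathbb{Q}_2$ with the resampling property for the pattern $\alpha$, and consider the Markov chain on $N$-tuples of curves that picks a uniform index $j$ and resamples $\eta_j$ as chordal $\SLE_4$ in $\domain_j$. Both $\mathbb{Q}_1$ and $\mathbb{Q}_2$ are stationary for this chain. It then suffices to show that from any two starting configurations, the chains can be coupled to meet with probability bounded below uniformly after a bounded number of steps; this gives uniqueness of the stationary measure.

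The plan for this uniform coupling is to steer both configurations towards a common "small" configuration. Resampling the curves in a suitable order (outermost first, according to the nesting in $\alpha$), each curve has uniformly positive probability of staying inside a fixed thin neighbourhood of a deterministic reference configuration. Once all curves of both chains lie in these disjoint neighbourhoods, one more full sweep can be coupled with positive probability, using absolute continuity of $\SLE_4$ in slightly different domains near the same marked points.

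The main obstacle is this uniform positivity near the fused points. When $s_i \ge 2$, several curves emanate from the same $x_i$, and the domains $\domain_j$ have pinched corners there. One needs uniform estimates showing that chordal $\SLE_4$ in such a domain follows a prescribed corridor with probability bounded below, independently of how the neighbouring curves at $x_i$ look. I would handle this with $\SLE_4$ boundary-arm and one-point estimates, using that $\SLE_4$ does not hit the boundary. An alternative is to first map a small neighbourhood of $x_i$ conformally and use the Loewner description of the fused partition functions $\PartF_\alpha$ (Theorem~\ref{thm:CFT properties}) to compare laws.
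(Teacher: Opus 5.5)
Your existence argument matches the paper's. You condition $\OML_\beta$ on $\{\conn=\alpha\}$, which has positive probability by Theorem~\ref{thm:corss_proba_H} and (POS). You then observe that this event is determined by the curves other than $\eta_j$, so the resampling property of Theorem~\ref{thm:resampling_property} survives the conditioning.

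The uniqueness part has a genuine gap, and it is not the one you flag. In Markov chain arguments of Beffara--Peltola--Wu type, the final coupling step works for a specific reason. Once the other curves sit in thin corridors away from the endpoints of $\eta_j$, the two chains resample $\eta_j$ in domains $\domain_j^{(1)},\domain_j^{(2)}$ that coincide in neighbourhoods of both endpoints. Only then is $\SLE_4$ in one domain, restricted to curves in $\domain_j^{(1)}\cap\domain_j^{(2)}$, absolutely continuous with respect to $\SLE_4$ in the other.

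At a point $x_i$ with $s_i\ge 2$ this can never be arranged. Every curve from $x_i$ has a neighbouring curve from $x_i$, and near $x_i$ its domain is bounded by that neighbour. In two uncoupled chains these neighbours are different fractal curves, so $\domain_j^{(1)}$ and $\domain_j^{(2)}$ differ at every scale around $x_i$. The two resampling kernels are then in general mutually singular. By scale invariance at the starting point, $\SLE_4$ from $x_i$ in $\domain_j^{(1)}$ a.s.\ leaves $\domain_j^{(1)}\cap\domain_j^{(2)}$ if the domains differ non-negligibly at infinitely many scales.

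Corridors do not help, since they all pinch at $x_i$. Nor can any curve at $x_i$ be coupled first, since each one's kernel depends on a not-yet-coupled neighbour there. So uniform positivity estimates, even if proved, do not close your argument. Your alternative via the Loewner description with $\PartF_\alpha$ is circular. Proposition~\ref{prop:alpha-conditional marginal law of one level line} is a property of the GFF-constructed measure; for an arbitrary measure with the resampling property it is exactly what has to be shown.

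The missing idea is to invoke uniqueness only in the unfused setting. The paper conditions on the initial and terminal segments $\Gamma_\epsilon$ of all curves, up to their exits from $\epsilon$-balls around the marked points. Given $\Gamma_\epsilon$, the remaining pieces mapped by $g_{\Gamma_\epsilon}$ satisfy the resampling property with distinct endpoints. They therefore form the unique $\iaSLE_4$ of~\cite{BPW:On_the_uniqueness_of_global_multiple_SLEs}, which is realised as GFF level lines conditioned on their connectivity.

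The paper then builds a field $\Psi$ from the curves, using independent GFFs plus constants in the complementary components, and lets $\epsilon\to 0$. This gives a coupling showing that any global $\aSLE_4$ is the law of $\OML_\beta$ given $\{\conn=\alpha\}$ (Proposition~\ref{prop:coupling_fused_multiple_sle}). That law does not depend on $\beta$ (Corollary~\ref{cor:independence_of_beta}), so uniqueness follows with no Markov chain at all.
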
 

We prove the existence part in Theorem~\ref{thm:global_SLE} in Section~\ref{subsec:resampling_property} and the uniqueness part in Section~\ref{subsec:uniqueness_proof}.
The existence will be established by directly constructing the multiple chordal $\SLE_4$ from GFF level lines.
To prove the uniqueness, we follow a very different strategy than in the known cases 
which use Markov chain arguments~\cite{Miller-Sheffield:Imaginary_geometry2, MSW:Non-simple_SLE_curves_are_not_determined_by_their_range, BPW:On_the_uniqueness_of_global_multiple_SLEs, Zhan:Existence_and_uniqueness_of_nonsimple_multiple_SLE}.
Instead, we show that any $\aSLE_4$ can be coupled as $\OML_\beta$ for any $\beta$ such that $\alpha \in \PLP_\multii(\beta)$ and deduce that the conditional law of these level lines, given the connectivity pattern, does not depend on $\beta$. From this, the uniqueness follows. Complementing the existence and uniqueness, we give an explicit Loewner equation for an $\aSLE_4$ curve with $\alpha \in \PLP_\multii$, as an $\SLE_4$ type curve with the partition function $\PartF_\alpha$ in Section~\ref{subsec:local_description_conditional_case}.

By conformal invariance, these results readily extend to any (nice) topological polygon: 
in general, these multiple chordal $\SLE_4$ curves (as per Definition~\ref{def:fused_multiple_sle})
have probability measures on spaces of (continuous unparametrized) curves indexed by a simply connected domain $\domain$, marked boundary points $x_1,\dots,x_p \in \partial \domain$, and valences $\multii$ that encode how many curves are attached to each boundary point (see Equation~\eqref{eq: definition of N}). 
For notational ease, we only state the general results in the bulk of this article --- see Section~\ref{sec:uniqueness}.

\subsection{Discussion}
\label{subsec:discussion}

Let us note that, in general, $\PLP_\multii \neq \LP_\multii$. 
Indeed, from the GFF perspective, the issue stems from the following. A level line coupled with $\Phi + u_\beta$, where $u_\beta$ is as in~\eqref{eq:harmonic_function}, 
must start at a point $x$ where %\footnote{For a function $f \colon \bR \to \bR$ and $x \in \bR$, we write $f(x^+) = \lim_{y \nearrow x} f(y)$ and $f(x^-) = \lim_{y \searrow x} f(y)$.} 
$u_\beta(x^+) - u_\beta(x^-) > 0$ and end at a point $y$ where $u_\beta(y^+) - u_\beta(y^-) < 0$. 
This puts a natural limitation on which link patterns can appear in $\PLP_\multii$ for some vector $\multii$ of valences, 
as a point of positive jump and a point of negative jump for $u_\beta$ cannot both be linked to the same other point. See for example Figure~\ref{fig:impossible_pattern}, where the left point is a point of positive jump and the right is a point of negative jump, but in which neither a positive nor negative jump for the middle point can give rise to said link pattern. (In fact, $\PLP_{(2,2,2)}$ is empty.) We expect that it is possible to construct multiple $\SLE_4$ with general link patterns, and it would be an interesting extension of our result.

\begin{figure}[h!]
\includegraphics[width=.3\textwidth]{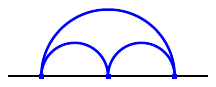}
\caption{
\label{fig:impossible_pattern}
Example of a link pattern which is not in $\PLP_\multii$ for $\multii = (2,2,2)$.
}
\end{figure}

\medskip

\noindent
\textbf{Organization of this article.}
We review preliminaries in Section~\ref{sec:preliminaries}.
Also, in Section~\ref{subsec:Frontiers} we prove that the level lines in $\OML_\beta = \OML(\Phi+u_\beta) $ are the connected components of the frontiers of the first passage sets $\smash{(\bA_{2k\lambda}^{u_\beta})_{k \in \bZ}}$. 
In Section~\ref{sec:GFF_properties}, we prove key results related to the partition functions of the GFF: Theorems~\ref{thm:corss_proba_H}~\&~\ref{thm:CFT properties}, as well as a result 
relating the partition function $\CobloF_{u_\beta}^\GFF$ of \cite{Dubedat:SLE_and_free_field} to the conformal block function $\CobloF_\beta$ (Proposition~\ref{prop:partition_functions_agree}). 
In Section~\ref{sec:uniqueness}, we establish the results related to multiple $\SLE_4$ processes. 
More precisely, in Section~\ref{subsec:resampling_property} we prove Theorem~\ref{thm:conditional_level_lines_SLE4_intro} as well as the existence part of Theorem~\ref{thm:global_SLE}. In Section~\ref{subsec:uniqueness_proof}, 
we show that $\aSLE_4$, with $\alpha \in \PLP_\multii$, can be coupled with a zero-boundary GFF as $\OML_\beta$ for any suitable $\beta$ (Proposition~\ref{prop:coupling_fused_multiple_sle}).
From this, we deduce the uniqueness part of Theorem~\ref{thm:global_SLE}.  
With the global description at hand, we turn to studying local descriptions: 
In Section~\ref{subsec:local_description_unconditional_case}, we describe the marginal law of one curve in $\OML_\beta$ and in Section~\ref{subsec:local_description_conditional_case}, we describe the conditional law of one curve in $\OML_\beta$, given its connectivity $\conn$. 
In the final Section~\ref{sec:mGFF}, using the relationship between level lines and frontiers, and the convergence of the first passage sets of metric graph GFFs to those of the continuum GFF,
we prove Theorem~\ref{thm:corss_proba_MGFF}: the metric graph GFF analogue of Theorem~\ref{thm:corss_proba_H} (a scaling limit result).

\medskip

\noindent
\textbf{Acknowledgments.}

A.K. was supported by the Academy of Finland grant number 339515. A.K. also thanks Stiftelsen för Åbo Akademi and Åbo Akademis Jubileumsfond for travel funding.

This material is part of a project that has received funding from the European Research Council (ERC) under the European Research Council (ERC) under the European Union's Horizon 2020 research and innovation programme (101042460): 
ERC Starting grant \quote{Interplay of structures in conformal and universal random geometry} (ISCoURaGe) 
and from the Academy of Finland grant number 340461 \quote{Conformal invariance in planar random geometry.} 
While carrying out this project, E.P.~has also been supported by 
the Academy of Finland Centre of Excellence Programme grant number 346315 \quote{Finnish centre of excellence in Randomness and STructures (FiRST).}
E.P.~also acknowledges earlier support from the Deutsche Forschungsgemeinschaft (DFG, German Research Foundation) under Germany's Excellence Strategy EXC-2047/1-390685813.  

L.S.~is supported by the Swedish Research Council (VR) grant 2024-05589. L.S.\ also acknowledges the support of Åbo Akademis Jubileumsfond, for funding a visit to Åbo Akademi, where part of this work was carried out.

\bigskip

\section{Preliminaries}
\label{sec:preliminaries}

In this section, we review results that are (almost) known from the literature or, in the case of Section~\ref{subsec:conformal blocks}, from the companion article~\cite{Karrila-Peltola:Boundary_double-dimer_patterns_and_CFT}.
Section~\ref{subsec:comb_notation}  focuses on the combinatorial notions which will be used throughout; 
in Section~\ref{subsec:conformal blocks} we define the conformal block functions (with $c=1$, or $\kappa=4$); 
while Sections~\ref{subsec:SLE} and~\ref{subsec:GFF} respectively summarize the basic notions related to SLE curves and GFF.
The last Section~\ref{subsec:Frontiers} introduces first passage sets (FPS) for the GFF and gathers their key properties.

\subsection{Combinatorial notation}
\label{subsec:comb_notation} 

\subsubsection{Dyck paths and link patterns}

A \emph{(generalized) Dyck path} of $p$ steps is a function 
\begin{align*}
\beta \colon  \{ 0, 1, \ldots, p\} \to \bZnn \qquad \textnormal{with } \beta_0 = \beta_p =0 . 
\end{align*}
We shall consistently visualize Dyck paths by joining the points $(j, \beta_j)$ and $(j+1, \beta_{j+1})$ with a line segment for each $j \in \{0,1,\ldots,p-1\}$. Dyck paths with no constant steps, i.e., with $|\beta_j-\beta_{j-1}| \geq 1$ for all $1 \leq j \leq p$, naturally label GFF boundary data with $p$ jumps, as in Equation~\eqref{eq:harmonic_function}.  
To such a Dyck path is associated a multiindex (valence vector) $\multii = (s_1,\ldots,s_p) \in \bZpos^p$ given by $\multii_j = |\beta_j-\beta_{j-1}|$
which reveals the number of GFF level lines adjacent to the $j$:th boundary jump point, for all $1 \leq j \leq p$. 
The collection of boundary conditions $\beta$ generating given valences $\multii$ is denoted $\GLP_\multii$; see Definition~\ref{def:Dyck_max_sloped} below.

It is natural to index the topological connectivities of GFF level lines by general \emph{$\multii$-valenced link patterns}. Given an integer $p \geq 2$ and 
a multiindex $\multii = (s_1,\ldots,s_p) \in \bZpos^p$, we denote 
\begin{align}\label{eq: definition of N}
s_0 \coloneqq 0, 
\qquad
N = \frac{|\multii|}{2} \coloneqq \frac{1}{2} \sum_{j=1}^p s_j ,
\qquad \textnormal{and} \qquad 
\summ_k \coloneqq s_0 + \cdots + s_k . %;
\end{align}
%where we also require that $\underset{1 \leq j \leq p}{\max} s_j \leq N$ for the link patterns to exist.

\begin{definition} \label{def:link_pattern}
A $\multii$-\emph{valenced \textnormal{(}planar\textnormal{)} link pattern} with $N$ links at $p$ points is a collection $\alpha$
of links $\link{a}{b}$ in $\bH$, each connecting a pair $a < b$ of indices $a,b \in \{1,2,\ldots,p \}$, 
such that
\begin{itemize}[leftmargin=*]
\item for any $j \in \{1,2, \ldots,p\}$, the index $j$ is an endpoint of 
exactly $s_j$ links, 
\item none of the links intersect in $\bH$; 
but only at their common endpoints in $\{1,2, \ldots,p\} \subset \bR$.
\end{itemize}
Note that set-theoretically, the collection of links of $\alpha$ is a multiset
\begin{align}\label{eq:alpha}  
\alpha = \big\{ \link{a_1}{b_1}, \ldots, \link{a_N}{b_N} \big\} \in \LP_\multii . 
\end{align} 
\end{definition}

When all of the valences equal one, i.e., $\multii = (1,1,\ldots,1) \eqqcolon (1)^{2N}$ (so $p = 2N$) 
the set of $N$-link patterns is denoted as $\LP_N \coloneqq \LP_{(1)^{2N}}$. 
The difference between a (non-valenced) link pattern and a valenced link pattern is that the latter allows curves to have common endpoints at the boundary, with multiplicities labeled by valences $s_j$. 
Valenced link patterns can be obtained from link patterns via a fusion procedure, provided that no links connect endpoints to be fused together\footnote{We shall see an example in the context of the scaling limit of the metric graph GFF in Section~\ref{sec:mGFF}.}.  
Conversely, one can separate valenced points into distinct points, thereby \quote{unfusing} a valenced link pattern:
\begin{align} \label{eq:iotamap}
\LP_\multii \to \LP_N,  \qquad
& \alpha \mapsto \imath(\alpha) .
\end{align}
The map $\imath$ is defined combinatorially as follows: split each index $j \in \{1,2,\ldots,p \}$ of $\alpha$ 
into $s_j$ distinct indices, and attach the $s_j$ links ending at $j$ in $\alpha$ to these new $s_j$ indices,  
so that the link possessing the leftmost endpoint gets the leftmost one of the new indices, etc. 
Then, label all indices from left to right by $1,2,\ldots,2N$ to obtain an $N$-link pattern $\imath(\alpha) \in \LP_N$.

We identify the set $\LP_\multii$ with the set of all Dyck paths over the multiindex $\multii$ 
via a bijection where, for each $1 \leq j \leq p$, if $j$ appears $\ell_j$ times as a left link endpoint in the link pattern $\beta$ and $r_j$ times as a right endpoint,
we set the $j$:th step of the Dyck path (also denoted as $\beta$) to be $\beta_j-\beta_{j-1} := \ell_j - r_j$ (see Figure~\ref{fig:walk}(left)).
This bijection requires $\multii$ to be known; for instance the link patterns $\{ \{ 1, 3\}, \{1, 2 \},    \{2, 3\}  \}$ and $\{ \{ 1, 2\}, \{1, 2\},  \{2, 3\} ,  \{2, 3\}  \}$ over the multiindices $(2, 2, 2)$ and $(2, 4, 2)$, respectively, both correspond to the Dyck path $(0, 2, 2, 0)$.

\subsubsection{The binary relation \quote{$\KWleq$}}

\begin{figure}
\includegraphics[width=.455\textwidth]{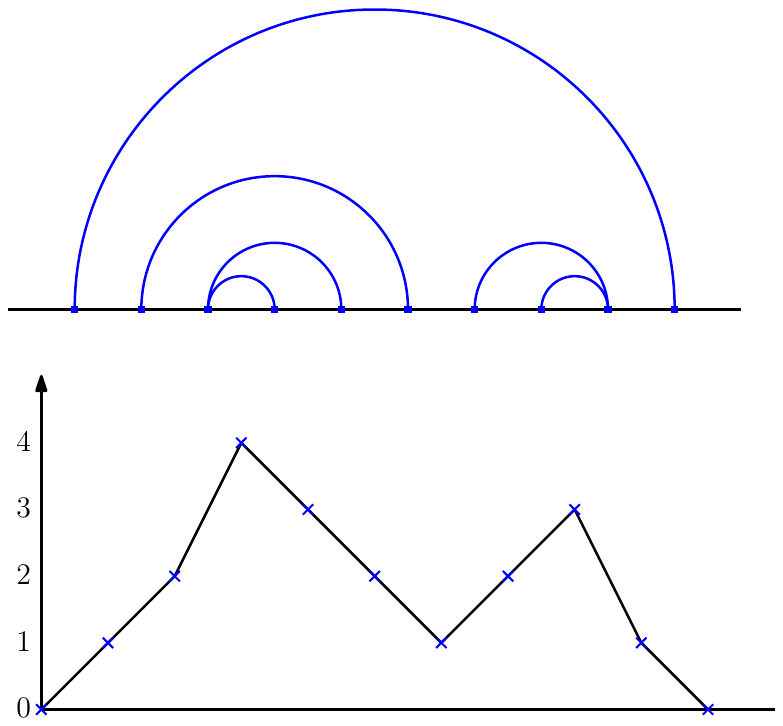} \quad
\includegraphics[width=.505\textwidth]{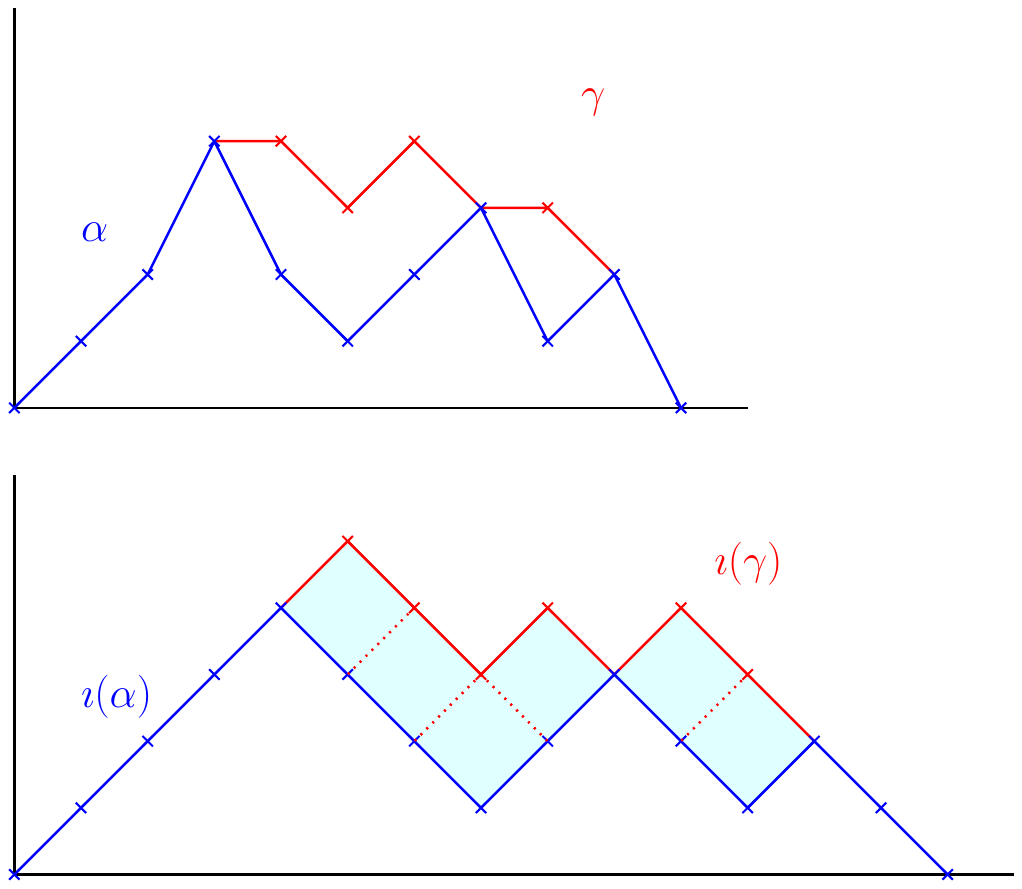}
\caption{
\label{fig:walk}
\label{fig:skew_Young_diagram}
Combinatorial objects central to the present work: 
The left panel illustrates 
a Dyck path $\beta = (0,1,2,4,3,2,1,2,3,1,0)$ over the multiindex $\multii = (1,1,2,1,1,1,1,1,2,1)$ and the corresponding $\multii$-valenced link pattern. Here, $\beta \in \GLP_\multii$. 
The right panel illustrates 
two Dyck paths $\alpha \DPleq \beta$ over the multiindex $\multii = (1,1,2,2,1,1,1,2,1,2)$ (top panel) and their unfused images $\imath (\alpha) \DPleq \imath (\beta) $
and the corresponding skew Young diagram $\alpha / \beta = \imath(\alpha) / \imath(\beta)$, highlighted in light blue color (bottom panel). 
In this example, the number of atomic squares for the skew Young diagram is $|\alpha/\beta| = 6$.  
}
\end{figure}

Let a Dyck path $\beta$ of $p$ steps and without constant steps (i.e., a GFF boundary condition) be given.
In analogy with Equation~\eqref{eq:harmonic_function}, set
\begin{align*}
U_\beta(x) \coloneqq 
\begin{cases}
0 , & \textnormal{if $x<1$ or $x>p$,}
\\ 
\beta_j ,  & \textnormal{if } x\in (j, j+1), \ j \in \{1,\dots,p-1\}.
\end{cases}
\end{align*}
We say that a valenced link pattern $\alpha \in \LP_\multii$ \emph{is a (combinatorial) level-line pattern for the boundary condition $\beta$} if
$\multii_j = |\beta_j-\beta_{j-1}|$ for all $1 \leq j \leq p$, 
and on the complement of the links of $\alpha$, there exists a function which is constant on each connected component of $\bH \setminus \alpha$ and has the boundary values $U_\beta$ on $\bR$ 
(see Figure~\ref{fig:combinatorial bijections}(right)). Lemma~\ref{lem:KWleq importance} below characterizes the possible combinatorial level-line patterns of given boundary data in terms of the binary relation \quote{$\smash{\KWleq}$.}
This will be used in the proof of Theorem~\ref{thm:CFT properties}, which also tells that the combinatorial level-line patterns indeed are exactly the possible GFF level-line patterns.

\begin{definition}[Maximally sloped, $\GLP_\multii$]\label{def:Dyck_max_sloped}
Fix $\multii = (s_1, s_2, \ldots, s_p)$. 
A Dyck path $\beta \in \LP_\multii$ over the multiindex $\multii$ is said to be \emph{maximally sloped} if it satisfies
$|\beta_{j+1} - \beta_{j}| = s_{j+1}$ for all $1 \leq j \leq p-1$.
We denote the set of all maximally sloped Dyck paths over $\multii$ by $\GLP_\multii$.
\end{definition}

Not all valences allow maximally sloped Dyck paths: for instance for $\multii = (2, 2, 2)$, there cannot exist a walk of three steps all of sizes $\pm 2$ that would start and end at height zero.
(The only possible link pattern in this case is illustrated in Figure~\ref{fig:impossible_pattern} in the introduction.)

\begin{figure}
\centering
\begin{minipage}{0.45\textwidth}
\raisebox{8.2em}{\includegraphics[width=\textwidth]{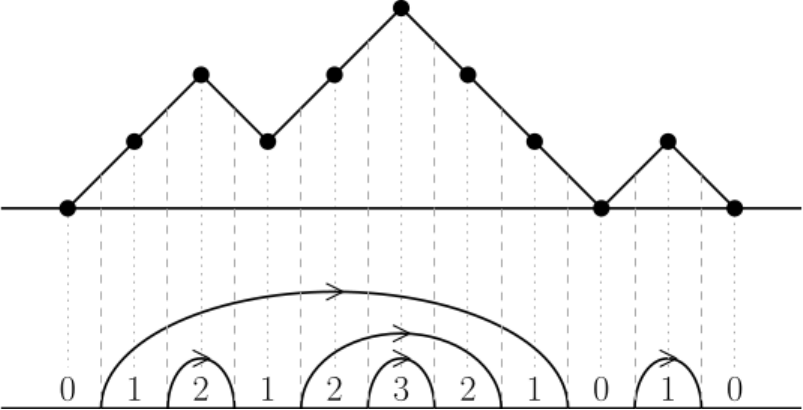}}
\end{minipage}
\begin{minipage}{0.45\textwidth}
\includegraphics[width=\textwidth]{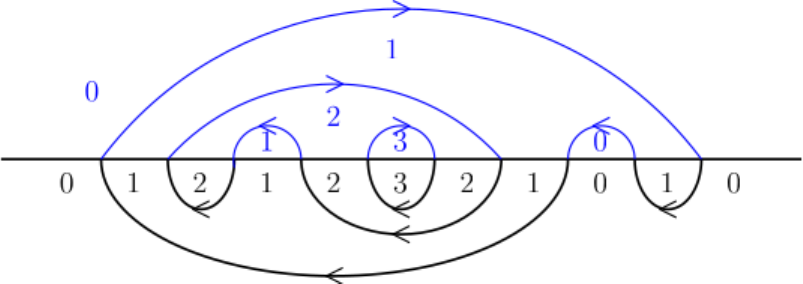}
\end{minipage}
\begin{minipage}{0.45\textwidth}
\raisebox{8.2em}{\includegraphics[width=\textwidth]{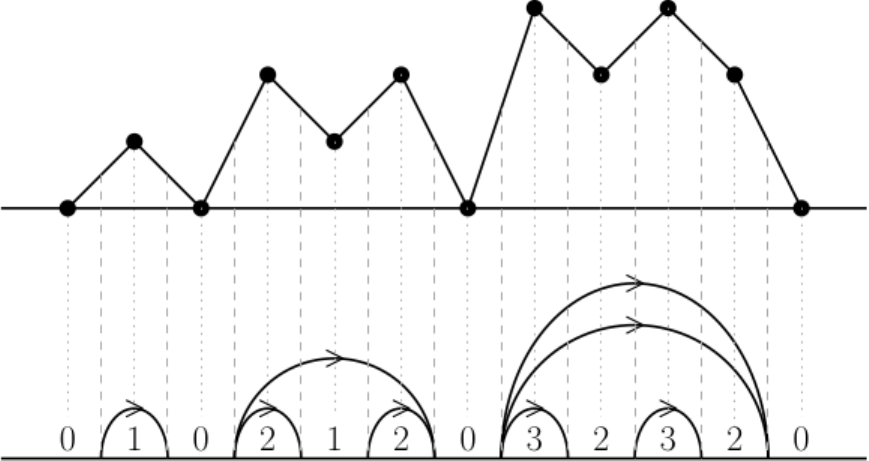}}
\end{minipage}
\begin{minipage}{0.45\textwidth}
\includegraphics[width=\textwidth]{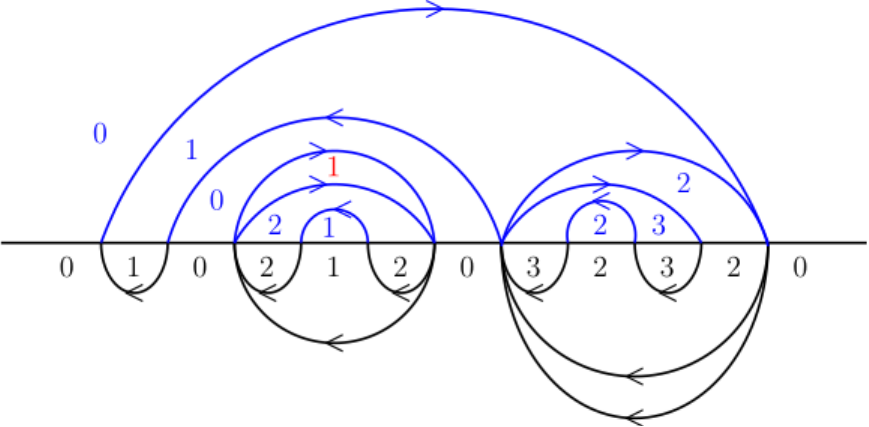}
\end{minipage}
\caption{
\label{fig:combinatorial bijections}
The left panel illustrates an unvalenced (top) and valenced (where $\beta \in \GLP_\multii$, bottom) example of the bijection between Dyck paths $\beta$ and link patterns $\beta$: up-steps correspond to left link endpoints, down-steps to right endpoints (with multiplicity in bottom panel). 
The bijection can be extended so that the Dyck path heights between the steps become the boundary condition $U_\beta$, and the link pattern yields the unique possible pattern of level lines with this boundary condition in which the oriented level lines (i.e., with larger height on the right) all flow from left to right. \\[.5em]
The right panel illustrates the relation \quote{$\smash{\KWleq}$}: the link patterns $\beta$ of the left panels are now drawn on the lower half-plane with all links oriented from right to left. Then, on the top panel, $\alpha \in \LP_N$ with $\beta \smash{\KWleq} \alpha$ (resp. $\alpha \in \LP_\multii$ with $\imath ( \beta) \smash{\KWleq} \imath (\alpha)$ on the bottom panel) is drawn in the upper half-plane. 
By~\cite[Lemma~2.5]{KKP:Boundary_correlations_in_planar_LERW_and_UST}, the relation $\smash{\KWleq}$ guarantees that the link pattern $\alpha$ can be oriented so as to form oriented loops together with $\beta$. By Lemma~\ref{lem:KWleq importance}, this also equivalently means that a locally constant function can be defined in $\bH \setminus \alpha$ so that it coincides with $U_\beta$ on $\bR$. In the valenced case, such a function is not uniquely determined by $\beta$ (the red number $1$ could be changed), but there is a unique such function with differences $\pm 1$ over the links of $\alpha$.
}
\end{figure}

We are now ready to characterize which $\multii$-valenced link patterns $\alpha \in \LP_\multii$ 
provide combinatorial level-line patterns for a given boundary condition $\beta \in \GLP_{\multii}$.

\begin{lemma}
\label{lem:KWleq importance}
Fix $\multii \in \bZpos^p$ and $\beta \in \GLP_\multii$. A valenced link pattern $\alpha \in \LP_\multii$ is a (combinatorial) level-line pattern for the boundary condition $\beta$ if and only if the relation $$\imath ( \beta) \KWleq \imath (\alpha)$$ introduced in~\cite[Definition~2.2]{KKP:Boundary_correlations_in_planar_LERW_and_UST} holds for the corresponding \quote{unfused} link patterns. 
In this case, we simply denote $\smash{\beta \KWleq \alpha}$ for $\alpha, \beta \in \LP_\multii$.
Furthermore, the corresponding locally constant function on $\bH \setminus \alpha$ can be chosen so that 
its values at neighboring connected components differ by~$\pm 1$.
\end{lemma}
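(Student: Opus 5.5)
The plan is to reduce the valenced statement to the unvalenced one via the unfusing map $\imath$, and there to invoke \cite[Lemma~2.5]{KKP:Boundary_correlations_in_planar_LERW_and_UST}. That lemma says $\imath(\beta)\KWleq\imath(\alpha)$ holds if and only if the links of $\imath(\alpha)$ can be oriented so that, together with the links of $\imath(\beta)$ drawn in the lower half-plane and oriented from right to left, they form oriented loops. So it suffices to show two things:
\begin{itemize}
\item a locally constant function on $\bH\setminus\alpha$ with boundary values $U_\beta$ exists if and only if such an orientation of $\imath(\alpha)$ exists;
\item in that case the function can be chosen with jumps $\pm1$ across the links.
\end{itemize}

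\textbf{Orientation $\Rightarrow$ height function.} Suppose $\imath(\alpha)$ carries an orientation closing up into loops with $\imath(\beta)$. Orient each link of $\imath(\beta)$ from right to left and each link of $\imath(\alpha)$ as given. Draw the resulting loop configuration in the whole plane. Define a height function $h$ on the complement of the loops by setting $h=0$ on the unbounded face. Across each oriented arc, $h$ increases by $1$ from left to right, say. This is well defined because the arcs form a union of disjoint oriented closed loops: every face is assigned the winding-type sum over loops, which is consistent.

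In the lower half-plane, the links of $\imath(\beta)$ are the arcs of the Dyck path $\imath(\beta)$, all oriented right to left. Hence just below a point $x\in(j,j+1)$ of the real line the height equals the number of $\imath(\beta)$-arcs enclosing $x$, which is the Dyck height $\beta_j$. Here we use that $\beta$ is maximally sloped: the $s_j$ unfused points at $j$ are then all up-steps or all down-steps, so no arcs of $\imath(\beta)$ start and end within a single fused cluster, and the height between clusters is exactly $\beta_j$.

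Restricting $h$ to $\bH$ and fusing back gives a locally constant function on $\bH\setminus\alpha$ with boundary values $U_\beta$, and its jumps across the links are $\pm1$. One point must be checked here. When fusing, the unfused points inside a cluster at $j$ collapse, and the faces of $\bH\setminus\imath(\alpha)$ touching the real line between them disappear. These are the faces adjacent to the real line only within the cluster, so they do not affect the boundary values; they are simply faces of $\bH\setminus\alpha$ away from the boundary intervals.

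\textbf{Height function $\Rightarrow$ orientation.} Given a locally constant $f$ on $\bH\setminus\alpha$ with boundary values $U_\beta$, we first modify it to have jumps $\pm1$. Consider a link $\{a,b\}$ of $\alpha$. If two adjacent faces separated by it have heights differing by $d$, then tracing along the real line shows that the boundary values force a total height change. I would instead directly construct an orientation.

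At each point $j$, the $s_j$ links of $\alpha$ ending there must all be oriented outward if $\dir\beta_j>0$, and inward otherwise. This is forced by the loop condition, since the $\imath(\beta)$-arcs at the cluster all point one way. This rule assigns each link $\{a,b\}$ an orientation from its positive-jump endpoint to its negative-jump endpoint, and it is consistent only if one endpoint is a positive jump and the other a negative jump.

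To prove this consistency, use $f$. Unfuse and write $f$ on $\bH\setminus\imath(\alpha)$, extending across the collapsed boundary faces by interpolation. Then sum the jumps of $f$ across the links of a nested family in a face. The total change of $U_\beta$ across a cluster is $\dir\beta_j$, which has absolute value $s_j$ and is crossed by exactly $s_j$ links. Since each link also shifts the height by a nonzero integer, every link crossed at $j$ carries a jump of $\pm1$ of one common sign. This sign determines the orientation at both endpoints, which gives consistency. It also shows that $f$ automatically has jumps $\pm1$ on the boundary-adjacent links.

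\textbf{Expected obstacle.} The main difficulty is this last step, which needs a rigorous counting argument. I would proceed by induction on $N$, removing an innermost link of $\alpha$ (one between consecutive indices, or a multiplicity at a single pair). Then I would check that both conditions — existence of $f$, and $\imath(\beta)\KWleq\imath(\alpha)$ via \cite[Lemma~2.5]{KKP:Boundary_correlations_in_planar_LERW_and_UST} — are preserved after the corresponding reduction of $\beta$.
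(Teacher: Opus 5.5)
Your direction from an orientation of $\imath(\alpha)$ to a height function is the paper's argument: heights come from the oriented loops of $\imath(\alpha)\cup\imath(\beta)$, maximal slope identifies the heights just below $\bR$ with $U_\beta$, and then you fuse. It is fine, including the $\pm1$ jumps.

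The converse has a genuine gap at the sentence asserting that every link at $j$ carries a jump $\pm1$ of one common sign. Faces of $\bH\setminus\alpha$ that meet $\bR$ only at link endpoints get values not constrained by $U_\beta$, so the jumps across the links bounding them are arbitrary, possibly $0$. Even nonzero integer jumps can sum to $\pm s_j$ without being $\pm1$ (e.g.\ $3+(-1)=2$).

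No finer count, and no induction on innermost links, will repair this, because under the definition of level-line pattern as stated the implication is false. Take $\multii=(2,2,2,2)$, $\beta=(0,2,4,2,0)\in\GLP_\multii$ and $\alpha=\{\{1,4\},\{1,2\},\{2,3\},\{3,4\}\}\in\LP_\multii$.
\begin{itemize}
\item The faces of $\bH\setminus\alpha$ meeting $\bR$ in intervals receive the consistent values $0,2,4,2$.
\item The remaining face, between the arch $\{1,4\}$ and the three small arches, touches $\bR$ only at the points $1,2,3,4$. Any value there gives a locally constant function with boundary values $U_\beta$, so $\alpha$ is a level-line pattern.
\item However, the link $\{1,2\}$ joins two clusters with $\dir\beta>0$. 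So the corresponding link of $\imath(\alpha)$ joins two up-steps of the rainbow $\imath(\beta)$, no loop orientation exists, and $\imath(\beta)\KWleq\imath(\alpha)$ fails. In GFF terms, a level line from $x_1$ cannot end at the positive jump $x_2$.
\item That middle face borders both the face of value $0$ and the face of value $4$, so no $\pm1$ choice exists either.
\end{itemize}
The paper's own step has the same problem: it claims the undetermined interior heights can always be adjusted to give $\pm1$ differences, and that claim fails on this example.

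What is true, and what is actually used later (Remark~\ref{rem:KWleq importance} and the proof of (POS)), is the equivalence with the $\pm1$ condition built into the definition of a level-line pattern. Under that hypothesis your cluster-sum idea closes the argument directly.
\begin{itemize}
\item At $j$, the $s_j$ jumps are each $\pm1$ and sum to $\dir\beta_j=\pm s_j$, so all of them have the sign of $\dir\beta_j$.
\item A link $\{a,b\}$ with $a<b$ has the same two adjacent faces along its whole length. Passing around $a$ from the interval just left of $a$ to the one just right of it crosses the link from its outer to its inner face. Passing around $b$ crosses it from inner to outer. So its jumps at the two ends are opposite.
\item Hence every link of $\alpha$ joins a cluster with $\dir\beta>0$ to one with $\dir\beta<0$.
\item By maximal slope, every link of $\imath(\alpha)$ then joins an up-step to a down-step of $\imath(\beta)$. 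Orienting each such link from its up-step end closes oriented loops with $\imath(\beta)$ drawn below $\bR$ and oriented right to left, which is exactly the criterion of \cite[Lemma~2.5]{KKP:Boundary_correlations_in_planar_LERW_and_UST}.
\end{itemize}
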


\begin{remark}
\label{rem:KWleq importance}
The last part of the lemma also implies the following characterization of all $\alpha \in \LP_\multii$ that are combinatorial level-line patterns for the boundary condition $\beta \in \GLP_\multii$: for any $j \in \{1,\ldots, p\}$, with $U_\beta (j^-) < m+1/2 < U_\beta (j^+)$ for some integer $m$, add a level line of \quote{height} $m$ from $j$ to any $k \in \{1,\ldots, p\}$, with $U_\beta (k^-) > m+1/2 > U_\beta (k^+)$. 
Then, split $\bH$ into two simply connected domains, impose boundary data $m$ on the left and $(m+1)$ on the right side of the level line, and continue similarly. 
An inductive argument on $N$ in~\eqref{eq: definition of N} readily shows that this produces exactly all the link patterns satisfying the $\pm 1$ condition.
\end{remark}

\begin{proof}[Proof of Lemma~\ref{lem:KWleq importance}]
The proof is based on an extension of the bijection between Dyck paths and link patterns (see Figure~\ref{fig:combinatorial bijections}(left)): for $\beta \in \GLP_\multii$, we see that
\begin{itemize}
\item the heights of the Dyck path $\beta$ determine the boundary data $U_\beta$ between the link endpoints (dotted lines in Figure~\ref{fig:combinatorial bijections}, right panel); and
\item the links of the valenced link pattern $\beta$, when oriented from left to right, then form one possible pattern of level lines with this boundary data, with larger height on the right of each oriented link.
\end{itemize}
Now, suppose first that $\imath ( \beta) \smash{\KWleq} \imath (\alpha)$. 
By~\cite[Lemma~2.5]{KKP:Boundary_correlations_in_planar_LERW_and_UST}, this means that, drawing $\imath (\beta)$ on the lower half-plane with all links oriented from right to left, $\imath(\alpha)$ can be drawn in the upper half-plane with some orientation of its links so that $\imath(\alpha) \cup \imath(\beta)$ consists of oriented loops (see Figure~\ref{fig:combinatorial bijections}(left)). Define now a function on $\bC \setminus (\imath(\alpha) \cup \imath(\beta) )$ as zero on the infinite component, and upon moving inward across the loops of $\imath(\alpha) \cup \imath(\beta)$, by increasing (resp.~reducing) the height by one on each crossing of a clockwise (resp.~counterclockwise) loop. 
By the above observations (as this inward exploration can be performed in the lower half-plane), this function coincides with $U_{\imath (\beta)}$ on $\bR$, whence $\imath(\alpha)$ is a level-line pattern for the boundary condition $\imath (\beta)$. 
The analogous claim for $\alpha$ and $\beta$ follows by fusion.

Suppose now that $\alpha \in \LP_\multii$ is a level-line pattern for the boundary condition $\beta$. Note that $\bH \setminus \alpha$ may contain connected components that do not intersect $\bR$, so their height is not directly determined by $\beta$ (see Figure~\ref{fig:combinatorial bijections}, bottom-right panel). We however claim that, by adjusting such undetermined heights, it is possible to achieve a situation where all neighboring components of $\bH \setminus \alpha$ have a height difference of $\pm 1$. 
Assuming this claim for a moment, since $\beta \in \GLP_\multii$, \quote{unfusing} shows that $\imath ( \alpha) $ is a combinatorial level line pattern for the boundary condition $\imath ( \beta )$. By the above bullet points, also $\imath ( \beta) $ itself is a combinatorial level-line pattern for the boundary condition $\imath ( \beta )$. In particular, we can draw the links of $\imath(\alpha)$ on the upper half-plane and $\imath(\beta)$ on the lower half-plane and extend $U_\beta$ as a locally constant function to the complement of the links $\bC \setminus (\imath(\alpha) \cup \imath(\beta))$, with differences $\pm 1$ over each link. Now,  orienting the loops of $\imath(\alpha) \cup \imath(\beta)$ to yield the oriented level lines of this locally constant function on $\bC \setminus (\imath(\alpha) \cup \imath(\beta))$, we then obtain the above cited characterization of $\smash{ \imath ( \beta) \KWleq \imath (\alpha) }$ from~\cite[Lemma~2.5]{KKP:Boundary_correlations_in_planar_LERW_and_UST}.

We now prove the postponed claim by induction on $N$. 
For $N=1$, every component of $\bH \setminus \alpha$ has to intersect $\bR$, so there is nothing to prove. For general $N$, note that the infinite component of $\bH \setminus \alpha$ has height $0$, and assign to every component adjacent to it height $1$. 
Now, inside each such connected component, we have a new similar problem, but with a lower number of curves (i.e., a simply connected domain $D$, a boundary condition monotonous at fused boundary points, and some height to be adjusted on the connected components of $D \setminus \alpha$). The claim, and the entire proof, is thus completed by induction on $N$.
\end{proof}

\subsubsection{Further notation}

We now review the rest of the definitions used in Equation~\eqref{eq:PartF} and in Section~\ref{subsec:conformal blocks}.
(These concepts are also central in the companion article~\cite{Karrila-Peltola:Boundary_double-dimer_patterns_and_CFT}.)

The set $\LP_\multii$ of (generalized) Dyck paths also has a natural partial order, inherited by the valenced link patterns: for 
$\alpha = (\alpha_{0}, \alpha_{1}, \ldots, \alpha_p)$ and $\beta = (\beta_0, \beta_1, \ldots, \beta_{p})$,
we denote $\alpha \DPleq \beta$ if and only if $\alpha_j \leq \beta_j$ for all $j$. It is not difficult to show that given $\alpha, \beta \in \LP_\multii$, 
the property $\alpha \DPleq \beta$ is equivalent to $ \imath( \alpha) \DPleq \imath(\beta)$.
When $\alpha \DPleq \beta$, the area between the Dyck paths $\imath(\alpha)$ and $\imath(\beta)$ forms a \emph{skew Young diagram}, denoted by $\alpha / \beta$.
We let $|\alpha/\beta|$ denote the number of atomic square tiles in~$\alpha/\beta$, as illustrated also in Figure~\ref{fig:skew_Young_diagram}(right).

To each skew Young diagram~$\alpha/\beta$ for a pair $\alpha, \beta \in \LP_N$ of $N$-link patterns 
(or rather, Dyck paths $\alpha, \beta$, which are in bijection with $N$-link patterns), in~\cite[Definition~2.8]{KKP:Boundary_correlations_in_planar_LERW_and_UST} (following~\cite{Kenyon-Wilson:Double_dimer_pairings_and_skew_Young_diagrams})  
one associates a (non-empty) set $\CItilingsof (\alpha / \beta)$ of \emph{cover-inclusive Dyck tilings}, illustrated in~\cite[Figures~2.11~\&~2.12]{KKP:Boundary_correlations_in_planar_LERW_and_UST}.
Because we will not need the precise definition for deriving the results in the present work, we refer to the extensive overview in~\cite[Section~2]{KKP:Boundary_correlations_in_planar_LERW_and_UST} for these objects and their combinatorics.  
Slightly generalizing those references, for each pair $\alpha, \beta \in \LP_\multii$ of $\multii$-valenced link patterns (or rather, Dyck paths over $\multii$) such that $\alpha \DPleq \beta$, 
we let $\# \CItilingsof (\alpha / \beta)$ denote the number of cover-inclusive Dyck tilings of~$\imath(\alpha)/\imath(\beta)$.

Finally, we need another partial order relation on $\LP_N$. 
Given a valence vector $\multii \in \bZpos^p$ with $|\multii|=2N$, for each pair $\alpha, \beta \in \LP_N$ of $N$-link patterns (Dyck paths), 
we set $\beta \geq_\multii \alpha$ if 
\begin{align} \label{eq: partial order}
\alpha \DPleq \beta  
\qquad \textnormal{and in addition,} \quad \alpha_{\summ_k} = \; &  \beta_{\summ_k} ,
\quad \textnormal{for all } k \in \{ 0, 1, \ldots, p \} ,
\end{align} 
using the notation from Equation~\eqref{eq: definition of N}. 
For $\alpha \in \LP_\multii$, we denote 
\begin{align} \label{eq: partial order larger set}
\LP_{\geq_\multii \alpha} := \{ \beta \in \LP_N \; | \; \beta \geq_\multii \imath ( \alpha ) \} .
\end{align}

\subsection{Conformal block functions}
\label{subsec:conformal blocks}

Writing $\beta \in \LP_N$ as an $N$-link pattern with links as in Equation~\eqref{eq:alpha}, the well-known \emph{conformal block function} associated to $\beta$ reads
\begin{align} \label{eq:CobloF_nonval}
\CobloF_\beta(x_1,\ldots,x_{2N}) \coloneqq \prod_{1 \leq i < j \leq 2N} (x_j - x_i)^{\frac{1}{2} \vartheta(i,j)} 
= \frac{ \SpechtP_\beta(x_1,\ldots,x_{2N}) }{ ( \Delta (x_1,\ldots,x_{2N}) )^{1/2} } ,
\end{align}
where $\Delta (x_1,\ldots,x_{2N}) \coloneqq \underset{i < j}{\prod} \, (x_j - x_i)$ is the Vandermonde determinant, and
\begin{align*}
\vartheta(i,j) 
\coloneqq 
\begin{cases} 
+1, \quad i,j \in \{a_1,a_2,\ldots ,a_{N} \}  \; \textnormal{or} \; i,j \in \{b_1,b_2,\ldots ,b_{N}\}, \\
-1, \quad \textnormal{otherwise},
\end{cases}
\end{align*}
and $\SpechtP_\beta$ are \emph{Specht polynomials} defined as
\begin{align*}
\SpechtP_\beta(x_1,\ldots,x_{2N}) \coloneqq \; &
\prod_{1 \leq i < j \leq 2N} (x_j - x_i)^{\delta(i,j)} , \qquad \beta \in \LP_N ,
\\
\qquad \textnormal{where} \qquad 
\delta(i,j) \coloneqq \; & \delta_{\dir \beta_{j} , \dir \beta_i}
= \begin{cases}
1 , & \dir \beta_{j} = \dir \beta_i , \\
0 , & \dir \beta_{j} \neq \dir \beta_i .
\end{cases}
\end{align*}
We see that $\CobloF_\beta$ coincide with the conformal block functions defined in~\cite[Equation~(6.1)]{Peltola-Wu:Global_and_local_multiple_SLEs_and_connection_probabilities_for_level_lines_of_GFF}
and in~\cite[Definition~3.1]{LPR:Fused_Specht_polynomials_and_c_equals_1_degenerate_conformal_blocks}. 
In this context, the step directions 
\begin{align*}
\dir \beta_{j} \coloneqq \beta_{j} - \beta_{j-1} \in \{\pm 1\}, \qquad j \in \{1,\dots,2N\} ,
\end{align*}
encode jump directions in GFF boundary data as in~\cite[Sections~5-6]{Peltola-Wu:Global_and_local_multiple_SLEs_and_connection_probabilities_for_level_lines_of_GFF}.

\begin{definition} \label{def:VF}
We define 
\begin{align*}
\sV_\alpha(x_1,\ldots,x_{2N}) \coloneqq \sum_{\substack{\beta \in \LP_N \\ \alpha \leq_\multii \beta }}  (-1)^{|\alpha / \beta|} \, \CobloF_\beta(x_1,\ldots,x_{2N}) , \qquad \alpha \in \LP_\multii ,
\end{align*}
where \quote{$\leq_\multii$} is the partial order defined in Equation~\eqref{eq: partial order}, 
$|\alpha / \beta| \in \bZnn$ is the number of atomic square tiles in the skew Young diagram $\alpha / \beta = \imath(\alpha) / \beta$ (see Figure~\ref{fig:skew_Young_diagram}(right) for an illustration),
and $\CobloF_\beta$ is the conformal block function~\eqref{eq:CobloF_nonval}.
\end{definition}

\begin{lemma} \label{lem:VF_fusion}
Fix $x_1 <\cdots < x_p$ and $\beta \in \LP_\multii$. The following limit exists:
\begin{align} \label{eq:fused_conformal_block}
\CobloF_\beta (x_1,\ldots ,x_p) \coloneqq 
\lim_{\substack{ \xi_{1},\xi_{2},\ldots,\xi_{\summ_{1}} \to x_1 \\[.2em] \xi_{\summ_1+1},\xi_{\summ_1+2},\ldots,\xi_{\summ_{2}} \to x_2 \\[.1em] \vdots \\[.2em] \xi_{\summ_{p-1}+1},\xi_{\summ_{p-1}+2},\ldots,\xi_{\summ_{p}} \to x_p }} \frac{\sV_\beta(\xi_1,\ldots,\xi_{2N}) }{ \overset{p-1}{\underset{k=0}{\prod}} \; \underset{\summ_k < i < j \leq \summ_{k+1}}{\prod} (\xi_j - \xi_i)^{1/2} } . 
\end{align}
\end{lemma}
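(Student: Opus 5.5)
We need to show the limit in~\eqref{eq:fused_conformal_block} exists. The plan is to reduce it to a statement about the Specht polynomials and then analyse what happens when points collide.

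\textbf{Reduction to polynomials.} By~\eqref{eq:CobloF_nonval}, each $\CobloF_\beta$ with $\beta \in \LP_N$ equals $\SpechtP_\beta/\Delta^{1/2}$. Hence
\[
\sV_\alpha = \frac{Q_\alpha}{\Delta^{1/2}}, \qquad \textnormal{where} \quad Q_\alpha \coloneqq \sum_{\beta \geq_\multii \imath(\alpha)} (-1)^{|\alpha/\beta|}\SpechtP_\beta
\]
is a polynomial. Split the Vandermonde determinant into intra-block factors (both indices in the same group $\summ_k < i < j \leq \summ_{k+1}$) and inter-block factors. The inter-block factors tend to a nonzero product of powers of $(x_l - x_k)$, so they cause no trouble. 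The quantity whose limit we need is therefore
\[
\frac{Q_\alpha(\xi)}{\prod_k \prod_{\summ_k<i<j\le \summ_{k+1}}(\xi_j-\xi_i)}
\]
times a convergent factor. Note that the intra-block factors appear to the full power $1/2 + 1/2 = 1$ after combining with the denominator in~\eqref{eq:fused_conformal_block}. So it suffices to show that $Q_\alpha$ is divisible, as a polynomial, by $\prod_{k}\prod_{\summ_k<i<j\le\summ_{k+1}}(\xi_j-\xi_i)$. The quotient is then a polynomial, and its value at the collapsed point gives the limit, independently of the direction of approach.

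\textbf{Divisibility via antisymmetry.} It is enough to show that $Q_\alpha$ is antisymmetric under each transposition $\xi_i \leftrightarrow \xi_{i+1}$ with $i, i+1$ in the same block. Antisymmetry in all variables of a block forces divisibility by that block's Vandermonde. Different blocks involve disjoint variable sets, and the factors are coprime, so divisibility by the full product follows.

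The constraint $\beta \geq_\multii \imath(\alpha)$ fixes the heights $\beta_{\summ_k}$ at the block boundaries and requires $\beta \succeq \imath(\alpha)$. Inside a block, $\imath(\alpha)$ is, by the unfusing rule, a run of down-steps followed by up-steps: left links attach first after the right ones, giving a valley shape. The admissible $\beta$ are then exactly all sign sequences on the block with the prescribed net displacement, and the valley is the pointwise minimum among them. Consequently the index set is the full $\Sym_{s_k}$-orbit, under permutations of positions inside the block, of the step-direction vector of $\imath(\alpha)$.

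Now use $\SpechtP_\beta = \prod_{i<j}(x_j-x_i)^{\delta(i,j)}$, which depends only on the step-direction vector of $\beta$. Swapping two adjacent positions $i, i+1$ with different directions maps $\beta$ to another admissible $\beta'$. Exchanging the variables $\xi_i \leftrightarrow \xi_{i+1}$ sends $\SpechtP_\beta$ to $\SpechtP_{\beta'}$: the factor $(x_{i+1}-x_i)$ is absent in both, and the other factors are permuted. Meanwhile $|\alpha/\beta|$ changes by exactly one box, so the sign flips. If the two positions have equal directions, $\SpechtP_\beta$ contains $(\xi_{i+1}-\xi_i)$ and is itself antisymmetric under the swap, with $\beta$ fixed. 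Summing over the orbit gives antisymmetry of $Q_\alpha$.

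\textbf{Main obstacle.} The delicate step is the combinatorial claim that $\{\beta : \beta \geq_\multii \imath(\alpha)\}$ is exactly the product over blocks of these intra-block rearrangements. This requires checking that the lower constraint $\beta \succeq \imath(\alpha)$ is automatic within each block, which holds because the valley is minimal. It also requires checking that rearrangements never make $\beta$ negative, which follows since $\beta \succeq \imath(\alpha) \succeq 0$. The other point to verify carefully is that the sign $(-1)^{|\alpha/\beta|}$ changes by exactly $-1$ under an adjacent swap of opposite steps, since such a swap adds or removes a single box. Once these are checked, the limit is the polynomial quotient evaluated at the collapsed points, multiplied by the limiting inter-block factors.
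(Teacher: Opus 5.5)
Your proof is correct, and it takes a more direct route than the paper. The paper gives no argument of its own here. It defers to the companion work~\cite{Karrila-Peltola:Boundary_double-dimer_patterns_and_CFT}, where the alternating sums of Definition~\ref{def:VF} are matched with the fused Specht polynomials of~\cite{LPR:Fused_Specht_polynomials_and_c_equals_1_degenerate_conformal_blocks}. It then takes the existence of the fusion limit from~\cite[Propositions~3.2~\&~3.14]{LPR:Fused_Specht_polynomials_and_c_equals_1_degenerate_conformal_blocks}.

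You instead prove everything by hand, and each step checks out:
\begin{itemize}
\item The set $\{\beta \in \LP_N : \beta \geq_\multii \imath(\alpha)\}$ is the product over blocks of all rearrangements of the valley-shaped block of $\imath(\alpha)$. Planarity forces the down-steps of each block to precede its up-steps. The valley is the pointwise minimum among paths with the same endpoint heights, and nonnegativity is inherited.
\item An adjacent swap inside a block changes only one interior height, by $\pm 2$, and hence changes $|\alpha/\beta|$ by exactly one.
\item The pairing $\beta \leftrightarrow \beta'$ then gives antisymmetry of $Q_\alpha$ in each block. This yields divisibility by the block Vandermondes, and the limit is a polynomial quotient times convergent inter-block factors.
\end{itemize}

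Your two cases merge into one identity: the swap $\xi_i \leftrightarrow \xi_{i+1}$ sends $(-1)^{|\alpha/\beta|}\SpechtP_\beta$ to $-(-1)^{|\alpha/\beta'|}\SpechtP_{\beta'}$. Consequently $\big(\prod_k \ell_k!\, r_k!\big)\, Q_\alpha$ is the block-wise antisymmetrization of the single polynomial $\SpechtP_{\imath(\alpha)}$. Here $\ell_k, r_k$ are the numbers of up- and down-steps of $\imath(\alpha)$ in block $k$. This is the natural bridge to the fused Specht polynomial construction.

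What your route does not give is the identification of these limits with the conformal blocks of~\cite{LPR:Fused_Specht_polynomials_and_c_equals_1_degenerate_conformal_blocks} indexed by column-strict tableaux. The paper needs that identification later, in the proof of Theorem~\ref{thm:CFT properties}, to import the BPZ PDEs, M\"obius covariance, and linear independence. For the existence statement of this lemma, your argument is complete and self-contained.
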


\begin{proof}
This is proven in~\cite{Karrila-Peltola:Boundary_double-dimer_patterns_and_CFT} based on the combinatorial Definition~\ref{def:VF} and~\cite[Proposition~3.14]{LPR:Fused_Specht_polynomials_and_c_equals_1_degenerate_conformal_blocks};
it is essentially a consequence of~\cite[Propositions~3.2~\&~3.14]{LPR:Fused_Specht_polynomials_and_c_equals_1_degenerate_conformal_blocks}. 
\end{proof}

\begin{definition}\label{def:CobloF}
We define the \emph{conformal block functions} $\CobloF_\beta$, for $\beta \in \LP_\multii$, by Equation~\eqref{eq:fused_conformal_block}.
\end{definition}

In the original~\cite[Definition~3.15]{LPR:Fused_Specht_polynomials_and_c_equals_1_degenerate_conformal_blocks}, 
the conformal block functions $\CobloF_\beta$ were indexed by column-strict Young tableaux.
In~\cite{Karrila-Peltola:Boundary_double-dimer_patterns_and_CFT} it is shown that these two definitions agree.

\bigskip 

The analogous fusion limit property holds for the pure partition functions:

\begin{lemma} \label{lem:PartF_fusion}
Fix $x_1 <\cdots < x_p$ and $\alpha  \in \LP_\multii$. The following limit exists:
\begin{align*}
\lim_{\substack{ \xi_{1},\xi_{2},\ldots,\xi_{\summ_{1}} \to x_1 \\[.2em] \xi_{\summ_1+1},\xi_{\summ_1+2},\ldots,\xi_{\summ_{2}} \to x_2 \\[.1em] \vdots \\[.2em] \xi_{\summ_{p-1}+1},\xi_{\summ_{p-1}+2},\ldots,\xi_{\summ_{p}} \to x_p }} \frac{\PartF_{\imath(\alpha)}(\xi_1,\ldots,\xi_{2N}) }{ \overset{p-1}{\underset{k=0}{\prod}} \; \underset{\summ_k < i < j \leq \summ_{k+1}}{\prod}  (\xi_j - \xi_i)^{1/2} } . 
\end{align*}
Moreover, this limit equals the function $\PartF_{\alpha}(x_1,\ldots,x_p)$ defined in Equation~\eqref{eq:PartF}.
\end{lemma}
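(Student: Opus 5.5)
We propose to prove Lemma~\ref{lem:PartF_fusion} by reducing it to Lemma~\ref{lem:VF_fusion} through linearity. The argument is an exchange of a finite sum with the fusion limit, followed by a combinatorial identity rewriting the unfused pure partition function as a combination of the functions $\sV_\gamma$.

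First, specialize Equation~\eqref{eq:PartF} to the unvalenced case $\multii=(1)^{2N}$. There $\geq_{(1)^{2N}}$ is trivial on comparable pairs and $\#\LP_{\geq \beta}=1$, so
\[
\PartF_{\imath(\alpha)} = \sum_{\beta\in\LP_N,\ \imath(\alpha)\DPleq\beta} (-1)^{|\imath(\alpha)/\beta|}\,\#\CItilingsof(\imath(\alpha)/\beta)\,\CobloF_\beta .
\]

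Second, group the terms $\beta\in\LP_N$ according to their values at the fusion positions $\summ_0,\ldots,\summ_p$. Each $\beta$ with $\imath(\alpha)\DPleq\beta$ determines a unique $\gamma\in\LP_\multii$ with $\imath(\gamma)\leq_\multii\beta$. This $\gamma$ is the maximally "unfolded" path agreeing with $\beta$ at the points $\summ_k$: within each block, the path $\imath(\gamma)$ first goes down and then up, which is the shape of images of $\imath$. The pair also satisfies $\alpha\DPleq\gamma$. We then want to show the coefficient identity
\[
(-1)^{|\imath(\alpha)/\beta|}\#\CItilingsof(\imath(\alpha)/\beta)
=(-1)^{|\alpha/\gamma|}\#\CItilingsof(\alpha/\gamma)\,(\#\LP_{\geq_\multii\gamma})\cdot\frac{(-1)^{|\gamma/\beta|}}{\#\LP_{\geq_\multii\gamma}}\cdot c ,
\]
so that, modulo terms that vanish in the limit, $\PartF_{\imath(\alpha)}=\sum_{\gamma}(-1)^{|\alpha/\gamma|}\#\CItilingsof(\alpha/\gamma)\,\#\LP_{\geq_\multii\gamma}\,\sV_\gamma$.

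This step is the main obstacle. The number of cover-inclusive Dyck tilings of $\imath(\alpha)/\beta$ must factor as the number for $\imath(\alpha)/\imath(\gamma)$ times a sign-compatible count on $\imath(\gamma)/\beta$. We expect this to fail as an exact identity. Instead, we would show that the discrepancy consists of terms whose fusion limit, after dividing by the normalization, is zero. Concretely, the plan is:
\begin{itemize}
\item First, use the known antisymmetry of fusion. A $\CobloF_\beta$ that is symmetric under swapping two adjacent points in the same block produces, after dividing by $(\xi_j-\xi_i)^{1/2}$, a limit that cancels in the alternating sums $\sV_\gamma$; this is \cite[Prop.~3.14]{LPR:Fused_Specht_polynomials_and_c_equals_1_degenerate_conformal_blocks}.
\item Second, use the Dyck-tiling recursion of \cite{KKP:Boundary_correlations_in_planar_LERW_and_UST} (removal of a box at a local minimum) to match the coefficients block by block.
\end{itemize}
The first thing we would try is induction on the number of points fused at a time: fuse two adjacent points $\xi_i,\xi_{i+1}$ and check the identity for a single valence $2$ block, where the tiling bijection is explicit.

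Third, once $\PartF_{\imath(\alpha)}$ is written as a finite linear combination of $\sV_\gamma$ (plus vanishing terms), divide by the normalization and apply Lemma~\ref{lem:VF_fusion} to each $\sV_\gamma$. The limits then exist and equal $\sum_\gamma(-1)^{|\alpha/\gamma|}\#\CItilingsof(\alpha/\gamma)\,(\#\LP_{\geq_\multii\gamma})\,\CobloF_\gamma$, which is exactly $\PartF_\alpha$ by Equation~\eqref{eq:PartF}, completing the proof.
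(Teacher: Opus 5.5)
Your skeleton is the route the paper sketches: expand $\PartF_{\imath(\alpha)}$ as in \eqref{eq:unfused partF}, regroup into the functions $\sV_\gamma$, and apply Lemma~\ref{lem:VF_fusion}. Your fiber observation is also correct: each $\beta$ with $\imath(\alpha)\DPleq\beta$ satisfies $\beta\geq_\multii\imath(\gamma)$ for exactly one $\gamma\in\LP_\multii$ (the blockwise V-shaped minorant), and then $\alpha\DPleq\gamma$.

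However, the one non-trivial step, the regrouping itself, is not carried out. The paper defers this step to \cite{Karrila-Peltola:Boundary_double-dimer_patterns_and_CFT}. Your displayed coefficient identity contains an unspecified $c$, and the factors $\#\LP_{\geq_\multii\gamma}$ cancel in it, so it asserts nothing. The two bullets are a plan, not an argument. What is needed is a fiberwise statement about the Kenyon--Wilson inverse entries,
\[
(-1)^{|\imath(\alpha)/\beta|}\,\#\CItilingsof(\imath(\alpha)/\beta)=(-1)^{|\alpha/\gamma|}\,\#\CItilingsof(\alpha/\gamma)\,(-1)^{|\gamma/\beta|}\qquad\text{for all }\beta\geq_\multii\imath(\gamma),
\]
or else a proof that deviations from it are killed by the normalized limit. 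Nothing in the proposal establishes this.

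Moreover, your mechanism for producing the factor $\#\LP_{\geq_\multii\gamma}$ cannot work. Take $\multii=(1,2,1)$. The only $\alpha$ is $(0,1,1,0)$, with $\imath(\alpha)=(0,1,0,1,0)$ and fiber $\{(0,1,0,1,0),(0,1,2,1,0)\}$, so $\#\LP_{\geq_\multii\alpha}=2$. On the other hand,
\[
\PartF_{\imath(\alpha)}=\CobloF_{(0,1,0,1,0)}-\CobloF_{(0,1,2,1,0)}=\sV_\alpha
\]
holds identically. Both terms blow up like $(\xi_3-\xi_2)^{-1}$ after normalization, with equal leading coefficients. Hence the only combinations with a finite limit are multiples of $\sV_\alpha$, and there are no non-zero ``discrepancy terms with vanishing limit'' to exploit.

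Consequently the limit in the lemma is exactly $\CobloF_\alpha$ of Definition~\ref{def:CobloF}, explicitly $(x_3-x_1)^{1/2}(x_2-x_1)^{-1}(x_3-x_2)^{-1}\neq 0$. By contrast, \eqref{eq:PartF} gives $2\,\CobloF_\alpha$. Similarly, for $\multii=(1,2,2,1)$ one finds the exact identity $\PartF_{(0,1,0,1,0,1,0)}=\sV_{(0,1,1,1,0)}-\sV_{(0,1,3,1,0)}$. It carries no factor $\#\LP_{\geq_\multii(0,1,1,1,0)}=4$.

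So, contrary to your expectation, the regrouping is exact in such cases. The factor $\#\LP_{\geq_\multii\gamma}$ in \eqref{eq:PartF} can only be reconciled through the normalization of the fused blocks, not through cancellations. One way would be to divide the limit in Lemma~\ref{lem:VF_fusion} by $\#\LP_{\geq_\multii\gamma}$. Until this is settled, your final step ``which is exactly $\PartF_\alpha$'' does not follow.
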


\begin{proof}
The full proof is given in~\cite{Karrila-Peltola:Boundary_double-dimer_patterns_and_CFT}. The starting point is Equation~\eqref{eq:PartF} for the \quote{unfused} pure partition functions $\PartF_{\imath(\alpha)}$ (originating from \cite[Theorem~1.5]{Peltola-Wu:Global_and_local_multiple_SLEs_and_connection_probabilities_for_level_lines_of_GFF}):
\begin{align}
\label{eq:unfused partF}
\PartF_{\imath (\alpha)} (\xi_1,\ldots,\xi_{2N}) = \sum_{\substack{\beta \in \LP_N \\ \imath (\alpha) \DPleq \beta }} (-1)^{|\alpha/\beta|} \, \# \CItilingsof ( \alpha / \beta ) \, \CobloF_\beta (\xi_1,\ldots,\xi_{2N}) .
\end{align}
If $\alpha \in \LP_\multii$, then the right-hand side can be non-trivially 
transferred to linear combinations in the smaller set $\{ \sV_{\beta} \colon \beta \in \LP_\multii\}$, whose limits are then handled using Lemma~\ref{lem:VF_fusion}. 
\end{proof}

\subsection{Schramm-Loewner evolution and its variants}
\label{subsec:SLE}

We next introduce $\SLE_\kappa$ and $\SLE_\kappa(\ul{\rho})$ processes~\cite{Rohde-Schramm:Basic_properties_of_SLE,Schramm-Wilson:SLE_coordinate_changes,Schramm:ICM}. 
These are random curves which arise in the context of statistical mechanics, and their geometry and regularity have been studied extensively.

Consider the Loewner differential equation
\begin{align}\label{eq:LDE}
\begin{split}
g_0(z) = \; &  z , \\
\partial_t g_t(z) = \; &  \frac{2}{g_t(z) - W(t)}, \qquad  \; t > 0 ,
\end{split}
\end{align} 
where $t \mapsto W(t)$ is a continuous and real-valued function.  
For each $z \in \ol{\bH}$, the solution to the ODE~\eqref{eq:LDE} exists up until the swallowing time $T_z \coloneqq \inf\{ t \geq 0 \colon g_t(z) - W(t) = 0 \}$ and for each $t \geq 0$, we define the compact $\bH$-hull $K_t = \{ z \in \bH \colon T_z \leq t \}$.  
Then, $g_t$ is a conformal bijection $g_t \colon \bH \setminus K_t \to \bH$ normalized so that $|g_t(z) - z| \to 0$ as $|z| \to \infty$. 
The family of conformal maps $(g_t)_{t \geq 0}$ is called the \emph{Loewner chain} generated by $W$.  
We say that the Loewner chain $(K_t)_{t \geq 0}$ is \emph{generated by a curve} if there is a curve $\gamma \colon [0,\infty) \to \ol{\bH}$ such that $K_t$ is the complement of the unbounded connected component of $\bH \setminus \gamma[0,t]$, for each $t \geq 0$.

To construct chordal $\SLE_\kappa$, one solves the ODE~\eqref{eq:LDE} with $W = \sqrt{\kappa} B$, where $B$ is a standard Brownian motion and $\kappa \geq 0$ a diffusivity parameter\footnote{The geometry of $\eta$ depends heavily on $\kappa$ and the curves exhibit three phases: almost surely, $\eta$ is simple if $\kappa \leq 4$, self-intersecting if $\kappa \in (4,8)$ and space-filling if $\kappa \geq 8$. We will be concerned with the critical case $\kappa=4$.}.  
Then, one obtains a family of compact $\bH$-hulls $(K_t)_{t \geq 0}$, whose associated Loewner chain is generated by a curve $\eta$~\cite{Rohde-Schramm:Basic_properties_of_SLE}. 
This curve is an $\SLE_\kappa$ curve in $\bH$ from $0$ to $\infty$ and write $\eta \sim \SLE_\kappa$ in $(\bH;0,\infty)$.

$\SLE_\kappa$ curves are scale-invariant (due to the scaling property of Brownian motion),
and thus, in other domains they can be defined as conformal images of $\SLE_\kappa$ curves in $\bH$.
Indeed, if $\domain \subsetneq \bC$ is a simply connected domain and $x,y \in \partial \domain$ distinct boundary points,
then $\eta$ is an $\SLE_\kappa$ curve in $(\domain;x,y)$ if and only if $\varphi^{-1}\circ \eta$ is an $\SLE_\kappa$ curve in $(\bH;0,\infty)$, where $\varphi \colon \bH \to \domain$ is a conformal bijection such that $\varphi(0) = x$ and $\varphi(\infty) = y$. 

Moreover, $\SLE_\kappa$ curves are time-reversible in the sense that if $\eta \sim \SLE_\kappa$ in $(\domain;x,y)$ is parametrized as $\eta \colon [0,T] \to \ol{\domain}$ so that $\eta(0)=x$ and $\eta(T)=y$, 
then its time-reversal $\wt{\eta} \colon [0,T] \to \ol{\domain}$ defined as $\wt{\eta}(t) \coloneqq \eta(T-t)$ has the same law as an $\SLE_\kappa$ curve in $(\domain;y,x)$. 
For the special case of $\kappa=4$ of interest to us, this is an immediate consequence of the level-line description of the $\SLE_4$ curve. In general, this is a highly nontrivial result~\cite{Zhan:Reversibility_of_chordal_SLE,Miller-Sheffield:Imaginary_geometry3}.

For more details on $\SLE$s, see, e.g.,~\cite{Lawler:SLE, Kemppainen:SLE_book}, and for their interaction with the GFF, see, e.g.,~\cite{Dubedat:SLE_and_free_field, Powell-Werner:Lecture_notes_on_the_GFF} and the discussion and references in the next Section~\ref{subsec:GFF}.

\bigskip

Fix $x_{\ell}^{\Left} < x_{\ell-1}^{\Left} < \cdots < x_{1}^{\Left} \leq 0 \leq x_{1}^{\Right} < \cdots < x_{r-1}^{R} < x_{r}^{R}$, and write $\ul{x}^\Left = (x_{\ell}^{\Left},\dots,x_{1}^{\Left})$, 
and $\ul{x}^\Right = (x_{1}^{\Right},\dots,x_{r}^{\Right})$,  
and $\ul{\rho}^\Left = (\rho_{1}^{\Left},\dots,\rho_{\ell}^{\Left})$, 
and $\ul{\rho}^\Right = (\rho_{1}^{\Right},\dots,\rho_{r}^{\Right})$, 
where $\rho_{j}^{\bullet} \in \bR$, for $\bullet \in \{ \Left,\Right\}$, are the \emph{weights} of the force points $x_{j}^{\bullet}$.  
Let $W$ be the solution to the SDEs
\begin{align}\label{eq:sle_kappa_rho_driver_sde}
\ud W(t) \; = \; \; & \sqrt{\kappa} \, \ud B(t)
\; + \;  \sum_{j = 1}^\ell \frac{\rho_{j}^{\Left} \, \ud t }{W(t) - V_{j}^{\Left}(t)}
\; + \;  \sum_{j=1}^r \frac{\rho_{j}^{\Right} \, \ud t}{W(t) - V_{j}^{\Right}(t)} , \\
\ud V_{j}^{\bullet}(t) \; = \; \; & \frac{2 \, \ud t}{V_{j}^{\bullet}(t) - W(t)} , \quad V_{j}^{\bullet}(0) = x_{j}^{\bullet}, \quad j\in \{1,\dots,N^\bullet\}, \; \bullet \in \{\Left,\Right\}, \nonumber
\end{align}
where $N^\Left = \ell$ and $N^\Right = r$ are the numbers of force points on the left and right side of the origin, respectively. 
By solving the ODE~\eqref{eq:LDE} with $W$ (for the maximal time interval where a solution makes sense, see below), 
we obtain a Loewner chain $(g_t)_{t \geq 0}$ and an associated family of compact $\bH$-hulls $(K_t)_{t \geq 0}$, which is also generated by a curve; namely
an $\SLE_\kappa(\ul{\rho}^\Left;\ul{\rho}^\Right)$ curve in $(\bH;0,\infty)$ with force points $\ul{x}^\Left, \ul{x}^\Right$.  
It is a natural generalization of $\SLE_\kappa$ involving boundary points $x_{j}^{\bullet}$ which either attract (if $\rho_{j}^{\bullet} < 0$) or repel (if $\rho_{j}^{\bullet} > 0$) the curve --- see Lemma~\ref{lem:force_points}. 
These $\SLE_\kappa(\ul{\rho}^\Left;\ul{\rho}^\Right)$ curves are absolutely continuous with respect to $\SLE_\kappa$ curves away from the boundary and, as such, exhibit the same dependence on $\kappa$.

As hinted above, the above SDEs might not necessarily make sense for infinite time and thus the $\SLE_\kappa(\ul{\rho}^\Left;\ul{\rho}^\Right)$ will not reach the target point at $\infty$. 
(For example, the points $V_{j}^{\bullet}(t)$ can collide with each other, and upon doing so, merge and never separate, as they follow the same ODE.) 
Indeed, the solution to Equation~\eqref{eq:sle_kappa_rho_driver_sde} exists until the \emph{continuation threshold}
\begin{align}\label{eq:continuation_threshold}
T \coloneqq \inf\Big\{ t \geq 0 \colon \textnormal{ either } \rhoSum{m(t)}^{\Left} \leq -2 
\; \textnormal{ or } \rhoSum{n(t)}^{\Right} \leq -2 \Big\} ,
\end{align}
where $m(t) \coloneqq \max\{j \colon V_{j}^{\Left}(t) = W(t)\}$ and $n(t) \coloneqq \max\{j \colon V_{j}^{\Right}(t) = W(t)\}$, and
\begin{align*} %\label{eq:sum_of_weights}
\rhoSum{j}^{\bullet} \coloneqq \sum_{k = 0}^j \rho_{k}^{\bullet}, \quad j\in \{1,\dots,N^\bullet\},\ \bullet \in \{ \Left,\Right\} ,
\end{align*} 
with the convention that $\rho_{0}^{\Left} = \rho_{0}^{\Right} = 0$, $x_{0}^{\Left} = 0^-$, $x_{0}^{\Right} = 0^+$, $x_{\ell+1}^{\Left} = -\infty$, $x_{r+1}^{\Right}= +\infty$. 
In other words, we may and do extend the solutions of the SDEs~\eqref{eq:sle_kappa_rho_driver_sde} over times when $V_{j}^{\bullet}(t) - W(t)$ hits zero, as long as $\rhoSum{m(t)}^{\Left} $ and $ \rhoSum{n(t)}^{\Right}$ remain at most $-2$. Geometrically, at such a time $t$, the curve $\eta$ hits the interval
$(x_{m(t)+1}^{\Left},x_{m(t)}^{\Left})$~or $(x_{n(t)}^{\Right},x_{n(t)+1}^{\Right})$, and for all times after such a boundary-hitting time, the force points swallowed by the curve remain coalesced: $V_{1}^{L}(t+s) = \cdots  = V_{m(t)}^{L}(t+s)$ or $V_{1}^{R}(t+s) = \cdots  = V_{n(t)}^{R}(t+s)$ for $s \geq 0$.

Like chordal $\SLE_\kappa$, we can define $\SLE_\kappa(\ul{\rho}^\Left;\ul{\rho}^\Right)$ in other domains by conformal invariance. 
If $\domain \subsetneq \bC$ is a simply connected domain, $x,y \in \partial \domain$ distinct boundary points,
and $\ul{x}^\Left = (x_{\ell}^{\Left},\dots,x_{1}^{\Left})$ and $\ul{x}^\Right = (x_{1}^{\Right},\dots,x_{r}^{\Right})$
respectively points on the clockwise and counterclockwise boundary arcs from $x$ to $y$, 
then $\eta \sim \SLE_\kappa(\ul{\rho}^\Left;\ul{\rho}^\Right)$ in $(\domain;x,y)$ with force points $(\ul{x}^\Left;\ul{x}^\Right)$ 
if and only if  
$\varphi^{-1}\circ \eta \sim \SLE_\kappa(\ul{\rho}^\Left;\ul{\rho}^\Right)$ in $(\bH;0,\infty)$, 
where $\varphi \colon \bH \to \domain$ is a conformal bijection
such that $\varphi(0) = x$ and $\varphi(\infty) = y$,
and the force points of the curve $\varphi^{-1}\circ \eta$ are given by $\varphi^{-1}(x_{\ell}^{\Left}),\dots,\varphi^{-1}(x_{1}^{\Left})$
and $\varphi^{-1}(x_{1}^{\Right}),\dots,\varphi^{-1}(x_{r}^{\Right})$.
Note that if there is more than one force point, then we cannot control the positions of the force points under the conformal map --- the definition of $\SLE_\kappa(\ul{\rho}^\Left;\ul{\rho}^\Right)$ involves nontrivial conformal moduli, which are encoded in the evolution of the force points in Equation~\eqref{eq:sle_kappa_rho_driver_sde}, 
or more precisely, to the $\SLE_\kappa$ \emph{partition function}~\cite{Schramm-Wilson:SLE_coordinate_changes,Kytola:On_CFT_of_SLE_kappa_rho,Lawler:Partition_functions_loop_measure_and_versions_of_SLE}
(compare with Equation~\eqref{eq:CobloF_ms} when $\kappa=4$)
\begin{align*}
\prod_{1 \leq i < j \leq \ell} (x_{i}^{\Left} - x_{j}^{\Left})^{\frac{\rho_{i}^{\Left} \rho_{j}^{\Left}}{2 \kappa}}
\prod_{1 \leq i < j \leq r} (x_{j}^{\Right} - x_{i}^{\Right})^{\frac{\rho_{i}^{\Right} \rho_{j}^{\Right}}{2\kappa}}
\prod_{\substack{1 \leq i \leq \ell \\ 1 \leq j \leq r}} (x_{j}^{\Right} - x_{i}^{\Left})^{\frac{\rho_{j}^{\Right} \rho_{i}^{\Left} }{2\kappa}}
\prod_{1 \leq i \leq \ell} (x_{i}^{\Left})^{\frac{\rho_{i}^{\Left}}{\kappa}}
\prod_{1 \leq j \leq r} (x_{j}^{\Right})^{\frac{\rho_{j}^{\Right}}{\kappa}} .
\end{align*}
Here, one could interpret the factors in the last two products as the distances of the force points to the starting point $0$ of the curve, 
which has weight $\rho_0 \coloneqq 2$. 

\bigskip

From now on, we shall focus on the case of $\kappa=4$, which is of primary interest from the GFF point of view.
By the results in~\cite[Lemma~2.1]{Miller-Wu:Intersections_of_SLE_paths:_the_double_and_cut_point_dimension_of_SLE} (see also~\cite[Lemma~15]{Dubedat:Duality_of_SLE} and~\cite[Remark~5.3~\&~Theorem~1.3]{Miller-Sheffield:Imaginary_geometry1}), we can describe precisely the interaction of $\SLE_4(\ul{\rho}^\Left;\ul{\rho}^\Right)$ curves with the boundary in terms of the weights of the force points.

\begin{lemma}\label{lem:force_points}
Consider $\eta \sim \SLE_4(\ul{\rho}^\Left;\ul{\rho}^\Right)$ in $(\bH;0,\infty)$ with force points $(\ul{x}^\Left;\ul{x}^\Right)$. Then,
\begin{enumerate}[label=(\roman{*}), ref=(\roman{*})]
\item if $\rhoSum{j}^{\Right} \geq 0$, then $\eta$ almost surely does not hit $(x_{j}^{\Right},x_{j+1}^{\Right})$;

\smallskip

\item if $\rhoSum{j}^{\Right} \in (-2,0)$, then $\eta$ can hit and bounce off of $(x_{j}^{\Right},x_{j+1}^{\Right})$, while $\eta \cap (x_{j}^{\Right},x_{j+1}^{\Right})$ still has empty interior, almost surely.
\end{enumerate}
The same holds if we replace \quote{$\Right$} by \quote{$\Left$} and consider $(x_{j+1}^{\Left},x_{j}^{\Left})$.
\end{lemma}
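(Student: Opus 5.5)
The plan is to reduce to the one-force-point case by absolute continuity, and then settle that case with a Bessel-process computation. Fix $j$ and look at $\eta$ near the interval $I = (x_{j}^{\Right},x_{j+1}^{\Right})$. The only time this interval can be reached is after the curve has swallowed $x_1^\Right,\dots,x_j^\Right$. At that moment those force points have merged into a single point $V_1^\Right = \cdots = V_j^\Right$ of total weight $\rhoSum{j}^\Right$.

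Step 1: localize. Stop $\eta$ at a time $\tau$ when it is close to $I$ but has not yet swallowed $x_{j+1}^\Right$. Map out $K_\tau$ with $g_\tau$. By the domain Markov property of $\SLE_4(\ul{\rho})$, the remaining curve is again an $\SLE_4(\ul{\rho})$, with force points $V^\bullet_k(\tau)$.

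Step 2: compare with a simpler process. On the event that the future curve stays in a neighbourhood of $I$, away from the other force points, change measure by a Girsanov argument. The density is the ratio of the $\SLE_4$ partition functions displayed before the lemma, and it is bounded and bounded away from zero on that event. So, locally, the law is mutually absolutely continuous with an $\SLE_4(\rhoSum{j}^\Right)$ with one force point at $V_j^\Right(\tau)$. This is the standard argument behind the statement that weights of merged force points add, as in \cite[Section~2]{Miller-Sheffield:Imaginary_geometry1}.

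Step 3: analyse the one-force-point case. For $\SLE_4(\rho)$ with a single force point $V$, the process $X_t = (W_t - V_t)/2$ is a Bessel process of dimension
\[
\delta = 1 + \frac{2(\rho+2)}{\kappa} = 2 + \frac{\rho}{2} \qquad (\kappa = 4).
\]
The curve hits the boundary arc beyond $V$ exactly when $X$ hits $0$.
\begin{itemize}
\item If $\rho \geq 0$, then $\delta \geq 2$, so $X$ never hits zero. This gives (i).
\item If $\rho \in (-2,0)$, then $\delta \in (1,2)$, so $X$ hits zero and is instantaneously reflecting. The curve therefore touches $I$ and bounces off.
\end{itemize}

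For the empty-interior claim in (ii), note that the set of times at which $X$ is zero has Lebesgue measure zero. The corresponding boundary points $g_t^{-1}$-images form a set whose image under the Loewner flow is the zero set of a Bessel process. That zero set has Hausdorff dimension strictly less than $1$, so it cannot contain an interval.

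Alternatively, for $\kappa=4$ one can get both parts from the level-line coupling in \cite{Wang-Wu:Level_lines_of_Gaussian_free_field_I}. The curve is a level line of height $0$, and the boundary data on $I$ is $\lambda(1+\rhoSum{j}^\Right)$. Level lines avoid boundary arcs whose height lies on the wrong side, and touching occurs exactly when $|\text{data}|<\lambda$. A level line cannot run along a boundary arc because the field has nonzero boundary data there.

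The left-hand case follows by reflecting $z \mapsto -\bar z$.

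The main obstacle is Step 2: making the absolute continuity rigorous uniformly near the endpoints of $I$, and handling the continuation threshold when $\rhoSum{j}^\Right$ is near $-2$. Since the lemma is taken from \cite[Lemma~2.1]{Miller-Wu:Intersections_of_SLE_paths:_the_double_and_cut_point_dimension_of_SLE}, in practice I would cite it and only verify the Bessel dimension computation.
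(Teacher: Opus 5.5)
Your fallback, citing \cite[Lemma~2.1]{Miller-Wu:Intersections_of_SLE_paths:_the_double_and_cut_point_dimension_of_SLE}, is exactly what the paper does. The paper gives no proof of this lemma; it simply imports it, with pointers to \cite[Lemma~15]{Dubedat:Duality_of_SLE} and \cite[Remark~5.3~\&~Theorem~1.3]{Miller-Sheffield:Imaginary_geometry1}. Your Bessel dimension $\delta=2+\rho/2$ is also correct.

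The self-contained sketch, however, is not a proof, and it misplaces the difficulty. The one-force-point case is a textbook Bessel computation; the entire content of the lemma lies in Step~2, which fails as written.
\begin{itemize}
\item \emph{The force points have not merged.} Since $W\le V_1^{\Right}\le\cdots\le V_j^{\Right}$, at the first hitting time of $(x_j^{\Right},x_{j+1}^{\Right})$ the driving function collides with $V_j^{\Right}$ and simultaneously with every $V_k^{\Right}$, $k<j$, squeezed in between. The point $x_j^{\Right}$ has not been merged with the earlier force points before that instant. So the approach to the interval is governed by the unmerged configuration, not by a single force point of weight $\rhoSum{j}^{\Right}$.
\item \emph{Staying away from the force points gives no control.} Your density is a function of $W_t$ and the $V_k^{\Right}(t)$. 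All of $V_1^{\Right}(t),\dots,V_j^{\Right}(t)$ tend to $W_t$, so the curve staying away from the force points in $\bH$ does not bound it.
\item \emph{The displayed ratio is not the density.} The ratio of the partition functions displayed before the lemma is not the Radon--Nikodym derivative: the factors $g_t'(x_k^{\Right})^{\rho_k^2/16}$ are missing. In general it also degenerates through factors such as $(V_l^{\Right}-V_k^{\Right})^{\rho_k\rho_l/8}$, whose bases all tend to $0$.
\item \emph{What is actually needed.} Bounding the true martingale ratio up to the hitting time requires a real cancellation of these factors against the $g_t'$ factors. That cancellation rests on $\tfrac{1}{16}(\rhoSum{j}^{\Right})^2=\sum_{k\le j}\tfrac{1}{16}\rho_k^2+\sum_{k<l\le j}\tfrac18\rho_k\rho_l$, together with control of the pinching geometry.
\end{itemize}
This cannot be waved through. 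A density behaving like a power of $V_j^{\Right}-W$ is exactly the kind of change of measure that turns $\SLE_4$ into $\SLE_4(\rho)$, and it changes whether the boundary is hit. The natural localization is spatial, near the hitting point in the original domain (e.g.\ via absolute continuity of the GFF). There only the boundary value $\lambda(1+\rhoSum{j}^{\Right})$ on the interval is felt. The cited references take care of exactly this step.

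The empty-interior part of (ii) has a separate gap. The Hausdorff dimension of the Bessel zero set concerns the set of \emph{times} at which $\eta$ lies on $\bR$. It says nothing directly about the \emph{spatial} set $\eta\cap(x_j^{\Right},x_{j+1}^{\Right})$, since continuous maps can raise dimension. What does work is topological. Suppose this set contained a closed interval $J$. Since $\SLE_4(\ul{\rho})$ curves are simple, $\eta^{-1}(J)$ would be homeomorphic to $J$, i.e.\ a nondegenerate time interval on which $\eta$ runs along $\bR$. The hulls would then not grow, contradicting the strict increase of half-plane capacity; equivalently, the Bessel zero set has empty interior.

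Likewise, your GFF heuristic that a level line cannot run along the arc ``because the field has nonzero boundary data there'' is not a reason. The data $\lambda(1+\rhoSum{j}^{\Right})$ vanishes when $\rhoSum{j}^{\Right}=-1$, which lies in the range of (ii). More generally, that paragraph restates the conclusion rather than proving it.
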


\subsection{Gaussian free field and local sets}
\label{subsec:GFF}

Fix a domain $\domain \subsetneq \bC$. 
Let $C_0^\infty(\domain)$ denote the space of smooth functions on $\domain$ with compact support. 
The \emph{Dirichlet inner product} 
is defined by\footnote{We note that for functions that vanish on the boundary $\partial \domain$, this definition agrees with the regularized Dirichlet norm of Equation~\eqref{eq: regularized Dirichlet norm}, and hence there is no confusion in having the same notation.} 
\begin{align*}
(f,g)_\nabla \coloneqq \int_\bC \nabla f(z) \cdot \nabla g(z) \, \ud z, \qquad f,g \in C_0^\infty(\domain), 
\end{align*}
where $\ud z$ is the two-dimensional Lebesgue measure.  
We denote by $H_0(\domain)$ the Hilbert space closure of $C_0^\infty(\domain)$ with respect to $(\cdot,\cdot)_\nabla$, let $(e_n)_{n \geq 1}$ be a $(\cdot,\cdot)_\nabla$-orthonormal basis of $H_0(\domain)$ and $(\zeta_n)_{n \geq 1}$ a sequence of independent 
standard normal random variables. Then, the zero-boundary \emph{Gaussian free field} (GFF) $\Phi$ on $\domain$ is defined as 
\begin{align} \label{eq:GFF:sum}
\Phi \coloneqq \sum_{n \geq 1} \zeta_n e_n.
\end{align}
The partial sums of~\eqref{eq:GFF:sum} converge in the Sobolev space $H^s(\bC)$ for each $s < 0$, and hence $\Phi$ is realized as a random distribution. Its law is independent of the choice of the basis. % $(e_n)_{n \geq 1}$.

It follows immediately from the conformal invariance of $(\cdot,\cdot)_\nabla$ that if $\varphi \colon \domain \to \wt{\domain}$ is a conformal bijection, then $\Phi \circ \varphi^{-1}$ is a zero-boundary GFF in $\wt{\domain}$, that is, the zero-boundary GFF is conformally invariant.
Another important property of the zero-boundary GFF is the \emph{domain Markov property}. That is, if we fix some open $O \subset \domain$, then we can write $\Phi = \Phi^{\domain \setminus O} + \Phi_{\domain \setminus O}$, where $\Phi^{\domain \setminus O}$ is a zero-boundary GFF on $O$, and $\Phi_{\domain \setminus O}$ is a distribution which agrees with $\Phi$ on $\domain \setminus O$ and is harmonic on $O$ and is independent of $\Phi^{\domain \setminus O}$.

Lastly, consider a real-valued function $f \colon \partial \domain \to \bR$ and let $u$ be the harmonic function on $\domain$ such that $u|_{\partial \domain} = f$. Then, the \emph{GFF with boundary data} $f$ is defined to be $\Phi + u$.

For more details on the GFF, see, e.g.,~\cite{Sheffield:GFF_for_mathematicians, Berestycki-Powell:Gaussian_free_field_and_Liouville_quantum_gravity}.

\subsubsection{Local sets}

We now turn our attention to \quote{local sets} of the GFF: 
random subsets of the domain of the field which are coupled with the field in a Markovian way, and have been studied extensively --- see e.g.~\cite{Aru-Sepulveda:Two-valued_local_sets_of_2D_continuum_GFF_connectivity_labels_and_induced_metrics,
ASW:BTLS,
ALS:First_passage_sets_of_the_2D_continuum_GFF,
ALS:First_passage_sets_of_2D_GFF}.

For a domain $\domain \subset \bC$ we denote by $\sK(\domain)$ the family of closed subsets of $\ol{\domain}$.
We shall view $(\sK(\domain), \disthaus)$ as a metric space endowed with the Hausdorff distance 
\begin{align}\label{eq:hausdorff_distance}
\disthaus(A,B) = \inf\{ \epsilon > 0 \colon B \subset A(\epsilon), \ A \subset B(\epsilon) \} , \qquad A,B \in \sK(\domain) ,
\end{align}
where $A(\epsilon)$ and $B(\epsilon)$ denote the Euclidean $\epsilon$-neighborhoods of $A$ and $B$.

Let $\Phi$ be a zero-boundary GFF in a domain $\domain$. A local set of $\Phi$ is a random variable which takes values in $\sK(\domain)$ with respect to which $\Phi$ satisfies a strong Markov property:

\begin{definition}\label{def:local_set}
A random set $A \in \sK(\domain)$ is a \emph{local set} of $\Phi$ if there is a coupling of $\Phi$, $A$, and a field $\Phi_A$ such that $\Phi_A$ restricts to a harmonic function on $\domain \setminus A$ and, conditionally on $(\Phi_A,A)$, 
the field $\Phi^A \coloneqq \Phi - \Phi_A$ is a zero-boundary GFF in $\domain \setminus A$.
\end{definition}

Given a local set coupling $(\Phi,A,\Phi_A)$, we denote by $h_A$ the function that is harmonic and equal to $\Phi_A$ in $\domain \setminus A$, and $0$ on $A$. We also denote by $\mathscr{F}_A$ the $\sigma$-algebra generated by $(A,\Phi_A)$.

\begin{lemma}[{See, e.g.,~\cite[Lemma~2.2]{Aru-Sepulveda:Two-valued_local_sets_of_2D_continuum_GFF_connectivity_labels_and_induced_metrics}}]\label{lem:local_set_properties}
Let $\Phi$ be a zero-boundary GFF on a domain $\domain \subsetneq \bC$.
\begin{enumerate}
\item\label{it:uniqueness_local_set} 
Any local set can be coupled in a unique way with a GFF: Let $(\Phi,A,\Phi_A,\wh{\Phi}_A)$ be a coupling where $(\Phi,A,\Phi_A)$ and $(\Phi,A,\wh{\Phi}_A)$ satisfy Definition~\ref{def:local_set}. Then, almost surely, $\Phi_A = \wh{\Phi}_A$.

\smallskip

\item\label{it:conditionally_independent_union} 
If $A$ and $B$ are local sets coupled with the same GFF $\Phi$ and $(A,\Phi_A)$ and $(B,\Phi_B)$ are conditionally independent given $\Phi$, then $A \cup B$ is also a local set coupled with $\Phi$. Additionally, $B \setminus A$ is a local set of $\Phi^A$ with $(\Phi^A)_{B \setminus A} = \Phi_{A \cup B} - \Phi_A$.

\smallskip

\item\label{it:increasing_local_sets} 
Suppose $(\Phi,A_n)$ is a local set coupling for every $n \in \bN$, and for some $k \in \bN$ independent of $n$, almost surely $A_n \cup \partial \domain$ has no more than $k$ connected components. 
Then $(\Phi,A_n,\Phi_{A_n})$ is tight and any subsequential limit is a local set coupling. Moreover, if in addition 
the sets $A_n$ are non-decreasing in $n$, then $A \coloneqq \ol{\bigcup_n A_n}$ is also a local set and $\Phi_{A_n} \to \Phi_{A}$ in probability as $n \to \infty$, in $H^{-1-\epsilon}(\domain)$ for bounded $\domain$ and in $H_\loc^{-1-\epsilon}(\domain)$ otherwise, for any $\epsilon>0$.

\smallskip

\item\label{it:decreasing_local_sets} 
Suppose that $(\Phi,A_n)$ is a local set coupling for every $n \in \bN$, and the sets $A_n$ are decreasing in $n$. 
Then $A \coloneqq \bigcap_n A_n$ is also a local set and $\Phi_{A_n} \to \Phi_{A}$ almost surely as $n \to \infty$. 
\end{enumerate}
\end{lemma}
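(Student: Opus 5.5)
The plan is to follow the standard theory of local sets due to Schramm--Sheffield and Miller--Sheffield, as streamlined by Aru--Sep\'ulveda. The main tool is the characterization of local sets through deterministic approximations. For $n \in \bN$, let $A_n$ be the union of closed dyadic squares of side $2^{-n}$ that intersect $A$. Then $A_n$ takes countably many values and decreases to $A$. For a deterministic closed set $C$, the domain Markov property gives $\Phi = \Phi^{C} + \Phi_{C}$, where $\Phi_C$ is a measurable function of $\Phi$. One first checks that $A$ is local if and only if, for every deterministic $C$ in this countable family, on the event $\{A_n = C\}$ the restriction $\Phi^{C}$ is a zero-boundary GFF on $\domain\setminus C$, conditionally independent of $(\Phi_C, \{A_n=C\})$. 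In that case $\Phi_{A_n}$ is given on $\{A_n = C\}$ by $\Phi_C$.

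\textbf{Item (1).} On $\{A_n=C\}$ both candidate fields $\Phi_A$ and $\wh\Phi_A$ are harmonic on $\domain\setminus C$. Their difference equals $\wh\Phi^A-\Phi^A$, and both of these are centered zero-boundary GFFs given the respective data. Hence, for a test function $f$ supported in $\domain\setminus C$, we get $\EX[(\Phi,f)\giv \mathscr F]=(\Phi_A,f)=(\wh\Phi_A,f)$, where $\mathscr F$ is the $\sigma$-algebra generated by $(A, \Phi_A, \wh\Phi_A)$. Applying this with $f$ ranging over a countable dense family, and letting $n\to\infty$, gives $\Phi_A=\wh\Phi_A$ on $\domain\setminus A$. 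On $A$ itself, one uses that $\Phi_A$ agrees with $\Phi$ there, in the distributional sense, since $\Phi^A$ is supported off $A$.

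\textbf{Item (2).} By conditional independence, the joint conditional law of $(A,B)$ given $\Phi$ factorizes. Working on the events $\{A_n=C,\ B_n=C'\}$, the conditional law of $\Phi^{C\cup C'}$ given $(\Phi_{C\cup C'}, A_n, B_n)$ is that of a zero-boundary GFF, because each event depends on $\Phi$ only through the relevant projection. Letting $n\to\infty$ via item (4) shows that $A\cup B$ is local. The identity $(\Phi^A)_{B\setminus A}=\Phi_{A\cup B}-\Phi_A$ follows from the Markov property of $\Phi^A$ in $\domain\setminus A$ together with the uniqueness in (1).

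\textbf{Item (4).} For decreasing $A_n$, on every compact $K\subset\domain\setminus A$ we have $K\cap A_n=\emptyset$ for all large $n$. The harmonic functions $h_{A_n}$ form a reverse martingale in $n$ when tested against fixed $f$, since $\mathscr F_{A_n}$ decreases. They therefore converge almost surely, and the limit satisfies Definition~\ref{def:local_set} for $A$ by passing to the limit in the conditional characteristic functions $\EX[e^{i(\Phi^{A_n},f)}\giv\mathscr F_{A_n}]=e^{-\frac12\mathrm{Var}_{\domain\setminus A_n}(f)}$. Green's functions converge monotonically, which gives $\mathrm{Var}_{\domain\setminus A_n}(f)\to\mathrm{Var}_{\domain\setminus A}(f)$.

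\textbf{Item (3).} Tightness of $(\Phi, A_n, \Phi_{A_n})$ comes from compactness of $(\sK(\domain),\disthaus)$ together with $H^{-1-\epsilon}$ bounds on $\Phi_{A_n}=\Phi-\Phi^{A_n}$, both terms having uniformly controlled second moments. The real issue is that local-set couplings are closed under Hausdorff limits. Here the assumption of at most $k$ connected components is essential. It guarantees that harmonic measure and Green's functions of $\domain\setminus A_n$ converge to those of $\domain\setminus A$, so that the conditional-law identity passes to the limit; without it, thin sets such as dust could vanish in the limit. In the monotone case the limit is identified, and convergence in probability follows because $\Phi_{A_n}$ is then a forward martingale, again by testing against fixed $f$.

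I expect this stability of Green's functions under Hausdorff convergence, with boundedly many components, to be the main obstacle. I would handle it using Beurling-type estimates: a connected set of positive diameter is hit by Brownian motion started nearby with high probability.
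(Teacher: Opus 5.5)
The paper gives no proof of this lemma. It is imported from the local-set literature (Aru--Sep\'ulveda, building on Schramm--Sheffield and Miller--Sheffield), so the comparison is with the standard arguments there. Your outline has the usual overall architecture, but two load-bearing steps fail as written.

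\textbf{Item (1) is circular.} You use $\EX[(\Phi,f)\giv\sigma(A,\Phi_A,\wh{\Phi}_A)]=(\Phi_A,f)$, which requires $\Phi^A$ to stay centered after you also condition on $\wh{\Phi}_A$. The definition gives centering only conditionally on $(A,\Phi_A)$, and it gives the analogous statement for $\wh{\Phi}^A$ only conditionally on $(A,\wh{\Phi}_A)$. Centering under the joint conditioning amounts to saying that $\wh{\Phi}_A$ carries no information about $\Phi^A$ beyond $(A,\Phi_A)$, which is exactly what (1) asserts.

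The missing idea is that the GFF part is a canonical functional of $(\Phi,A)$ alone.
\begin{enumerate}
\item For $g\in C_0^\infty(\domain\setminus A)$ one has $(\Phi,\Delta g)=(\Phi^A,\Delta g)=(\wh{\Phi}^A,\Delta g)$, since $\Phi_A$ and $\wh{\Phi}_A$ are harmonic there.
\item For $f\in C_0^\infty(\domain)$, let $g^A\in H_0(\domain\setminus A)$ solve $-\Delta g^A=f$. Choose $g_k\in C_0^\infty(\domain\setminus A)$, measurably in $A$, with $\|g_k-g^A\|_\nabla\to0$.
\item The conditional law given $(A,\Phi_A)$ gives $(\Phi,-\Delta g_k)\to(\Phi^A,f)$ in $L^2(\PR)$. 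The conditional law given $(A,\wh{\Phi}_A)$ shows that the same sequence converges to $(\wh{\Phi}^A,f)$.
\end{enumerate}
Hence $\Phi^A=\wh{\Phi}^A$ as distributions on all of $\domain$, which also handles the set $A$ itself. The same fact is what you need to justify your preliminary claim that $\Phi_{A_n}=\Phi_C$ on $\{A_n=C\}$.

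\textbf{Item (4) has no reverse martingale.} The $\sigma$-algebras $\mathscr F_{A_n}=\sigma(A_n,\Phi_{A_n})$ need not decrease when the sets $A_n$ decrease. Your own dyadic approximations show this. $A_n$ is a function of $A_{n+1}$, so $\sigma(A_n)\subset\sigma(A_{n+1})$. Moreover, if $A$ is independent of $\Phi$ (hence trivially local), $A_{n+1}$ is in general not determined by $(A_n,\Phi_{A_n})$.

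Passing to a genuinely decreasing filtration such as $\sigma\big((A_m,\Phi_{A_m})_{m\ge n}\big)$ would require $\Phi^{A_n}$ to remain a centered GFF given it. That conditional-independence property does not follow from the hypotheses, which only make each pair $(\Phi,A_n)$ local. The same objection applies in three further places:
\begin{itemize}
\item passing to the limit in conditional characteristic functions relative to these non-nested $\sigma$-algebras;
\item the ``forward martingale'' in the monotone part of (3);
\item item (2), since you obtain it from (4) via $A_n\cup B_n\downarrow A\cup B$, so the gap propagates there.
\end{itemize}
What is needed is an argument that treats the pairs $(\Phi,A_n)$ one at a time. One option is the Schramm--Sheffield characterization of locality through the conditional probabilities $\PR[A\cap U=\emptyset\giv\Phi]$. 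These interact well with $\{A\cap K=\emptyset\}=\bigcup_n\{A_n\cap K=\emptyset\}$ for compact $K$, and can be combined with the canonical identification of $\Phi_A$ described above.
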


\begin{remark}\label{rmk:union}
The local sets that we consider will always be determined by the GFF $\Phi$. 
Namely, any two such local sets will be conditionally independent given $\Phi$, and hence their union will be a local set of $\Phi$ (by Item~\ref{it:conditionally_independent_union} of Lemma~\ref{lem:local_set_properties}).
\end{remark}

\begin{remark}\label{rmk:boundary_data}
For a local set $A$ of $\Phi$, we say that $\Phi+u$ has boundary data $g \colon \partial A \to \bR$ on $\partial A$ if $h_A + u$ has boundary values given by $g$ on $\partial A$. We further note that Item~\ref{it:conditionally_independent_union} of  Lemma~\ref{lem:local_set_properties} implies that if two local sets $A$ and $B$ are conditionally independent, then the boundary values of $h_{A \cup B} + u$ agree with those of $h_A + u$ on $A \setminus B$ and with those of $h_B + u$ on $B \setminus A$.
\end{remark}

\subsubsection{Level lines}\label{subsubsec:level_lines}

An important class of local sets is that of \emph{level lines} of the GFF. The coupling of level lines is detailed in the following result, which 
uses the theory of $\SLE_4$ curves with force points from Section~\ref{subsec:SLE} (and we refer to the notation there, especially Equation~\eqref{eq:sle_kappa_rho_driver_sde}).

\begin{theorem}[{\cite{Wang-Wu:Level_lines_of_Gaussian_free_field_I}}] \label{thm:level_lines_coupling}
Fix a vector $(\ul{\rho}^\Left;\ul{\rho}^\Right)$ of weights, and let $(K_t)_{t \geq 0}$  be the hulls and $(f_t \coloneqq g_t - W(t))_{t \geq 0}$ the centered Loewner chain 
of an $\SLE_4(\ul{\rho}^\Left;\ul{\rho}^\Right)$ in $(\bH;0,\infty)$ with force points $(\ul{x}^\Left;\ul{x}^\Right)$. 
Let $\Phi$ be a zero-boundary GFF on $\bH$ and $u_t$ the bounded harmonic function on $\bH$ with boundary values 
\begin{align*}
u_t(x) \coloneqq 
\begin{cases}
-\lambda(1+\rhoSum{j}^{\Left}) , & x \in \big[f_t(x_{j+1}^{\Left}),f_t(x_j^{\Left})\big) , \\
\phantom{-}\lambda(1 + \rhoSum{j}^{\Right}) ,  & x \in \big(f_t(x_j^{\Right}),f_t(x_{j+1}^{\Right})\big],
\end{cases}
\end{align*}
with the convention that $\rho_0^{\Left} = \rho_0^{\Right} = 0$, $x_0^{\Left} = 0^-$, $x_{\ell+1}^{\Left} = -\infty$, $x_0^{\Right} = 0^+$ and $x_{r+1}^{\Right} = +\infty$.

If $\tau$ is any stopping time for the $\SLE_4(\ul{\rho}^\Left;\ul{\rho}^\Right)$ process which almost surely occurs before the continuation threshold (see Equation~\eqref{eq:continuation_threshold}), then $K_\tau$ is a local set of $\Phi$ and the conditional law of $(\Phi + u_0)|_{\bH \setminus K_\tau}$ given $K_\tau$ is equal to the law of $(\Phi + u_\tau) \circ f_\tau$.

In particular, one can couple $\eta \sim \SLE_4(\ul{\rho}^\Left;\ul{\rho}^\Right)$ and $\Phi$ such that $\eta[0,\tau]$ is a local set of $\Phi$, such that the boundary data of $(\Phi + u_0)$ on the left \textnormal{(}resp.~right\textnormal{)} side of $\eta[0,\tau]$ equals $-\lambda$ \textnormal{(}resp.~$\lambda$\textnormal{)}.
\end{theorem}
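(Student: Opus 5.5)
The plan is to follow the standard martingale approach of Schramm–Sheffield and Dubédat. First build the coupling by hand: sample the curve up to $\tau$, then add an independent GFF in the complement. Then check that the resulting field is a GFF with boundary data $u_0$. Local-set-ness of $K_\tau$ is then automatic, and the boundary values on the two sides of the curve can be read off from $u_t$.

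\emph{Step 1: explicit harmonic function.} For $z \in \bH \setminus K_t$, write $\eta_t(z) \coloneqq (u_t \text{ extended harmonically}) \circ f_t(z)$. Since $u_t$ is piecewise constant, this is a finite combination of arguments:
\[
\eta_t(z) = \lambda - \tfrac{2\lambda}{\pi} \arg f_t(z) + \sum_{\bullet,j} c^{\bullet}_j \, \arg\big(f_t(z) - f_t(x^{\bullet}_j)\big) ,
\]
with constants $c^\bullet_j$ proportional to $\lambda \rho^\bullet_j/\pi$. The Loewner flow gives
\[
\ud f_t(z) = \frac{2\,\ud t}{f_t(z)} - \ud W(t), \qquad \ud f_t(x^\bullet_j) = \frac{2\,\ud t}{f_t(x^\bullet_j)} .
\]
Apply It\^o's formula to $\im \log f_t(z)$ and to $\im \log\big(f_t(z) - f_t(x^\bullet_j)\big)$, with $W$ as in~\eqref{eq:sle_kappa_rho_driver_sde}. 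The It\^o correction from $\kappa = 4$ should cancel exactly against the drift $2/f_t$. The drift terms $\rho^\bullet_j/(W - V^\bullet_j)$ should cancel against the cross terms coming from the force-point arguments. Both cancellations rely on the choice $\lambda = \sqrt{\pi/8}$. The outcome should be that $t \mapsto \eta_t(z)$ is a continuous local martingale for each fixed $z$, with
\[
\ud\langle \eta_\cdot(z), \eta_\cdot(w)\rangle_t = -\,\ud G_t(z,w) ,
\]
where $G_t$ is the Dirichlet Green's function of $\bH \setminus K_t$. The Green's function identity follows from the Hadamard variation formula, since $G_t(z,w) = -\tfrac{1}{2\pi}\log\big|\tfrac{f_t(z)-f_t(w)}{f_t(z)-\ol{f_t(w)}}\big|$ up to normalization, and both sides are then computed explicitly in terms of $f_t$.

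\emph{Step 2: integrate against test functions.} For $\phi \in C_0^\infty(\bH)$, set $M_t \coloneqq (\eta_t, \phi)$ and $E_t \coloneqq \iint \phi(z)\, G_t(z,w)\, \phi(w) \, \ud z\, \ud w$. Then $M_t$ is a local martingale with $\langle M\rangle_t = E_0 - E_t$. Moreover, $\eta_t$ is bounded uniformly by $\lambda\big(1 + \sum |\rho^\bullet_j|\big)$ as long as the process runs, so $M$ is a bounded martingale up to $\tau$ and $\exp\big(i M_t + \tfrac12\langle M\rangle_t\big)$ is a bounded martingale there as well. Optional stopping at $\tau$ then gives
\[
\EX\Big[e^{i(\eta_\tau,\phi) - \frac12 E_\tau}\Big] = e^{i(\eta_0,\phi) - \frac12 E_0} .
\]
Now define the field as $\eta_\tau$ plus an independent zero-boundary GFF in $\bH\setminus K_\tau$, whose conditional characteristic function is $e^{-\frac12 E_\tau}$. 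By the identity above, this field has the characteristic functional of $\Phi + u_0$, which identifies the joint law. Definition~\ref{def:local_set} holds by construction, with $\Phi_{K_\tau} = \eta_\tau - u_0$. Conformal invariance of the zero-boundary GFF then gives the description as $(\Phi + u_\tau)\circ f_\tau$.

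\emph{Step 3: the final claim, and the main obstacle.} With $\rho_0 = 0$, the boundary values of $u_t$ immediately to the left and right of $0$ are $-\lambda$ and $\lambda$. Pulling back by $f_\tau$ shows that the two sides of $\eta[0,\tau]$ carry data $\mp\lambda$. I expect the delicate step to be making the It\^o computation and the optional stopping rigorous across times when $W$ collides with force points, i.e.\ boundary-hitting times strictly before the continuation threshold. There the drift $\rho/(W-V)$ is singular and $W$ is only a semimartingale after the appropriate Bessel-type analysis. I would first localize with stopping times $\tau_\epsilon$ at which $|W - V^\bullet_j| \geq \epsilon$ for the non-coalesced points, run the argument there, and pass to the limit $\epsilon \to 0$. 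The limit should be justified by uniform boundedness of $\eta_t$, continuity of $t \mapsto \eta_t(z)$, and monotonicity of $E_t$, together with Lemma~\ref{lem:force_points}, which guarantees that the curve does not swallow intervals of positive length.
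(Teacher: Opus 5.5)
Your route is the standard Schramm--Sheffield/Dub\'edat martingale argument, and it is what the cited result of Wang--Wu rests on; the paper itself gives no proof. As long as the curve does not touch $\bR$ before $\tau$, your argument works essentially as you describe.

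In the generality of the statement, however, Step~2 fails. If some partial sum $\rhoSum{j}^{\bullet}$ lies in $(-2,0)$, the curve hits $\bR$ before the continuation threshold (Lemma~\ref{lem:force_points}(ii)). At each such time $T$, the region $O$ enclosed by $\eta[0,T]$ and a segment of $\bR$ is cut off from $\infty$ and absorbed into the hull. Hence $K_\tau\neq\eta[0,\tau]$, and $K_\tau$ can have positive area. Lemma~\ref{lem:force_points} only says that $\eta$ meets the boundary intervals in a set with empty interior; it does not rule out swallowing.

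With $G_t$ the Green's function of $\bH\setminus K_t$, take $z,w\in O$. Then $G_t(z,w)\to G_O(z,w)>0$ as $t\uparrow T$, while $G_T(z,w)=0$ since $z,w\in K_T$. So $E_t$ jumps down at $T$ while $\langle M\rangle$ is continuous, and $\langle M\rangle_t=E_0-E_t$ is false. Correspondingly, $\eta_\tau$ plus a GFF on $\bH\setminus K_\tau$ has no fluctuations inside the swallowed regions. It is therefore not a GFF with boundary data $u_0$, and your optional-stopping identity fails for test functions that charge those regions.

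The fix is to work relative to $\bH\setminus\eta[0,t]$ instead of the hull. For $z$ in a cut-off component, $\eta_t(z)$ is a bounded martingale up to $T_z$. Identify its limit with the harmonic function of that component carrying the inherited boundary data, and freeze it afterwards. Take $G_t$ to be the Green's function of the whole open set $\bH\setminus\eta[0,t]$; this is continuous across disconnection times, so $\langle M\rangle_t=E_0-E_t$ does hold. Then add independent zero-boundary GFFs in \emph{every} component of $\bH\setminus\eta[0,\tau]$, which makes $\eta[0,\tau]$ local. Locality of $K_\tau$ then follows, with $\Phi_{K_\tau}$ equal to the field itself on the cut-off components. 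The reason is that, given $\eta[0,\tau]$, the fields in different components are independent, and the harmonic part on the unbounded component is a function of $K_\tau$. For the weights in $2\bZ$ used later in the paper, partial sums are even integers, so hitting never occurs before the threshold and your version already suffices there.

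Some smaller slips:
\begin{itemize}
\item Since $f_t(x_j^\bullet)=V_j^\bullet-W$, you have $\ud f_t(x_j^\bullet)=2\,\ud t/f_t(x_j^\bullet)-\ud W$. What has no martingale part is the difference $f_t(z)-f_t(x_j^\bullet)=g_t(z)-V_j^\bullet$.
\item The value of $\lambda$ plays no role in the drift cancellations. These use only $\kappa=4$ and linearity in $\lambda\rho_j^\bullet$; $\lambda$ enters only through $16\lambda^2/\pi^2=2/\pi$ in the covariation identity.
\item Your $\epsilon$-localization only reaches the first collision of $W$ with a force point. What you need instead is that $W$ stays a continuous semimartingale with $\int_0^t|W-V_j^\bullet|^{-1}\,\ud s<\infty$ while the relevant partial sums exceed $-2$ (Bessel dimension $>1$). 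Given that, It\^o's formula applies directly.
\end{itemize}
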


\begin{definition}\label{def:level_line}
We say that an $\SLE_4(\ul{\rho}^\Left;\ul{\rho}^\Right)$ process $\eta$, coupled with $\Phi + u$ as in Theorem~\ref{thm:level_lines_coupling} is a \emph{level line} of $\Phi + u$ starting from $0$.  
Moreover, for $q \in \bR$, we say that $\eta$ is the level line of $\Phi + u$ of height $q$ if $\eta$ is coupled as a level line of the field $\Phi + u - q$. 
Note that this is natural, because then the boundary data of $\eta$ with respect to $\Phi + u$ is $\pm \lambda + q$.
\end{definition}

\begin{remark}\label{rmk:height_definition_difference}
The definition of a level line of height $q$ varies in the literature. 
For instance, in~\cite{Wang-Wu:Level_lines_of_Gaussian_free_field_I} the level line of $\Phi + u$ of height $q$ is the level line of $\Phi + u + q$.
\end{remark}

By conformal invariance, it is possible to couple $\SLE_4(\ul{\rho}^\Left;\ul{\rho}^\Right)$ curves with GFF in other domains, as level lines. 
Indeed, consider a simply connected domain $\domain$ with distinct boundary points $x,y \in \partial \domain$. 
Write $x_0^{\Left} = x_0^{\Right} = x$ and $x_{\ell+1}^{\Left} = x_{r+1}^{\Right} = y$, and let $\ul{x}^\Left = (x_1^{\Left},\dots,x_{\ell}^{\Left})$ (resp.~$\ul{x}^\Right = (x_1^{\Right},\dots,x_r^{\Right})$) be a vector of points ordered in the clockwise (resp.~counterclockwise) order on the clockwise (resp.~counterclockwise) boundary arc of $\partial \domain$ from $x$ to $y$.
Also, let $\ul{\rho}^\Left = (\rho_1^{\Left},\dots,\rho_{\ell}^{\Left})$ and $\ul{\rho}^\Right = (\rho_{1}^{\Right},\dots,\rho_r^{\Right})$ be vectors of real numbers.  
Let $\Phi$ be a zero-boundary GFF on $\domain$ and let $u$ be a harmonic function on $\domain$ with boundary values
\begin{align*}
u(x) \coloneqq 
\begin{cases}
-\lambda(1+\rhoSum{j}^{\Left}) , & x \in [x_{j+1}^{\Left},x_j^{\Left}) , \\
\phantom{-}\lambda(1 + \rhoSum{j}^{\Right}) ,  & x \in (x_j^{\Right},x_{j+1}^{\Right}] .
\end{cases}
\end{align*} 
Then, $\eta \sim \SLE_4(\ul{\rho}^\Left;\ul{\rho}^\Right)$ in $(\domain;x,y)$ is a level line of $\Phi+ u$ 
if $\varphi(\eta)$ is a level line of the GFF $(\Phi + u) \circ \varphi^{-1}$,
where $\varphi \colon \domain \to \bH$ is a conformal bijection such that $\varphi(x) = 0$ and $\varphi(y) = \infty$.

Importantly, level lines coupled with $\Phi$ as in Theorem~\ref{thm:level_lines_coupling} are almost surely determined by $\Phi$, i.e., they are measurable with respect to $\Phi$~\cite[Theorem~1.1.2]{Wang-Wu:Level_lines_of_Gaussian_free_field_I}. 
Moreover, any level line is continuous up to and including the continuation threshold~\eqref{eq:continuation_threshold}, 
by~\cite[Theorem~1.1.3]{Wang-Wu:Level_lines_of_Gaussian_free_field_I}. 
Let $\eta_q^x$ denote the (unique) level line of $\Phi + u$ of height $q \in \bR$ started at $x \in \bR$.
Two level lines started at points $x_1 \leq x_2$ interact nontrivially~\cite[Theorem~1.1.4]{Wang-Wu:Level_lines_of_Gaussian_free_field_I}:
\begin{itemize}[leftmargin=*]
\item If $q_1 < q_2$, then $\eta_{q_1}^{x_1}$ almost surely stays to the left of $\eta_{q_2}^{x_2}$.
\item If $q_1 = q_2$, then $\eta_{q_2}^{x_2}$ can intersect $\eta_{q_1}^{x_1}$ and upon doing so, they merge and never separate. 
\end{itemize}

\begin{remark}\label{rmk:conditional_law_level_lines}
Since level lines are determined by $\Phi$, it follows that they are conditionally independent given $\Phi$.
Therefore, for any collection $\eta_{q_1}^{x_1},\dots,\eta_{q_n}^{x_n}$ of level lines, it follows from Lemma~\ref{lem:local_set_properties} and Theorem~\ref{thm:level_lines_coupling} that the conditional law of $\eta_{q_1}^{x_1}$ given $\eta_{q_2}^{x_2},\dots,\eta_{q_n}^{x_n}$, which we identify with $A \coloneqq \eta_{q_2}^{x_2} \cup \cdots \cup \eta_{q_n}^{x_n}$, 
is that of a level line of $\Phi + u$ of height $q_1$ from $x_1$ of the GFF $\Phi^{A} + h_{A} + u$.
We will use this fact repeatedly in the present work. 
\end{remark}

Finally, we note that there is a natural way to explore a level line from the endpoint to the starting point, that is, the time-reversed level line, as detailed in~\cite[Theorem~1.1.6]{Wang-Wu:Level_lines_of_Gaussian_free_field_I}. 
Namely, let $\eta$ be the level line of $\Phi + u$ from $0$ to $\infty$ and let $\wt{\eta}$ be the level line of $-(\Phi + u)$ from $\infty$ to $0$.
Then, on the event that the two curves do not hit the continuation threshold before reaching the target point, the two curves $\eta$ and $\wt{\eta}$ are equal, as sets.

\subsubsection{First passage sets}

Another important class of local sets is that of the first passage sets of $\Phi$. 
Intuitively, for $a \in \bR$, the first passage set $\bA_{-a}^u(\Phi) = \bA_{-a}(\Phi+u)$ of $\Phi$ comprises 
points which can be reached by travelling only along points where $\Phi + u$ take values at least $-a$. 

\begin{definition}\label{def:fps}
The (upper) first passage set (FPS) of level $-a < 0$ of $\Phi + u$ is the local set $\bA = \bA_{-a}^u(\Phi) = \bA_{-a}(\Phi+u)$ which contains $\partial \domain$ and satisfies the following properties.
\begin{enumerate}[label=(\roman{*}), ref=(\roman{*})]
\item Inside each component $O$ of $\domain \setminus \bA$, the harmonic function $h_{\bA} + u$ equals $-a$ on $\partial O \setminus \partial \domain$ and $u$ on $\partial O \cap \partial \domain$, in such a way that $h_{\bA} + u \leq -a$ on $O$. 

\smallskip

\item $\Phi_{\bA} - h_{\bA} \geq 0$, i.e., for any positive test function $f \in C_0^\infty(\domain)$, we have $(\Phi_{\bA} - h_{\bA},f) \geq 0$.

\smallskip

\item For any connected component $O$ of $\domain \setminus \bA$, for any $\epsilon > 0$, and for any $z \in \partial O \cap \partial \domain$ and all sufficiently small $\delta > 0$, almost surely, we have 
\begin{align*}
h_{\bA}(z) + u(z) \geq \min\Big\{ \!\!-a, \inf_{w \in B(z,\delta) \cap \ol{O}} u(z) \Big\} - \epsilon.
\end{align*}

\smallskip
	
\item Almost surely, $\bA$ contains no isolated points and each connected component of $\bA$ that does not intersect $\partial \domain$ has a neighborhood that does not intersect any other connected component of $\bA$. 
\qedhere 
\end{enumerate}
\end{definition}

\begin{definition}\label{def:frontier}
Let $O_1,O_2,\dots$ be the connected components of $\domain \setminus \bA_{-a}^u$ sharing a boundary segment with $\partial \domain$. Then, we say that the \emph{frontier} of the FPS $\bA_{-a}^u$ is the set 
\begin{align*}
\bigcup_{j \geq 1} \partial O_j \setminus \partial \domain .
\end{align*}
\end{definition}

\begin{theorem}[{\cite[Theorem~4.3]{ALS:First_passage_sets_of_the_2D_continuum_GFF}}] \label{thm:fps_existence_uniqueness}
The FPS $\bA_{-a}^u(\Phi)$ of level $-a$ exists and is unique in the sense that if $(\Phi,A)$ is a local set coupling such that $A$ is a FPS of level $-a$ for $\Phi + u$, then $A = \bA_{-a}^u(\Phi)$ almost surely.
\end{theorem}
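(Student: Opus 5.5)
The plan is to construct the first passage set as an increasing limit of two-valued local sets, and to obtain uniqueness by a two-sided inclusion argument based on properties (i)–(ii) of Definition~\ref{def:fps}. Throughout, $u$ is taken piecewise constant (the case needed here, with finitely many jumps); general bounded $u$ would be handled at the end by monotone approximation from above and below.

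\emph{Existence.} For $b>0$ large, I would first build the two-valued local set $A_{-a,b}$ of $\Phi+u$. This is a local set on whose complement $h_{A}+u$ takes only the values $-a$ and $b$ near $A$, except on boundary arcs where $u$ already lies outside $[-a,b]$. I would construct it iteratively from level lines as in Theorem~\ref{thm:level_lines_coupling}.
\begin{itemize}
\item Start a level line of suitable height from each boundary jump point of $u$.
\item In each complementary component, the new boundary data are again piecewise constant with values $\pm\lambda$ shifted by the height (Remark~\ref{rmk:conditional_law_level_lines}).
\item Continue with level lines of heights $-a+\lambda$ and $b-\lambda$ (and intermediate heights, in steps of $2\lambda$, when $a+b$ is a multiple of $2\lambda$). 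General $a,b$ would use SLE$_4(\rho)$ level lines with appropriate force points.
\item The countable iteration converges by Lemma~\ref{lem:local_set_properties}\ref{it:increasing_local_sets}.
\end{itemize}
The key monotonicity to prove is $A_{-a,b}\subset A_{-a,b'}$ for $b<b'$. I would get it by coupling both sets with the same $\Phi$ and using the ordering of level lines of different heights (the line of smaller height stays to the left). Then I would set $\bA \coloneqq \ol{\bigcup_{b} A_{-a,b}}$. This is a local set by Lemma~\ref{lem:local_set_properties}\ref{it:increasing_local_sets}, with $\Phi_{A_{-a,b}}\to\Phi_{\bA}$ in probability.

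\emph{Verifying Definition~\ref{def:fps} for the limit.}
\begin{itemize}
\item Property (i): a component $O$ of $\domain\setminus\bA$ is eventually a component for all large $b$ with value $-a$. The components with value $b$ shrink away, because the probability that a fixed point lies in such a component tends to zero as $b\to\infty$. This follows from a martingale estimate: $h_{A_{-a,b}}+u$ evaluated at that point has bounded expectation.
\item Property (ii): on $A_{-a,b}$ the measure $\Phi_{A}-h_{A}$ is obtained from the $b$-labelled pieces and is nonnegative in the limit.
\item Properties (iii)–(iv): these should follow from the explicit boundary behaviour of the level lines (Lemma~\ref{lem:force_points}) together with local finiteness of components.
\end{itemize}

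\emph{Uniqueness.} Let $A$ be any local set satisfying the FPS properties and coupled with the same $\Phi$. Since the constructed sets are measurable with respect to $\Phi$, we have conditional independence, so $A\cup A_{-a,b}$ is local (Remark~\ref{rmk:union}).
\begin{itemize}
\item First inclusion, $A_{-a,b}\subset A$ for every $b$: explore $A_{-a,b}$ inside each component $O$ of $\domain\setminus A$, where the field is a GFF with data $\le -a$. A two-valued exploration at levels $-a,b$ starting from data $-a$ is trivial, so by Lemma~\ref{lem:local_set_properties}\ref{it:conditionally_independent_union} it adds nothing inside $O$.
\item Reverse inclusion: compare the harmonic functions. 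Property (ii) for $A$ forces $\Phi_{A}$ to dominate the harmonic extension, and property (i) for $\bA$ pins the values at $-a$. Integrating the difference $h_{A\cup\bA}-h_{\bA}$ against test functions, the difference has zero mean and a fixed sign, hence vanishes. This gives $A=\bA$.
\end{itemize}

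The main obstacle I expect is this last sign/zero-expectation argument. It needs integrability of $\Phi_A$ against test functions near $\partial\domain$, and care with property (iii) at boundary arcs where $u>-a$. I would first try to localize it: work in a compact subdomain and pass to the limit using Lemma~\ref{lem:local_set_properties}\ref{it:decreasing_local_sets}.
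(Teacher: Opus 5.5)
The paper does not prove this statement; it quotes it from \cite{ALS:First_passage_sets_of_the_2D_continuum_GFF}, so your sketch has to stand on its own. The existence half is a sensible outline. The one-point mean computation shows $\PR[z \text{ lies in a $b$-labelled component}]=O(1/b)$, which gives (i). Nonnegativity of the $b$-labelled part of the thin fields $\Phi_{A_{-a,b}}=h_{A_{-a,b}}$ gives (ii). Property (iii) is left open.

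The uniqueness half, however, has a genuine gap already in the first inclusion $A_{-a,b}\subset A$. Item~\ref{it:conditionally_independent_union} of Lemma~\ref{lem:local_set_properties} only says that $B\setminus A$, with $B=A_{-a,b}$, is \emph{some} local set of $\Phi^A$ with field $\Phi_{A\cup B}-\Phi_A$. It does not say that $B\cap O$ is the two-valued set of the conditional field $\Phi^A+h_A+u$ in $O$. The standard way to get such a restriction property is to explore the first set, continue with two-valued sets in its complementary components, and identify the union by uniqueness of two-valued sets. That route is unavailable here: $A$ is not thin ($\Phi_A-h_A$ is a nonzero positive measure), so $A\cup(\cdots)$ is not a two-valued set of $\Phi$. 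What you actually have to show on each component $O$ is the following.
\begin{enumerate}
\item $\Phi_{A\cup B}-\Phi_A$ is a nonnegative distribution on $O$. The boundary values of $h_{A\cup B}+u$ on $B$ lie in $\{-a,b\}$ and so dominate $h_A+u\le -a$. You also need that $\Phi_{A\cup B}$ puts no singular mass on $B\cap O$, i.e.\ thinness of $B$ relative to $\Phi^A$, which does not come for free.
\item It has zero conditional mean given $\mathscr{F}_A$, hence vanishes.
\item This forces $B\cap O=\emptyset$.
\end{enumerate}

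That last implication is the second, more serious gap, and it is exactly where your reverse inclusion stops: vanishing of the field increment does not give equality of sets. Your argument never uses property (iv) of Definition~\ref{def:fps}, yet uniqueness is false without it. Fix $z_0\in\domain$ and take e.g.\ $u\equiv 0>-a$, so that $h_{\bA}+u\equiv-a$ on every component. Then $\bA\cup\{z_0\}$ is a local set with $\Phi_{\bA\cup\{z_0\}}=\Phi_{\bA}$, it satisfies (i)--(iii), and it passes both of your inclusion tests, but it differs from $\bA$.

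The missing step is as follows. From $\Phi_A=\Phi_{\bA}$ and $\mathrm{Var}[(\Phi,f)\giv\mathscr{F}_A]=G_{\domain\setminus A}(f,f)$, deduce $G_{\domain\setminus A}=G_{\domain\setminus\bA}$, i.e.\ $A\setminus\bA$ is polar. Since nondegenerate continua have positive capacity, every point of $A\setminus\bA$ is then a one-point component of $A$ away from $\partial\domain$. The two clauses of (iv) together exclude this. In the first inclusion, connectedness of $B\cup\partial\domain$ plays the same role.

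Note also that the sign you invoke is only available on $\domain\setminus\bA$. On $\bA$, both $\Phi_A$ and $\Phi_{\bA}$ carry positive singular parts whose difference has no a priori sign. So the zero-mean argument must be run conditionally on $\mathscr{F}_{\bA}$, with test functions supported off $\bA$ and using that $A\setminus\bA$ is a local set of $\Phi^{\bA}$. An unconditional expectation over all of $\domain$ does not work.
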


\begin{lemma}[Monotonicity]\label{lem:fps_monotonicity}
If $a \leq \wt{a}$ and $u \leq \wt{u}$, then $\bA_{-a}^u \subset \bA_{-\wt{a}}^{\wt{u}}$ almost surely.
\end{lemma}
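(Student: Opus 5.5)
The plan is to use the uniqueness statement in Theorem~\ref{thm:fps_existence_uniqueness}, together with the standard coupling of first passage sets for two different levels and boundary data. By translation of the field, $\bA_{-a}^u(\Phi) = \bA_{0}(\Phi + u + a)$. So it suffices to show that raising the effective boundary data, from $u+a$ to $\wt u + \wt a$, enlarges the FPS of level $0$. Set $v \coloneqq (\wt u + \wt a) - (u + a)$. This is a harmonic function with nonnegative boundary values, hence $v \geq 0$ in $\domain$.

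The key step is to construct $\bA_{-\wt a}^{\wt u}$ by first exploring $A \coloneqq \bA_{-a}^u$ and then continuing inside its complement. Condition on $(A, \Phi_A)$. In each component $O$ of $\domain \setminus A$, the field $\Phi + \wt u + \wt a$ restricted to $O$ equals $\Phi^A + h_A + u + a + v$. Its harmonic part $h_A + u + a + v$ has the following boundary values: $v \geq 0$ on $\partial O \setminus \partial \domain$, and $\wt u + \wt a$ on $\partial O \cap \partial \domain$.

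Inside each such $O$ we take the FPS of level $0$ of $\Phi^A$ with this boundary data, and call it $B_O$. Its existence is given by Theorem~\ref{thm:fps_existence_uniqueness} in the domain $O$. The candidate set is $A' \coloneqq A \cup \bigcup_O B_O$. By Lemma~\ref{lem:local_set_properties}\ref{it:conditionally_independent_union}, applied iteratively, and by Item~\ref{it:increasing_local_sets} to handle countably many components, $A'$ is a local set of $\Phi$. By construction it contains $A$.

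It then remains to verify that $A'$ satisfies properties (i)--(iv) of Definition~\ref{def:fps} for level $-\wt a$ and data $\wt u$. Once this is done, uniqueness in Theorem~\ref{thm:fps_existence_uniqueness} gives $A' = \bA_{-\wt a}^{\wt u}$ almost surely, and hence $A \subset \bA_{-\wt a}^{\wt u}$.

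\begin{itemize}
\item Property (i) holds because every component of $\domain \setminus A'$ is a component of some $O \setminus B_O$, where (i) holds by the definition of $B_O$.
\item Property (ii), nonnegativity of $\Phi_{A'} - h_{A'}$, splits into two parts. On $A$ the measure $\Phi_A - h_A$ is nonnegative. Inside $O$, the increment $\Phi^A_{B_O} - h_{B_O}$ is nonnegative by the FPS property in $O$. The correction term coming from the harmonic function $h_A + u + a + v$ is absorbed because its boundary values on $\partial O \setminus \partial \domain$ are $v \geq 0$, rather than $0$.
\item Property (iii) near $\partial\domain$ follows from (iii) for $B_O$, because $\wt u \geq u$.
\item Property (iv) follows from the corresponding properties of $A$ and the $B_O$.
\end{itemize}

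I expect property (ii) to be the main obstacle. The difficulty is that the data on $\partial O \setminus \partial\domain$ is $v$, which need not be constant, so one must check that the measure $\Phi_{A'} - h_{A'}$ carries no negative mass on $\partial A \cap \partial O$. This is where $v\ge 0$ is used.

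An alternative route, if this bookkeeping becomes messy, is to approximate by the metric graph GFF. There the FPS is literally the set of points connected to the boundary through $\{\Phi + u \geq -a\}$, so the inclusion is trivial. One would then pass to the limit using the Hausdorff convergence of~\cite{ALS:First_passage_sets_of_2D_GFF}, which preserves inclusions under a monotone coupling of the fields.
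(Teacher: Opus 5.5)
The paper does not prove this lemma; it is quoted without proof from the first passage set literature (Aru--Lupu--Sep\'ulveda), so I am judging your argument on its own. The overall strategy is natural: explore $A=\bA_{-a}^u$, continue inside the complementary components, and conclude by uniqueness for the data $(\wt u,\wt a)$. However, the key step of your main route is not supported by what you cite.

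\emph{Main gap: existence of $B_O$.} The set $B_O$ is meant to be the FPS of level $0$ in $O$ for the harmonic function $g_O\coloneqq h_A+u+a+v$. The boundary values of $g_O$ on $\partial O\cap\domain$ are $v$. Unless $\wt u-u$ is constant, $v$ is a non-constant positive harmonic function. So this boundary data is not piecewise constant, and it lives on the random fractal boundary of $O$. In this paper, Theorem~\ref{thm:fps_existence_uniqueness} is only ever applied to bounded harmonic functions with piecewise constant boundary values, and the whole FPS framework (including Proposition~\ref{prop:metric_graph_gff_fps_convergence}) is set up for that class. The existence of $B_O$ therefore needs an FPS existence theory for continuous, non-constant data, which you neither state nor prove. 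Without it, $A'$ is not yet defined. Only when $\wt u-u$ is constant is the data on $\partial O\cap\domain$ constant.

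\emph{Secondary issues: the checks of Definition~\ref{def:fps}.} These misplace where $v\geq 0$ enters.
\begin{itemize}
\item Property (ii) needs no sign condition. $\Phi_{A'}$ and $h_{A'}$ refer to the zero-boundary field, so
\[
\Phi_{A'}-h_{A'}=(\Phi_A-h_A)+\sum_O\big((\Phi^A|_O)_{B_O}-h^A_{B_O}\big),
\]
with no correction term coming from $g_O$. The only thing to check is that the positive measures in each $O$ carry no hidden mass on $\partial O$. This follows by testing against $f\chi_n$ with $\chi_n\uparrow 1_O$.
\item The sign of $v$ is needed in (i). For a component $W$ of $O\setminus B_O$, property (i) for $B_O$ only says that $h_{A'}+\wt u+\wt a$ has boundary value $v$ on $\partial W\cap\partial O\cap\domain$. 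But (i) for $A'$ demands the value $0$ there, since these points lie in $\domain$. What forces this part of $\partial W$ to have zero harmonic measure is $v\geq 0$ (indeed $v>0$ in $\domain$ unless $v\equiv 0$) together with $h_{A'}+\wt u+\wt a\leq 0$ in $W$. Your one-line check of (i) skips exactly this.
\item Property (iii) for $A'$ does not follow from (iii) for $B_O$ and $\wt u\geq u$ alone. The bound for $B_O$ is in terms of $g_O$ near $z$, which involves $h_A$ and $v$, and these are delicate near jump points.
\item Locality of $A'$ comes from the gluing property of local sets ($A$ local for $\Phi$, and $B$ local for $\Phi^A$ given $\mathscr{F}_A$). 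It does not come from Item~\ref{it:conditionally_independent_union} of Lemma~\ref{lem:local_set_properties}.
\end{itemize}

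\emph{Your fallback route avoids all of this and is essentially complete.}
\begin{itemize}
\item Reduce to a bounded domain by conformal invariance.
\item Choose discretized data $u^n\leq\wt u^n$.
\item The metric graph FPS are pathwise nested, since $(\wt\phi^n+u^n)|_P\geq -a$ implies $(\wt\phi^n+\wt u^n)|_P\geq -\wt a$.
\item Apply Proposition~\ref{prop:metric_graph_gff_fps_convergence} to both data under the same coupling; as stated, it applies to any coupling with $\wh\phi^n\to\Phi$ in probability.
\item Pass to an almost surely convergent subsequence and use that Hausdorff limits preserve inclusions.
\end{itemize}
Taking that convergence result as a black box, as the paper does in Section~\ref{sec:mGFF}, this proves the lemma.
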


\subsection{Frontiers of the GFF first passage sets (FPS)}\label{subsec:Frontiers}

\begin{figure}
\includegraphics[width=.6\textwidth]{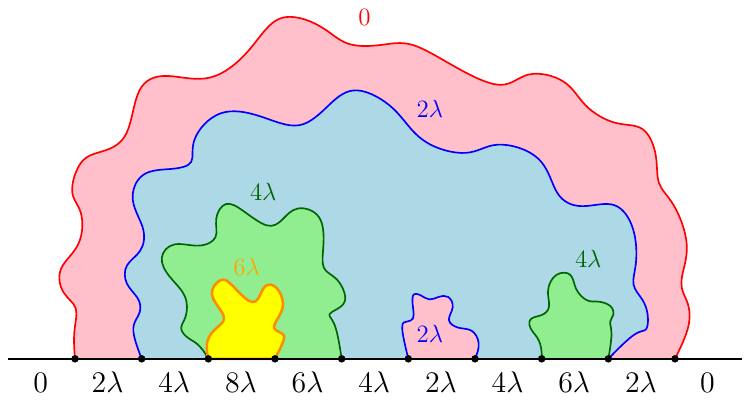}
\caption{Illustration of the first passage sets $\bA_0^u,\dots,\bA_{6\lambda}^u$, where $u = u_\beta$ and $\beta = (0,1,2,4,3,2,1,2,3,1,0)$. 
The filling of the following sets is given in the following color: $\bA_{6\lambda}^u$ in yellow, $\bA_{4\lambda}^u \setminus \bA_{6\lambda}^u$ in green, $\bA_{2\lambda}^u \setminus \bA_{4\lambda}^u$ in blue and $\bA_0^u \setminus \bA_{2\lambda}^u$ in red. 
The connected components of the frontiers, i.e., the curves $\curve_j^k$, are the darker boundaries of the filled sets.}
\label{fig:exploration}
\end{figure}

Let $\Phi$ be a zero-boundary GFF on $\bH$, and fix $\beta \in \GLP_\multii$ and boundary points 
\begin{align*}
-\infty = x_0 < x_1 < x_2 < \cdots < x_p < x_{p+1} = +\infty . 
\end{align*} 
Let $u \colon \ol{\bH} \to \bR$ be the bounded harmonic function in $\bH$ with piecewise constant boundary values obtained from height gaps being even multiples of $\lambda \coloneqq \sqrt{\pi/8}$, 
encoded in $\beta$ as
\begin{align}\label{eq:harmonic_function2}
u(x) = u_\beta(x) \coloneqq 
\begin{cases}
0 , & x \in (-\infty,x_1) \cup (x_p,+\infty), \\
2\beta_j \lambda ,  & x\in (x_j,x_{j+1}), \ j \in \{1,\dots,p\}. 
\end{cases}
\end{align}
The first passage set $\bA_{2k\lambda}^u$ is non-trivial only for $0 \leq k \leq \maxu-1$, where $\maxu = \maxu_\beta = \tfrac{1}{2\lambda} \max u$, and if $N_k$ is the number of connected components of the set $u^{-1}(\{x \in \bR \; | \; u(x) > 2k\lambda \})$, then the frontier of $\bA_{2k\lambda}^u$ consists of $N_k$ curves, which we denote by $\curve_1^k,\dots,\curve_{N_k}^k$. By the monotonicity of first passage sets (Lemma~\ref{lem:fps_monotonicity}), 
different curves $\smash{\curve_{j_1}^{k_1}}$ and $\smash{\curve_{j_2}^{k_2}}$ cannot cross each other --- 
from~\cite[Lemma~3.8]{Aru-Sepulveda:Two-valued_local_sets_of_2D_continuum_GFF_connectivity_labels_and_induced_metrics} 
it thus follows that they can only intersect at their starting points or endpoints if $k_1 \neq k_2$ 
(since the height difference of two such FPS is at least $2\lambda$).  
Moreover, we have $N_0+\cdots+N_{\maxu - 1} = N$ (where $N$ is given by~\eqref{eq: definition of N}). 
We let 
\begin{align}\label{eq:cc_of_FPS_front}
S := \{ \curve_j^k \; | \; k \in \{0, 1,\ldots,\maxu - 1\}, \ j \in \{1,\ldots,N_k\} \} 
\end{align}
denote the collection of the connected components of the frontiers of the first passage sets $(\bA_{2k\lambda}^u)_{k \in \bZ}$. 
If $\maxu =1$, the curves $(\curve_j^0)_{j=1,\ldots,N_0}$ are multiple $\SLE_4$ curves and form the frontier of the first passage set $\smash{\bA_0^u}$.  
This process is well-known, and results analogous to ours were obtained in~\cite{Peltola-Wu:Global_and_local_multiple_SLEs_and_connection_probabilities_for_level_lines_of_GFF}. 
We shall now see that, for each $k \in \{0,\ldots,\maxu - 1\}$, the curves $(\curve_j^k)_{j=1,\dots,N_k}$ are actually the level lines of $\Phi + u_\beta$ with height $(2k+1)\lambda$. 
This is essentially the content of~\cite[Proposition~5.12]{ALS:First_passage_sets_of_2D_GFF}, but we provide a proof in the continuum.

In what follows, we will use the following notation\footnote{For a function $f \colon \bR \to \bR$ and $x \in \bR$, we write $f(x^+) = \lim_{y \nearrow x} f(y)$ and $f(x^-) = \lim_{y \searrow x} f(y)$.} 
(see Figure~\ref{fig:ABnotation}):
\begin{itemize}[leftmargin=2em]
\item $\DC(u) = \{x_1,\dots,x_p \}$ is the set of discontinuity points of $u = u_\beta$ as in Equation~\eqref{eq:harmonic_function2};
\smallskip
\item $\smash{\PJ_m = \PJ_m(u)= \{x_1^m < \cdots < x_{n_m}^m\} = \{ x \in \DC(u) \, | \, u(x^-) \leq 2(m-1)\lambda < u(x^+) \}}$;
\smallskip
\item $\smash{\NJ_m = \NJ_m(u)= \{y_1^m <\cdots < y_{n_m}^m\} = \{ y \in \DC(u) \, | \, u(y^+) < 2m\lambda \leq u(y^-) \}}$;
\smallskip
\item $\smash{\PJ = \PJ(u) = \bigcup_{m=1}^{\maxu} \PJ_m}$ and $\smash{\NJ = \NJ(u) = \bigcup_{m=1}^{\maxu} \NJ_m}$, where $\maxu =\maxu_\beta = \tfrac{1}{2\lambda} \max u_\beta$.
\end{itemize}

\begin{figure}[h!]
\includegraphics[width=.6\textwidth]{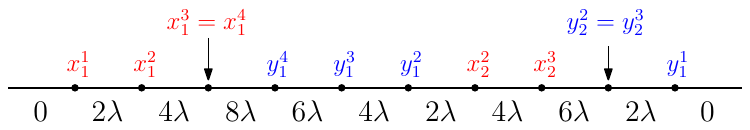}
\\[2em]
\caption{Illustration of the sets $\PJ$ and $\NJ$. The points in $\PJ$ (resp.~$\NJ$) are colored red (resp.~blue).}
\label{fig:ABnotation}
\end{figure}

In particular, $\PJ$ (resp.~$\NJ$) comprises the points at which positive (resp.~negative) jumps of the boundary data of $u$ occur. Observe that for $\Phi + u$, the points in $\PJ_m$ are exactly the points from which we can grow a level line of height $(2m-1)\lambda$, for the boundary data to the left (resp.~right) is at most $2(m-1)\lambda$ (resp.~at most $2m\lambda$). 
Indeed, the boundary values are too large or small in other points, so the weights of the force points immediately to the left or right of the starting point are at most $-2$, 
whence the continuation threshold is immediately hit (recall Theorem~\ref{thm:level_lines_coupling}~\&~Lemma~\ref{lem:force_points}). Moreover, by the same results, a level line of height $(2m-1)\lambda$ necessarily terminates at some point in $\NJ_m$ (cf.~\cite[Lemma~15]{Dubedat:Duality_of_SLE}).

Combinatorially, writing $\beta \in \GLP_\multii$ in its valenced link pattern representation~\eqref{eq:alpha}, 
\begin{align}\label{eq:beta}
\beta = \big\{ \link{a_1}{b_1}, \ldots, \link{a_N}{b_N} \big\}  ,
\end{align} 
we see that each pair $\{a_j,b_j\}$ corresponds to one point in $\PJ$ and one point in $\NJ$. 
Observe that which point belongs to which set may be different for each $j$, as the links $\link{a_j}{b_j}$ in $\beta$
in~\eqref{eq:beta} follow a left-to-right ordering, whereas the points in $\PJ$ (resp.~$\NJ$) are the starting points (resp.~endpoints) of the level lines in their natural orientation. 

Consider the set of level lines of $\Psi = \Phi + u_\beta$ given in Equation~\eqref{eq:OML}, 
\begin{align*}
\OML_\beta = \OML_\beta(\Phi) \coloneqq \big\{ \eta \;|\; \eta \textnormal{ is a level line of } \Phi + u_\beta \textnormal{ of height } (2k+1)\lambda \textnormal{ for some } k \in \bZ \big\} ,
\end{align*} 
whose heights are odd integer multiples of $\lambda$. 

\begin{lemma}\label{lem:frontiers_are_level_lines}
If $O \subset \bH \setminus \bA_{2(m-1)\lambda}(\Phi+u) = \bH \setminus \bA_{-2(\maxu -m+1)\lambda}(\Phi+u - 2\maxu \lambda)$ is a connected component such that $\partial O \cap \bH \neq \emptyset$, then each connected component of $\partial O \setminus \bR$ is a level line of $\Phi$ of height $(2m-1)\lambda$, started from some point in $\PJ_m$, for each $m \in \{1,2,\ldots, \maxu \}$. 
In~particular, $\OML_\beta$ is exactly the collection of connected components of frontiers of the FPS $(\bA_{2(m-1)\lambda}^u)_{m \in \bZ}$.
\end{lemma}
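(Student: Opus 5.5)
We will identify the frontier of $\bA_{2(m-1)\lambda}(\Phi+u)$ with level lines by constructing the level lines first and then verifying that their union, together with suitable filling, satisfies the axioms of Definition~\ref{def:fps}. By Theorem~\ref{thm:fps_existence_uniqueness}, the FPS is then identified. Fix $m$ and write $a = 2(m-1)\lambda$.

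\textbf{Step 1: grow level lines from $\PJ_m$.} From each $x_i^m \in \PJ_m$, grow the level line of $\Phi+u$ of height $(2m-1)\lambda$. By the discussion preceding the lemma (Theorem~\ref{thm:level_lines_coupling} and Lemma~\ref{lem:force_points}), such a line is well defined up to the continuation threshold and terminates at a point of $\NJ_m$. Its two sides carry boundary values $a$ (left) and $a+2\lambda$ (right). Level lines of the same height that meet merge. We argue they do not meet in $\bH$: two level lines of equal height started at distinct points of $\PJ_m$ would, upon merging, have to end at a common point with incompatible boundary data. A cleaner route is to grow the lines sequentially, conditioning on previously grown ones via Remark~\ref{rmk:conditional_law_level_lines}. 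The previous lines then become boundary with data $a$ or $a+2\lambda$, and by Lemma~\ref{lem:force_points} the new line cannot hit them except at endpoints. This produces $n_m$ disjoint curves $\eta_1,\dots,\eta_{n_m}$, whose union $L_m$ is a local set (Remark~\ref{rmk:union}).

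\textbf{Step 2: build a candidate FPS.} The curves cut $\bH$ into components of two types:
\begin{itemize}
\item components whose boundary data from $u$ and the curves is $\le a$ off the curves, and equal to $a$ on them;
\item components where $u \ge a+2\lambda$ or the data on the curves is $a+2\lambda$.
\end{itemize}
In each component $O'$ of the second type, run the FPS $\bA_{a}$ of the restricted field $\Phi^{L_m}+h_{L_m}+u$, which is a conditional GFF on $O'$. Since its boundary data is $\ge a+2\lambda > a$ along the whole boundary, this fills in the relevant region. Let $A$ be the union of $L_m$, $\bR$, and these inner FPSs. By Lemma~\ref{lem:local_set_properties}, $A$ is local.

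We then check the conditions of Definition~\ref{def:fps}:
\begin{itemize}
\item Condition (i) holds because components of $\bH\setminus A$ are either first-type components (harmonic extension $\le a$, equal to $a$ on the curves), or holes of inner FPSs (value $a$ on their boundary).
\item Condition (ii) follows since $\Phi_A-h_A$ vanishes on the curves (level lines have no singular part, being thin local sets with bounded harmonic function) and is $\ge 0$ inside the inner FPSs.
\item Conditions (iii) and (iv) are inherited from the inner FPSs and from the curves being simple and continuous up to the boundary.
\end{itemize}
By uniqueness, $A=\bA_a(\Phi+u)$. The components $O$ of $\bH\setminus A$ with $\partial O\cap\bH\neq\emptyset$ that touch $\bR$ are exactly the first-type components, and each component of $\partial O\setminus\bR$ is one of the $\eta_i$. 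This proves the first claim. The second claim follows by ranging over $m$: every level line of odd height $(2m-1)\lambda$ must start in $\PJ_m$, so it is one of the $\eta_i$.

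\textbf{Main obstacle.} The delicate point is condition (ii) together with the precise boundary behaviour near the points of $\DC(u)$ where several curves meet (fused points, valence $s_j>1$). There one must ensure that the sequentially grown lines do not hit the continuation threshold before reaching $\NJ_m$. The plan is to handle this with the force-point weight bookkeeping of Lemma~\ref{lem:force_points}, using maximal slopedness of $\beta$: the jump at each point is a multiple of $2\lambda$, so the weights $\rhoSum{j}$ are even integers and are never in $(-2,0)$ except where termination is intended. This bookkeeping is also what yields the inner-FPS compatibility.
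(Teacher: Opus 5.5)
Your proposal is correct and is essentially the paper's proof: you grow the level lines of height $(2m-1)\lambda$ from the points of $\PJ_m$, fill the complementary components with first passage sets of the conditional field, check the conditions of Definition~\ref{def:fps}, and conclude by the uniqueness in Theorem~\ref{thm:fps_existence_uniqueness}. You fill the components with data $\geq 2m\lambda$, which is the right choice: the paper's text fills those with data $\leq 2(m-1)\lambda$, evidently a slip, since the level-$2(m-1)\lambda$ FPS is trivial there and condition (i) would then fail in the high components. You are also right to restrict to components sharing a boundary segment with $\bR$, since holes of the inner FPSs also satisfy $\partial O \cap \bH \neq \emptyset$ but are not bounded by level lines from $\PJ_m$.
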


\begin{proof}
Started at each $x_a^m \in \PJ_m$, sample the level line $\eta^{a,m} \coloneqq \eta_{(2m-1)\lambda}^{x_a^m}$
of $\Phi + u$ of height $(2m-1)\lambda$  (that is, so that the boundary values on the level line belong to $\{2(m-1)\lambda,2m\lambda\}$), as in Definition~\ref{def:level_line}~\&~Theorem~\ref{thm:level_lines_coupling}.
As discussed above, $\eta^{a,m}$ terminates in a point in $\NJ_m$.

Now, $\wh{A}_m = \bigcup_{a = 1}^{n_m} \eta^{a,m}$ is a local set of $\Phi$ and $\bH \setminus \wh{A}_m = \bigcup_j O_j^m$ is a union of simply connected domains $O_j^m$ in which 
the harmonic function $h_{\wh{A}_m} + u$ has boundary values either at most $2(m-1)\lambda$ everywhere (so that $h_{\wh{A}_0} + u \leq 2(m-1)\lambda$ in $O_j^m$), 
or at least $2m \lambda$ everywhere (so that $h_{\wh{A}_0} + u \geq 2m\lambda$ in $O_j^m$). 
Let $O_{j_1}^m,\dots,O_{j_l}^m$ be the connected components of $\bH \setminus \wh{A}_m$ in which $h_{\wh{A}_m} + u \leq 2(m-1)\lambda$ and set $A^m \coloneqq \wh{A}_m \cup (\bigcup_l \wh{A}_l^m)$, where
\begin{align*}
\wh{A}_l^m \coloneqq \bA_{2(m-1)\lambda}((\Phi+u)|_{O_{\ell_l}^m}) = \bA_{-2(\maxu -m+1)\lambda}((\Phi + u -2\maxu \lambda)|_{O_{\ell_l}^m}).
\end{align*}
Then, $A^m$ is a first passage set of level $2(m-1)\lambda$ (all four conditions of Definition~\ref{def:fps} are satisfied by construction). 
Moreover, we have $A^m = \bA_{-2(\maxu-m+1)\lambda}(\Phi + u - 2\maxu\lambda)$ by uniqueness (Theorem~\ref{thm:fps_existence_uniqueness}). 
In particular, each outer boundary of $\bA_{-2(\maxu-m+1)\lambda}(\Phi + u - 2\maxu \lambda)$ which is not a subset of $\bR$ is a level line of $\Phi + u$ of height $(2m-1)\lambda$.
\end{proof}

\bigskip

\section{GFF partition functions, connection probabilities, and $c=1$ conformal blocks} 
\label{sec:GFF_properties}

In this section, we relate conformal block functions $\CobloF_\beta$, connection probabilities, and GFF partition functions. 
In Section~\ref{subsec:convergence_connection_probabilities} we prove our main result, Theorem~\ref{thm:corss_proba_H}, establishing the connection probabilities for the level lines of a GFF with more general boundary data. 
Then, in Proposition~\ref{prop:partition_functions_agree} in Section~\ref{subsec:partition_functions_conformal_blocks} we prove a relationship between the GFF partition functions \`a la Dub\'{e}dat~\cite{Dubedat:SLE_and_free_field} 
and the $c = 1$ degenerate conformal blocks \`a la~\cite{LPR:Fused_Specht_polynomials_and_c_equals_1_degenerate_conformal_blocks}, 
before wrapping up with Section~\ref{subsec:properties_pure_partition_functions}, where we collect the proof of the CFT properties of the pure partition functions $\PartF_\alpha$ stated in Theorem~\ref{thm:CFT properties}. 
We begin with a warm-up Section~\ref{subsec:connection_probabilities_disjoint}.

\subsection{Connection probabilities of disjoint multichordal $\SLE_4$ curves}\label{subsec:connection_probabilities_disjoint}

As a warm-up, let us consider the GFF with alternating boundary data.
Let $\Phi$ be a zero-boundary GFF on $\bH$, fix $-\infty = x_0 <x_1 < \cdots < x_{2N} < x_{2N+1} = \infty$, 
and let $u \colon \ol{\bH} \to \bR$ be the bounded harmonic function with piecewise constant boundary values that are even multiples of $\lambda \coloneqq \sqrt{\pi/8}$: 
\begin{align*}
u(x) = u_{\unnested}(x) \coloneqq 
\begin{cases}
0 , & x \in (x_{2j},x_{2j+1}), \ j \in \{0,\dots,N\}, \\
2\lambda ,  & x\in (x_{2j-1},x_{2j}), \ j \in \{1,\dots,N\} .
\end{cases}
\end{align*} 
Fix an $N$-link pattern $\alpha \in \LP_N$. 
In this case, the collection $S$ of frontiers of the first passage sets~\eqref{eq:cc_of_FPS_front} 
is described by an ordinary multichordal $\SLE_4$ process~\cite{Schramm-Sheffield:A_contour_line_of_the_continuum_GFF,Wang-Wu:Level_lines_of_Gaussian_free_field_I, Peltola-Wu:Global_and_local_multiple_SLEs_and_connection_probabilities_for_level_lines_of_GFF}. 
The probabilities of their connectivity patterns were solved in~\cite[Theorem~1.4]{Peltola-Wu:Global_and_local_multiple_SLEs_and_connection_probabilities_for_level_lines_of_GFF}:
\begin{align*} 
\PR_{\unnested} [\conn = \alpha] 
= \; & \frac{\PartF_{\alpha}(x_1, \ldots, x_{2N})}{\CobloF_{\unnested}(x_1, \ldots, x_{2N})} ,
\end{align*}
which is a special case of Equation~\eqref{eq:corss_proba_H} with the well-known partition function
\begin{align*}
\CobloF_{\unnested} (x_1,\ldots,x_{2N}) \coloneqq \prod_{1 \leq i < j \leq {2N}} (x_j - x_i)^{\frac{1}{2} (-1)^{j-i}} .
\end{align*}
More generally, given an $N$-link pattern $\beta \in \LP_N$, let the boundary data $u = u_\beta$ be chosen as in Equation~\eqref{eq:harmonic_function}, where the step directions 
\begin{align*}
\dir \beta_{j} \coloneqq \beta_{j} - \beta_{j-1} \in \{\pm 1\}, \qquad j \in \{1,\dots,2N\} ,
\end{align*}
encode the boundary jump directions as in~\cite[Sections~5-6]{Peltola-Wu:Global_and_local_multiple_SLEs_and_connection_probabilities_for_level_lines_of_GFF} and~\cite[Section~4]{Liu-Wu:Scaling_limits_of_crossing_probabilities_in_metric_graph_GFF}. 
The conformal block function associated to this boundary data is~\cite[Equation~(6.1)]{Peltola-Wu:Global_and_local_multiple_SLEs_and_connection_probabilities_for_level_lines_of_GFF}:
\begin{align*} 
\CobloF_\beta(x_1,\ldots,x_{2N}) \coloneqq \; & \prod_{1 \leq i < j \leq 2N} (x_j - x_i)^{\frac{1}{2} \vartheta(i,j)} 
, \qquad \beta \in \LP_N ,
\\
\vartheta(i,j) 
= \; &
\begin{cases} 
+1, \quad i,j \in \{a_1,a_2,\ldots ,a_{N} \}  \; \textnormal{or} \; i,j \in \{b_1,b_2,\ldots ,b_{N}\}, \\
-1, \quad \textnormal{otherwise}.
\end{cases}
\end{align*}
(See Equation~\eqref{eq:CobloF_nonval} for an alternative expression in terms of Specht polynomials~\cite{LPR:Fused_Specht_polynomials_and_c_equals_1_degenerate_conformal_blocks}.)
Similarly to the case of $\unnested$, the connection probabilities for $\beta \in \LP_N$ take the form of the ratio of a pure partition function and a conformal block function:
\begin{align}\label{eq:connection_probability_unfused}
\PR_\beta [\conn = \alpha] 
= \; & M_{\beta, \alpha} \, \frac{\PartF_{\alpha}(x_1, \ldots, x_{2N})}{\CobloF_\beta(x_1, \ldots, x_{2N})} 
\end{align}
where $M_{\beta, \alpha} = \one\{ \beta \KWleq \alpha \}$ 
(see \cite[Theorem~4.1]{Liu-Wu:Scaling_limits_of_crossing_probabilities_in_metric_graph_GFF} and~\cite[Section~2.3]{Karrila:Computation_of_pairing_probabilities_in_multiple-curve_models}).

\subsection{Connection probabilities of GFF level lines and FPS frontiers (Theorem~\ref{thm:corss_proba_H})} 
\label{subsec:convergence_connection_probabilities}

\begin{proof}[Proof of Theorem~\ref{thm:corss_proba_H}]
Fix $\beta \in \GLP_\multii$ and valences $\multii = (s_1,\ldots,s_p) \in \bZpos^p$. 
Consider the collection $S$ of frontiers of the first passage sets of the GFF $\Phi$ 
defined in Equation~\eqref{eq:cc_of_FPS_front}, with boundary data $u_\beta$ as in Equation~\eqref{eq:harmonic_function2}.
Fix $\alpha \in \PLP_\multii$, so that $M_{\alpha, \beta}=1$ (i.e., $\smash{\beta \KWleq \alpha}$).

We will use the sets $\PJ_m = \{x_1^m,\dots,x_{n_m}^m\}$ and $\NJ_m = \{ y_1^m,\dots,y_{n_m}^m\}$ illustrated in Figure~\ref{fig:ABnotation}.
We denote by $\eta^{a,m}$ the level line of $\Phi + u$ of height $(2m+1)\lambda$ emanating from $x_a^m$,
and by $\wt{\eta}^{b,m}$ the level line of $-(\Phi + u)$ of height $-(2m+1)\lambda$ emanating from $y_b^m$.
These curves form a random topological configuration that can be combinatorially encoded into a random valenced link pattern $\conn = \conn(\bs{\eta})$ in $\LP_\multii$, writing $\bs\eta = (\eta_1,\dots,\eta_N)$.

For $0 \leq m \leq \maxu-1$ (where $2 \maxu \lambda$ is the maximal value of $u_\beta$) and $1 \leq a,b \leq n_m$, 
we set 
\begin{align*}
\tau_\epsilon^{a,m} \coloneqq \; & \inf\big\{t \geq 0 \colon \dist(\eta^{a,m}(t),x_a^m) \geq \epsilon \big\} , \\
\wt{\tau}_\epsilon^{b,m} \coloneqq \; & \inf\big\{t \geq 0 \colon \dist(\wt{\eta}^{b,m}(t),y_b^m) \geq \epsilon \big\}.
\end{align*}
That is, $\eta^{a,m}[0,\tau_\epsilon^{a,m}]$ (resp.~$\wt{\eta}^{b,m}[0,\wt{\tau}_\epsilon^{b,m}]$) is the segment of $\eta^{a,m}$ (resp.~$\wt{\eta}^{b,m}$) grown until it first exits the $\epsilon$-neighborhood of its starting point.
Note that each $\eta^{a,m}[0,\tau_\epsilon^{a,m}]$ and $\wt{\eta}^{b,m}[0,\wt{\tau}_\epsilon^{b,m}]$ is a local set of $\Phi$. 
Let 
\begin{align*}
\Gamma_\epsilon \coloneqq \bigcup_{a,b,m} \big(\eta^{a,m}[0,\tau_\epsilon^{a,m}] \cup \wt{\eta}^{b,m}[0,\wt{\tau}_\epsilon^{b,m}]\big) .
\end{align*} 
Recall that $\sF_{\Gamma_\epsilon} = \sigma(\Gamma_\epsilon,\Phi_{\Gamma_\epsilon})$, and let $g_{\Gamma_\epsilon} \colon \bH \setminus \Gamma_\epsilon \to \bH$ be the unique conformal map
normalized so that $|g_{\Gamma_\epsilon}(z) - z| \to 0$ as $|z| \to \infty$. Then, we have
\begin{align*}
\PR_\beta \big[ \conn(\bs{\eta}) = \alpha \big] 
= \EX_\beta \big[ \one\{\conn(\bs{\eta}) = \alpha \} \big] 
= \EX_\beta \big[ \EX_\beta \big[ \one\{\conn(\bs{\eta}) = \alpha \} \giv \sF_{\Gamma_\epsilon} \big] \big].
\end{align*}
Moreover, the conditional law of the level lines/first passage set frontiers given $\Gamma_\epsilon$ is that of the corresponding level lines of the field in $\bH \setminus \Gamma_\epsilon$, i.e., where the curves start from distinct points. In this case, by conformally mapping onto $\bH$, we know the probabilities for the link patterns explicitly, by~\eqref{eq:connection_probability_unfused}. 
In particular, by conformal invariance, the following holds. Let 
\begin{align*}
\hat{\PJ}_m(\epsilon) = \; & \{ \hat{x}_1^m(\epsilon),\dots,\hat{x}_{n_m}^m(\epsilon)\} \coloneqq \PJ_m((u+h_{\Gamma_\epsilon}) \circ g_{\Gamma_\epsilon}^{-1}) , \\ 
\hat{\NJ}_m(\epsilon) = \; & \{ \hat{y}_1^m(\epsilon),\dots,\hat{y}_{n_m}^m(\epsilon) \} \coloneqq \NJ_m((u+h_{\Gamma_\epsilon}) \circ g_{\Gamma_\epsilon}^{-1}) , \qquad m \in \{ 0,\dots,\maxu-1 \} ,
\end{align*}
denote the sets of points illustrated in Figure~\ref{fig:ABnotation}, but instead of $u$, 
for the harmonic function $(u+h_{\Gamma_\epsilon}) \circ g_{\Gamma_\epsilon}^{-1}$.
Then, $\hat{x}_a^m(\epsilon) = g_{\Gamma_\epsilon}(\eta^{a,m}(\tau_\epsilon^{a,m}))$ and $\hat{y}_b^m(\epsilon) = g_{\Gamma_\epsilon}(\wt{\eta}(\wt{\tau}_\epsilon^{b,m}))$, and letting $\xi_1^\epsilon,\dots,\xi_{2N}^\epsilon$ denote the left-to-right ordering of the points $(\hat{x}_a^m(\epsilon),\hat{y}_b^m(\epsilon))_{a,b,m}$, we have 
\begin{align*}
\EX_\beta \big[ \one\{\conn(\bs\eta) = \alpha \} \giv \sF_{\Gamma_\epsilon} \big] 
= \; & \PR_{\imath(\beta)} \big[\conn(\hat{\eta}_1^\epsilon,\dots, \hat{\eta}_{2N}^\epsilon) = \imath(\alpha) \big] 
&& \textnormal{[by~\eqref{eq:connection_probability_unfused}]}
\\
= \; &  \frac{\PartF_{\imath(\alpha)}(\xi_1^\epsilon,\dots,\xi_{2N}^\epsilon)}{\CobloF_{\imath(\beta)}(\xi_1^\epsilon,\dots,\xi_{2N}^\epsilon)},
&& \textnormal{[since $\smash{\beta \KWleq \alpha}$]}
\end{align*}
where $\hat{\eta}_j^\epsilon$ is the image in $\bH$ of $\eta_j$ under $g_{\Gamma_\epsilon}$ for $j \in \{1,\dots,N\}$, 
and $\imath \colon \LP_\multii \to \LP_N$ is the \quote{unfusing} map~\eqref{eq:iotamap}. 
Next, the bounded convergence theorem yields
\begin{align*}
\PR_\beta \big[ \conn(\bs{\eta}) = \alpha \big]  = \EX_\beta \Bigg[ \lim_{\epsilon \downarrow 0} \frac{\PartF_{\imath(\alpha)}(\xi_1^\epsilon,\dots,\xi_{2N}^\epsilon)}{\CobloF_{\imath(\beta)}(\xi_1^\epsilon,\dots,\xi_{2N}^\epsilon)} \Bigg].
\end{align*}
A standard harmonic measure argument then shows that, as $\epsilon \to 0$, 
all the desired groups of points tend together:
$\xi^\epsilon_{\summ_{j-1} + 1},\ldots,\xi^{\epsilon}_{\summ_{j}} \to x_j$ for all $1 \leq j \leq p $. 
Thus, by Lemmas~\ref{lem:VF_fusion}~\&~\ref{lem:PartF_fusion}, 
\begin{align*}
 \lim_{\epsilon \downarrow 0} \frac{\PartF_{\imath(\alpha)}(\xi_1^\epsilon,\dots,\xi_{2N}^\epsilon)}{\CobloF_{\imath(\beta)}(\xi_1^\epsilon,\dots,\xi_{2N}^\epsilon)}  = \frac{\PartF_{\alpha}(x_1, \ldots, x_p)}{\CobloF_{\beta}(x_1, \ldots, x_p)} 
\end{align*}
which yields the asserted identity~\eqref{eq:corss_proba_H}. 
\end{proof}

\subsection{GFF partition functions and $c=1$ conformal blocks}\label{subsec:partition_functions_conformal_blocks}

Recall that the \emph{partition function} of the GFF with boundary data $u_\beta$ is defined as~\cite{Dubedat:SLE_and_free_field} 
\begin{align*} 
\CobloF^{\GFF}_\beta \coloneqq \frac{ e^{- \frac{1}{2} \, ( u_\beta,u_\beta )_{\nabla}} }{(\det_\zeta (-\Delta))^{1/2}} ,
\end{align*}
where $\det_\zeta (-\Delta)$ is the $\zeta$-regularized determinant of the positive Dirichlet Laplacian~\cite{OPS:Extremals_of_determinants_of_Laplacians} (which we will not need in the present work) 
and the regularized Dirichlet norm
\begin{align*} 
( u_\beta,u_\beta )_{\nabla} \coloneqq \; \DirReg{u_\beta} ,
\end{align*}
of the mean of the GFF depends on the Dyck path $\beta$.
We now show that the norm in Equation~\eqref{eq: regularized Dirichlet norm} with a suitable regularization indicated as \quote{$\bs{\colon} \!\! \cdot \! \bs{\colon}$} 
is finite and yields a particular $\SLE_4$ partition function, namely the \quote{conformal block function} $\CobloF_\beta$
defined in Equation~\eqref{eq:CobloF_ms} (introduced in~\cite{LPR:Fused_Specht_polynomials_and_c_equals_1_degenerate_conformal_blocks}) 
normalized by the square root of the $\zeta$-regularized determinant of the Dirichlet Laplacian. 

\begin{proposition}\label{prop:partition_functions_agree}
Fix valences $\multii \in \bZpos^p$ and a valenced link pattern $\beta \in \GLP_\multii$. We have 
\begin{align*}
\CobloF_\beta^{\GFF}(x_1,\dots,x_p) = \frac{\CobloF_\beta(x_1,\dots,x_p)}{(\det_\zeta(-\Delta))^{1/2}}.
\end{align*}
\end{proposition}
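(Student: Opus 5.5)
The identity amounts to showing that, with a suitable regularization, $e^{-\frac12 (u_\beta,u_\beta)_\nabla} = \CobloF_\beta(x_1,\dots,x_p)$. The determinant factor is common to both sides and plays no role. The plan is to compute the regularized Dirichlet energy of the piecewise constant harmonic function $u_\beta$ explicitly. Because the energy is a quadratic form, it splits into pairwise interactions between the jumps.

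\textbf{Step 1: decompose $u_\beta$ into elementary jumps.} Write the boundary data as a sum of Heaviside steps,
\begin{align*}
u_\beta = \sum_{j=1}^p 2\lambda \, \dir\beta_j \, H_j ,
\end{align*}
where $H_j$ is the bounded harmonic function on $\bH$ with boundary values $\one\{x > x_j\}$. Explicitly, $H_j(z) = 1 - \frac{1}{\pi}\arg(z-x_j)$. This decomposition is valid because $\sum_j \dir\beta_j = \beta_p - \beta_0 = 0$, so $u_\beta$ vanishes near $\pm\infty$. Then, formally,
\begin{align*}
(u_\beta,u_\beta)_\nabla = 4\lambda^2 \sum_{i,j} \dir\beta_i \,\dir\beta_j \,(H_i,H_j)_\nabla .
\end{align*}

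\textbf{Step 2: compute the pairwise energies by integrating by parts.} By Green's formula, $(H_i,H_j)_\nabla = \int_{\bR} H_i \,\partial_n H_j \, \ud x$. The normal derivative of $H_j$ on $\bR$ is $\pm\frac{1}{\pi}\frac{1}{x-x_j}$. This gives $(H_i,H_j)_\nabla = -\frac{1}{\pi}\log|x_i-x_j| + C$, where the constant $C$ contains the ultraviolet divergences at the jump points and the infrared divergence at infinity.

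The regularization \quote{$\bs{\colon}\cdot\bs{\colon}$} is to be defined so that it removes the divergent diagonal self-energies $(H_j,H_j)_\nabla$ and the common constant $C$. The constant drops out of the off-diagonal sum because
\begin{align*}
\sum_{i\neq j}\dir\beta_i\,\dir\beta_j = \Big(\sum_j \dir\beta_j\Big)^2 - \sum_j (\dir\beta_j)^2 = -\sum_j s_j^2 ,
\end{align*}
which is a quantity depending only on $\multii$. So the constant contributes only a $\beta$-independent normalization that can be absorbed into the regularization. The cleanest way to set this up is to cut out $\epsilon$-half-discs around each $x_j$ and a large half-disc of radius $R$, and then subtract the explicit $\log\epsilon$ and $\log R$ terms, whose coefficients depend only on $\multii$.

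\textbf{Step 3: collect the constants.} Since $4\lambda^2 = \pi/2$, the off-diagonal part gives
\begin{align*}
-\tfrac12 (u_\beta,u_\beta)_\nabla = -\tfrac12 \cdot \tfrac{\pi}{2} \cdot 2\sum_{i<j}\dir\beta_i\,\dir\beta_j \left(-\tfrac{1}{\pi}\log(x_j-x_i)\right) = \tfrac12\sum_{i<j}\dir\beta_i\,\dir\beta_j\log(x_j-x_i) .
\end{align*}
Exponentiating yields exactly $\prod_{i<j}(x_j-x_i)^{\frac12\dir\beta_i\dir\beta_j} = \CobloF_\beta$, as defined in~\eqref{eq:CobloF_ms}.

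\textbf{Main obstacle.} The main difficulty is making the regularization precise and checking that the subtracted divergent terms are independent of the positions $x_j$. With cutoffs, each self-energy behaves like $\frac{1}{\pi}\log(R/\epsilon)\,(\dir\beta_j)^2$ plus a constant. I would verify this by directly computing the Dirichlet energy of $\arg(z-x_j)$ on the half-annulus centred at $x_j$, and then bounding the correction coming from the non-concentric outer cutoff. These corrections are $O(1/R)$ and vanish as $R\to\infty$.
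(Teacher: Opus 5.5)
Your argument is essentially the paper's proof. It uses the same decomposition $u_\beta=\sum_j v_j$ with $v_j=2\lambda\,\dir\beta_j H_j$, the same Green's formula with normal derivative $\frac{1}{\pi}\frac{1}{x-x_j}$ on $\bR$ (vanishing on the semicircle around $x_j$), and the same counterterm $\sum_j \frac{(2\lambda\,\dir\beta_j)^2}{\pi}\log\epsilon$.

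The one difference is that you expand bilinearly into pairwise energies, which forces an infrared cutoff $R$. The paper instead writes $\int|\nabla u_\beta|^2=\sum_j\int\nabla u_\beta\cdot\nabla v_j$ and never needs one. In your version the $\log R$ terms cancel exactly: their coefficient is $\sum_j s_j^2-\sum_j s_j^2=0$, and the off-diagonal pairings carry no $\log\epsilon$, since they are principal values. So nothing $\multii$-dependent is actually absorbed into the regularization, contrary to what your write-up suggests, and you get the identity exactly rather than up to a normalization.
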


\begin{proof}
The proof boils down to computing
\begin{align*}
%( u_\beta,u_\beta )_{\nabla} \coloneqq 
\DirReg{u_\beta} 
= \lim_{\epsilon \to 0} \bigg( \int_{\bH \setminus ( B(x_1,\epsilon) \cup \cdots \cup B(x_{p},\epsilon))} | \nabla u_\beta(z) |^2 \, \ud z + \sum_{j=1}^{p} \frac{(2\lambda \dir \beta_j)^2}{\pi} \log \epsilon \bigg),
\end{align*}
where $2\lambda \dir \beta_j$ is the jump in the boundary values of $u_\beta$ at $x_j$.  
In particular, according to Equation~\eqref{eq:partition_function_dubedat}, 
we have to prove that $( u_\beta,u_\beta )_{\nabla} \coloneqq  \DirReg{u_\beta}  = -2\log \CobloF_\beta(x_1,\dots,x_p)$.

Since $u_\beta$ is a bounded harmonic function on $\bH$ with piecewise constant boundary data, 
we can conveniently write it as the following sum of the bounded harmonic functions: 
\begin{align*}
u_\beta(z) = \sum_{j=1}^p v_j(z) , \quad
\textnormal{where} \quad
v_j(z) \coloneqq 2 \lambda \dir \beta_j  \Big(  1 - \tfrac{1}{\pi} \, \im (\log (z-x_j)) \Big).
\end{align*}
Let us consider the integral
\begin{align*}
I_\epsilon \coloneqq \; & \int_{\bH \setminus ( B(x_1,\epsilon) \cup \cdots \cup B(x_{p},\epsilon))} | \nabla u_\beta(z) |^2 \, \ud z 
\\
= \; &
\sum_{j=1}^p \int_{\bH \setminus ( B(x_1,\epsilon) \cup \cdots \cup B(x_{p},\epsilon))} \nabla u_\beta(z) \cdot \nabla v_j (z) \, \ud z \\
= \; &
\sum_{j=1}^p \int_{\bH \setminus  B(x_j,\epsilon)} \nabla u_\beta(z) \cdot \nabla v_j (z) \, \ud z + o(1) , \qquad \epsilon \to 0.
\end{align*}
Using the identity $\nabla \cdot (u_\beta(z) \nabla v_j (z)) = \nabla  u_\beta(z) \cdot \nabla v_j (z) + u_\beta(z) \Delta v_j  (z) = \nabla  u_\beta(z) \cdot \nabla v_j (z)$ and the divergence theorem\footnote{Strictly speaking, the divergence theorem addresses finite domains; integrals over suitable approximating domains can be seen to converge, e.g., by Equation~\eqref{eq:gradient computation}.}, 
we find that
\begin{align*}
I_\epsilon = \; & \sum_{j=1}^p \int_{\partial (\bH \setminus  B(x_j,\epsilon) ) } u_\beta(z) (\nabla v_j (z) \cdot \ownvec_z) \, |\ud z| + o(1), \qquad \epsilon \to 0 ,
\end{align*}
where $\ownvec_z$ is the outward normal vector and $|\ud z|$ the length measure. 
A computation yields
\begin{align*}
\nabla v_j (z) = \frac{2 \lambda \dir \beta_j}{\pi}\bigg( \frac{y}{(x-x_j)^2 + y^2},  -\frac{x-x_j}{(x-x_j)^2 + y^2}\bigg), \quad \textnormal{for} \quad z = x + \ii y.   
\end{align*}
In particular, we note that 
\begin{align}
\label{eq:gradient computation}
\nabla v_j(z) \cdot \ownvec_z = 
\begin{cases}
\displaystyle \frac{2 \lambda \dir \beta_j}{\pi} \frac{1}{x-x_j} , & x \in (-\infty,x_j-\epsilon) \cup (x_j+\epsilon,\infty) \\
0 , & x \in \bH \cap \partial B(x_j,\epsilon), \textnormal{for any } \epsilon>0 .
\end{cases}
\end{align}
Moreover, since $\dir \beta_k = \beta_k-\beta_{k-1}$ and $u_\beta(x) = 2 \lambda \beta_k$ for $x \in (x_k, x_{k+1})$, we have 
\begin{align*}
I_\epsilon = \; &  \sum_{j=1}^p \frac{ 4 \lambda^2 \dir \beta_j}{\pi} \bigg[ \sum_{k=1}^{j-2} \beta_{k}\int_{x_k }^{x_{k+1}} \frac{\ud x}{x-x_j} + 
\beta_{j-1} \int_{x_{j-1} }^{x_{j}- \epsilon } \frac{\ud x}{x-x_j} \\
&\qquad \qquad \qquad \qquad + 
\beta_j \int_{x_{j} + \epsilon }^{x_{j+1} } \frac{\ud x}{x-x_j} + \sum_{k=j+1}^{p-1} \beta_k \int_{x_k }^{x_{k+1}} \frac{\ud x}{x-x_j} \bigg] + o(1) \\
= \; &  \sum_{j=1}^p \frac{ 4 \lambda^2 \dir \beta_j}{\pi} \bigg( \sum_{k=1}^{j-2} \beta_k 
\log \Big| \frac{x_{k+1}-x_j}{x_{k}-x_j} \Big| + \beta_{j-1} \log \Big| \frac{\epsilon}{x_{j-1}-x_j} \Big|
\\
&\qquad \qquad \qquad \qquad+ 
\beta_j \log \Big| \frac{x_{j+1}-x_j}{\epsilon} \Big|
+ \sum_{k=j+1}^{p-1} \beta_k 
\log \Big| \frac{x_{k+1}-x_j}{x_{k}-x_j} \Big| \bigg) + o(1) \\
= \; & - \sum_{j=1}^p \frac{ 4 \lambda^2 \dir \beta_j}{\pi} \bigg( 
\sum_{k=1}^{j-1} \dir \beta_k \log |x_{k}-x_j |
+ \dir \beta_j  \log \epsilon + \sum_{k=j+1}^{p} \dir \beta_k  \log |x_{k}-x_j | \bigg) + o(1) \\
= \; &  -\sum_{j=1}^p \frac{ 4 \lambda^2 (\dir \beta_j)^2}{\pi} \log \epsilon 
- \sum_{j=1}^p \frac{ 4 \lambda^2 \dir \beta_j}{\pi} \bigg( \sum_{k \neq j} \dir \beta_k \log |x_{k}-x_j | \bigg) + o(1) , \qquad \textnormal{as} \quad \epsilon \to 0 .
\end{align*}
Thus, since the logarithmic blowup in $I_\epsilon$ is canceled by the regularization in the Dirichlet norm $\DirReg{u_\beta}$, we find (using the expression in Equation~\eqref{eq:CobloF_ms} for $\CobloF_\beta$) that
\begin{align*}
( u_\beta,u_\beta )_{\nabla} = \; & \DirReg{u} 
= \;  \lim_{\epsilon \to 0} \bigg( I_\epsilon + \sum_{j=1}^{p} \frac{(2 \lambda \dir \beta_j)^2}{\pi} \log \epsilon \bigg) \\
= \; &  - \sum_{j=1}^p \frac{ 4 \lambda^2 \dir \beta_j}{\pi} \bigg( \sum_{k \neq j} \dir \beta_k \log |x_{k}-x_j | \bigg) \\
= \; &  - \log \prod_{1 \leq j < k \leq p} (x_k-x_j)^{\dir \beta_j \dir \beta_k} 
= \; - 2 \log \mathcal{U}_\beta (x_1,\dots,x_p),
\end{align*}
as was to be shown.
\end{proof}

\begin{remark}
Let us point out that if we use the other common normalization so that the Dirichlet Green's function blows up like $-\log|z-w|$ along the diagonal instead, 
then we should define $( u_\beta,u_\beta )_{\nabla} \coloneqq \tfrac{1}{2\pi} \DirReg{u}$ and $\lambda = \pi/2$. Then, we have
\begin{align*}
\CobloF_\beta^{\GFF} \coloneqq \frac{ e^{- \pi ( u_\beta,u_\beta )_{\nabla}} }{(\det_\zeta (-\Delta))^{1/2}} 
\qquad \textnormal{and} \qquad 
\CobloF_\beta^{\GFF}(x_1,\dots,x_p) = \frac{\big( \CobloF_\beta(x_1,\dots,x_p) \big)^{2\pi}}{(\det_\zeta(-\Delta))^{1/2}} ,
\end{align*}
which is the appropriate analogue of the statement in Proposition~\ref{prop:partition_functions_agree}.
\end{remark}

\subsection{CFT interpretation (Theorem~\ref{thm:CFT properties})}
\label{subsec:properties_pure_partition_functions}

\begin{proof}[Proof of Theorem~\ref{thm:CFT properties}]
The BPZ PDEs~\eqref{eq: BPZ PDE at kappa equals 4} and the M\"obius covariance~\eqref{eq: COV general at kappa equals 2}  
follow immediately from linearity, the fact that each $\PartF_\alpha$ is defined in Equation~\eqref{eq:PartF}
as a linear combination of the conformal block functions $\CobloF_\beta$,
and the fact that by~\cite[Theorem~3.24]{LPR:Fused_Specht_polynomials_and_c_equals_1_degenerate_conformal_blocks},
the latter solve the BPZ PDEs~\eqref{eq: BPZ PDE at kappa equals 4} and satisfy the M\"obius covariance~\eqref{eq: COV general at kappa equals 2}.
In light of these two properties, the partition functions $\PartF_\alpha$ could be thought of as boundary correlation functions of primary fields as in Equation~\eqref{eq:fusion_corrf} in a CFT with central charge $c=1$. 

By Theorem~\ref{thm:corss_proba_H}, in order to prove that $\PartF_\alpha(x_1,\ldots ,x_p)>0$ for all $x_1 < \cdots < x_p$,
i.e., that \textnormal{(POS)} holds, 
it suffices to show that $\PR_\beta[\conn = \alpha] > 0$ whenever 
$\smash{ \beta \KWleq \alpha }$.
On the combinatorial side, recall from Lemma~\ref{lem:KWleq importance} that $\smash{ \beta \KWleq \alpha }$ equivalently means that $\alpha$ is a combinatorial level-line pattern for the boundary condition $\beta$ (Figure~\ref{fig:combinatorial bijections}, right panel). 
On the other hand, Remark~\ref{rem:KWleq importance} gives a simple \quote{iterated curve growth description} of the collection of such $\alpha \in \LP_\multii$. With this description, and GFF level lines being local sets, it suffices to prove that for any two boundary condition jump points $x_i$ and $x_j$ with $u(x_i^-)<(2m+1)\lambda < u(x_i^+)$ and $u(x_j^-)>(2m+1)\lambda > u(x_j^+)$, the level line of height $(2m+1)\lambda $ from $x_i$ terminates at $x_j$ with a positive probability. 
Indeed, with this property at hand, growing the level lines iteratively, by Remark~\ref{rem:KWleq importance}, any $\alpha \in \LP_\multii$ with  $\smash{ \beta \KWleq \alpha }$ will then appear as $\conn$ with a positive probability.
In turn, the desired statement about level line termination points is a direct consequence of~\cite[Lemma~2.5]{Miller-Wu:Intersections_of_SLE_paths:_the_double_and_cut_point_dimension_of_SLE} 
(that proof uses the imaginary geometry coupling for $\kappa \neq 4$, the same proof works for level lines).

For the linear independence \textnormal{(LIN)}, 
on the one hand, the conformal block functions $\{ \CobloF_\beta \colon \beta \in \LP_\multii \}$ are linearly independent by~\cite[Proposition~3.17]{LPR:Fused_Specht_polynomials_and_c_equals_1_degenerate_conformal_blocks}. 
On the other hand, Equation~\eqref{eq:PartF} can be written as the following system of  $|\LP_\multii|$ linear equations:
\begin{align*}
\PartF_\alpha = \sum_{\beta \in \LP_\multii} \mathfrak{M}_{\alpha, \beta} \, \CobloF_\beta, \qquad \textnormal{for all } \alpha \in \LP_\multii, 
\end{align*}
where $\mathfrak{M}_{\alpha, \beta} = 0$ unless $\alpha \DPleq \beta$ and $\mathfrak{M}_{\alpha, \alpha} = \# \LP_{\geq_\multii \alpha} \geq 1$. 
Hence, upon ordering the elements of $\LP_\multii$ in an increasing order with respect to the partial order $\DPleq$, 
the matrix $\mathfrak{M}_{\alpha, \beta}$ becomes upper-triangular with non-zero diagonal elements. It thus follows that $\{ \PartF_\alpha \colon \alpha \in \LP_\multii \}$ and $\{ \CobloF_\beta \colon \beta \in \LP_\multii \}$ span, and are bases for, the same $|\LP_\multii|$-dimensional function space.
\end{proof}

\bigskip

\section{Global multiple $\SLE_4$: existence and uniqueness}
\label{sec:uniqueness}

We define a (topological) \emph{polygon} to be a $(p+1)$-tuple $(\domain;x_1,\dots,x_p)$ consisting of a simply connected domain $\domain$ and distinct boundary points $x_1,\dots,x_p \in \partial \domain$, $p \geq 2$, 
ordered in a counterclockwise manner on $\partial \domain$.
For $p=2$, we let $X_0(\domain;x_1,x_2)$ denote the set of continuous unparametrized curves $\eta$ in $\domain$ connecting $x_1$ and $x_2$ which intersect the boundary $\partial \domain$ only at $\{x_1,x_2\}$. 
For a general $\multii$-valenced link pattern $\alpha \in \LP_\multii$ as in~(\ref{eq: definition of N},~\ref{eq:alpha}), 
we let $X_0^\alpha(\domain;x_1,\dots,x_p)$ be the collection of families $\bs\eta = (\eta_1,\dots,\eta_N)$ of curves which are pairwise disjoint in $\domain$ such that $\eta_j \in X_0(\domain;x_{a_j},x_{b_j})$ for each $j \in \{1,\dots,N\}$.  
The curves are allowed to intersect at $\partial \domain$ but not in $\domain$, though one could generalize the definition further if necessary (for example, when considering $\SLE_\kappa$ curves with $\kappa > 4$).

\begin{definition}\label{def:fused_multiple_sle}
For any $\multii$-valenced link pattern $\alpha \in \LP_\multii$, we call a probability measure on 
$(\eta_1, \ldots, \eta_N)\in X_0^{\alpha}(\domain; x_1, \ldots, x_p)$ 
\emph{a~global $\aSLE_\kappa$} if, for each $j\in\{1, \ldots, N\}$, 
the conditional law of the curve $\eta_j$, given the curves $\{\eta_1, \ldots, \eta_{j-1}, \eta_{j+1}, \ldots, \eta_N\}$, 
is that of a chordal $\SLE_\kappa$ curve connecting $x_{a_j}$ and $x_{b_j}$
in the connected component of the domain $\domain \setminus \{\eta_1, \ldots, \eta_{j-1}, \eta_{j+1}, \ldots, \eta_N\}$
containing the endpoints $x_{a_j},x_{b_j}$ of $\eta_j$ on its boundary.
\end{definition}

The description of the conditional law in Definition~\ref{def:fused_multiple_sle} is conventionally termed the \emph{resampling property}. 
Using the frontier of the first passage sets of the GFF, we obtain a construction for $\kappa=4$ and $\alpha \in \PLP_\multii$.
The resampling property arises naturally from the domain Markov property of the GFF.
For general $\kappa \in (0,8)$, the existence and uniqueness of global $\aSLE_\kappa$ is well known for non-valenced link patterns $\alpha \in \LP_N$ with $\multii = (1)^{2N}$ 
(i.e., having $2N$ entries $1$)~\cite{Kozdron-Lawler:Configurational_measure_on_mutually_avoiding_SLEs, Miller-Sheffield:Imaginary_geometry2, Peltola-Wu:Global_and_local_multiple_SLEs_and_connection_probabilities_for_level_lines_of_GFF, BPW:On_the_uniqueness_of_global_multiple_SLEs, FLPW:Multiple_SLEs_Coulomb_gas_integrals_and_pure_partition_functions, Sun-Yu:SLE_partition_functions_via_conformal_welding_of_random_surfaces, Zhan:Existence_and_uniqueness_of_nonsimple_multiple_SLE, AMY:Multiple_SLE_from_CLE, AHSY:Conformal_welding_of_quantum_disks_and_multiple_SLE_the_non-simple_case}.
The main goal of this section is to prove a more general result for $\kappa=4$.

\begin{theorem}\label{thm:existence_fused_multiple_sle}
In each polygon, there exists a unique global $\aSLE_4$ for each $\alpha \in \PLP_\multii$.
\end{theorem}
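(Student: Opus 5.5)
By conformal invariance of chordal $\SLE_4$ and of the resampling property, we may work in $(\bH;x_1,\dots,x_p)$. The existence and uniqueness parts are handled separately.

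\textbf{Existence.} Fix $\alpha \in \PLP_\multii$ and choose $\beta \in \GLP_\multii$ with $\beta \KWleq \alpha$. Consider the level lines $\OML_\beta$ of $\Phi+u_\beta$, which by Lemma~\ref{lem:frontiers_are_level_lines} coincide with the FPS frontiers. By the positivity (POS) part of Theorem~\ref{thm:CFT properties}, together with Theorem~\ref{thm:corss_proba_H}, the event $\{\conn=\alpha\}$ has positive probability. The candidate law is the law of $\OML_\beta$ conditioned on $\{\conn=\alpha\}$.

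To verify the resampling property, fix $j$ and condition on $\{\eta_k\}_{k\neq j}$. These curves are level lines determined by $\Phi$, so by Remark~\ref{rmk:conditional_law_level_lines} and Lemma~\ref{lem:local_set_properties} their union $A$ is a local set. In the component $\domain_j$ containing $\eta_j$, the boundary data of $h_A+u_\beta$ equals $2m\lambda$ on one boundary arc and $(2m+2)\lambda$ on the other, for the appropriate height. This uses the $\pm1$ structure of Lemma~\ref{lem:KWleq importance}, and is the place where one must check that the neighbouring curves on both sides of $\eta_j$ have compatible heights. With such two-valued data, the level line of height $(2m+1)\lambda$ in $\domain_j$ is a chordal $\SLE_4$ between the two jump points, by Theorem~\ref{thm:level_lines_coupling} with no force points.

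The remaining point is that $\{\conn=\alpha\}$ is measurable with respect to the other curves together with the fact that $\eta_j$ joins the two marked points of $\domain_j$. That fact holds automatically, since inside $\domain_j$ there is only one positive and one negative jump. Hence conditioning on the event does not alter the conditional law of $\eta_j$.

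\textbf{Uniqueness.} Let $\mu$ be any global $\aSLE_4$. I would show that $\mu$ equals the conditioned law above, and I would do this by coupling, as the paper outlines, rather than by a Markov chain argument.

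1. Construct a coupling of $\bs\eta\sim\mu$ with a GFF. Sample $\bs\eta \sim \mu$. Assign to the components of $\bH\setminus\bs\eta$ the heights given by the locally constant function of Lemma~\ref{lem:KWleq importance} for $\beta$ (values $2k\lambda$). In each component, sample an independent zero-boundary GFF plus the harmonic function with these boundary values. Call the resulting field $\Psi$.

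2. Show that $\Psi$ has the law of $\Phi+u_\beta$ conditioned on $\{\conn=\alpha\}$, in such a way that $\bs\eta=\OML(\Psi)$. For this I would run a resampling Markov chain on $(\Psi,\bs\eta)$: alternately resample a single curve given the field off that curve, and resample the field given the curves. By the resampling property of $\mu$ and the level-line coupling (Theorem~\ref{thm:level_lines_coupling}), both $\mu\otimes(\text{GFF given curves})$ and the conditioned GFF law are stationary for this chain.

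Alternatively, one can argue more directly. Consider the ratio of laws of $\eta_j$ given the others, combined with a characterization of the GFF by its local set Markov property: a field whose conditional law given every level line $\eta_j$ (and the others) is the prescribed one must be a GFF with data $u_\beta$ restricted to $\{\conn=\alpha\}$. I would attempt this through the uniqueness of two-valued-type local sets, in the spirit of Theorem~\ref{thm:fps_existence_uniqueness}. The idea is to show that the union of curves $A$ under $\mu$, together with the heights, satisfies Definition~\ref{def:fps} for the induced field, so that $A$ agrees with the corresponding union of FPS frontiers. The expected obstacle is exactly here: confirming that the glued field $\Psi$ is a genuine GFF with boundary data $u_\beta$, which requires FPS-type uniqueness characterizations applied level by level ($m=0,\dots,\maxu-1$).

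3. Once step 2 is established, $\mu$ is the law of $\OML(\Psi)$ given $\{\conn=\alpha\}$, and is therefore uniquely determined. As a byproduct, this conditioned law does not depend on the choice of $\beta$.
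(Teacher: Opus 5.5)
Your existence argument matches the paper's: resampling for $\OML_\beta$ via the local-set Markov property, positivity of $\PR_\beta[\conn=\alpha]$, and the observation that conditioning on $\{\conn=\alpha\}$ preserves resampling. Your gluing in step~1 of uniqueness builds the same field $\Psi$ as the paper.

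The gap is in step~2, and both of your routes are circular. In the Markov chain route you only show that $\mu$ and the conditioned GFF law are both invariant for single-curve resampling. That is automatic for every global $\aSLE_4$, by definition. The step ``resample the field given the curves'' adds nothing, since your gluing prescribes that conditional law. Concluding that $\mu$ equals the conditioned GFF law needs uniqueness of the invariant measure, which is the statement to be proved and which you do not supply.

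The FPS route fails similarly. Theorem~\ref{thm:fps_existence_uniqueness} identifies a local set of a field already known to be a zero-boundary GFF plus $u_\beta$, and that your glued $\Psi$ is such a field is exactly what is unknown. Moreover, the target law, $\Phi+u_\beta$ conditioned on $\{\conn=\alpha\}$, is not a GFF, so the FPS characterization does not apply to it directly.

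The missing ingredient is an external uniqueness result for \emph{unfused} link patterns, reached by cutting off neighbourhoods of the fused points.
\begin{enumerate}
\item Let $\Gamma_\epsilon$ be the union of the initial and terminal pieces of all curves, up to their exits from $\epsilon$-balls around their endpoints.
\item By the domain Markov property and reversibility of $\SLE_4$, conditionally on $\Gamma_\epsilon$ the images under $g_{\Gamma_\epsilon}$ of the remaining pieces satisfy the resampling property with $2N$ distinct endpoints. So they form a global $\iaSLE_4$.
\item This law is unique by \cite[Theorem~1.2]{BPW:On_the_uniqueness_of_global_multiple_SLEs}. By the unfused existence result, it equals the law of $\OML_{\imath(\beta)}$ given $\{\conn=\imath(\alpha)\}$.
\item Hence, for every $\epsilon>0$, your glued field mapped by $g_{\Gamma_\epsilon}$ is a GFF plus the unfused data $u^\epsilon_{\imath(\beta)}$, conditioned on the connectivity.
\item In particular, $\mu$ and the conditioned GFF law have the same conditional law given $\Gamma_\epsilon$. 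Sending $\epsilon\to0$ yields the coupling of Proposition~\ref{prop:coupling_fused_multiple_sle}, from which uniqueness follows.
\end{enumerate}
Your step~1 can stay, but step~2 must be replaced by this reduction.
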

We prove existence in Section~\ref{subsec:resampling_property} and uniqueness in Section~\ref{subsec:uniqueness_proof}.
The defining resampling property is obtained in Theorem~\ref{thm:resampling_property}. 
By conformal invariance of $\SLE$ and GFF, it suffices to consider the case where the polygon is the upper half-plane $\domain = \bH$, and $x_1 <\cdots < x_p$.

While our methods only work in the case of $\alpha \in \PLP_\multii$, we believe that the result holds for general $\alpha \in \LP_\multii$ and $\kappa \in (0,8)$, and state it as a conjecture for potential later investigations.
For example, considering limits of the known cases from~\cite{Kozdron-Lawler:Configurational_measure_on_mutually_avoiding_SLEs, Peltola-Wu:Global_and_local_multiple_SLEs_and_connection_probabilities_for_level_lines_of_GFF, Zhan:Existence_and_uniqueness_of_nonsimple_multiple_SLE, FLPW:Multiple_SLEs_Coulomb_gas_integrals_and_pure_partition_functions}, 
or refining the Liouville quantum gravity (LQG) techniques used in~\cite{Sun-Yu:SLE_partition_functions_via_conformal_welding_of_random_surfaces, AHSY:Conformal_welding_of_quantum_disks_and_multiple_SLE_the_non-simple_case}
give possible routes to verify Conjecture~\ref{conj:existence_fused_multiple_sle}. We will not attempt this in the present work.

\begin{conjecture}\label{conj:existence_fused_multiple_sle}
In each polygon, there exists a unique global $\aSLE_\kappa$ for each $\multii$-valenced link pattern $\alpha \in \LP_\multii$ and for each $\kappa \in (0,8)$.
\end{conjecture}

Now, let us consider the set of level lines of $\Psi = \Phi + u_\beta$ given in Equation~\eqref{eq:OML}, 
\begin{align*}
\OML_\beta = \OML_\beta(\Phi) \coloneqq \big\{ \eta \;|\; \eta \textnormal{ is a level line of } \Phi + u_\beta \textnormal{ of height } (2k+1)\lambda \textnormal{ for some } k \in \bZ \big\} ,
\end{align*} 
whose heights are odd integer multiples of $\lambda$.
By Lemma~\ref{lem:frontiers_are_level_lines}, the collection $\bs \eta$ of frontiers of the first passage sets $\bA_0^{u_\beta},\ldots,\bA_{2(\maxu-1)\lambda}^{u_\beta}$ with $\maxu = \maxu_\beta$ is exactly the set $\OML_\beta = \{\bs \eta\}$,
and they form a random topological configuration that can be combinatorially encoded into a random valenced link pattern $\conn = \conn(\bs{\eta}) = \conn(\OML_\beta)$ in $\PLP_\multii$ (cf.~Theorem~\ref{thm:corss_proba_H}).

We begin in Section~\ref{subsec:resampling_property} by showing that the family of curves comprising $\OML_\beta$ satisfies the resampling property. From this, we deduce the existence part of Theorem~\ref{thm:existence_fused_multiple_sle}. Then, in Section~\ref{subsec:uniqueness_proof} we show that any collection of curves such that $\conn(\bs \eta) = \alpha \in \PLP_\multii$  
satisfying the resampling property can be coupled as the level lines $\OML_\beta$ of $\Phi+u_\beta$, 
for any $\smash{\beta \KWleq \alpha}$. From this, we can then prove the uniqueness part of Theorem~\ref{thm:existence_fused_multiple_sle} and conclude the result. 
As a byproduct of the coupling proof, we establish that the conditional law of $\OML_\beta$ given the connection event $\{\conn = \alpha\}$ (for some $\alpha \in \PLP_\multii(\beta)$) is independent~of~$\beta$. 

In Section~\ref{subsec:local_description_unconditional_case}, 
we will give a local description of a single level line $\eta \in \OML_\beta$ (marginal law), 
namely as an $\SLE_4$ curve with partition function given by the conformal block function $\CobloF_\beta$ defined in Equation~\eqref{eq:CobloF_ms}. 
Lastly, in Proposition~\ref{prop:alpha-conditional marginal law of one level line} in Section~\ref{subsec:local_description_conditional_case} we shall 
describe the conditional law of one curve given the configuration $\conn$, which might also be of interest.

\subsection{Resampling property and existence of global multiple $\SLE_4$}\label{subsec:resampling_property}

We begin by proving that $\{\bs\eta\}=\{\eta_1,\ldots,\eta_N\} = \OML_\beta$ 
satisfy the resampling property which, together with the positivity of the probability of each link pattern, implies the existence of global multiple $\SLE_4$ for $\alpha \in \PLP_\multii$. 
For each $j \in \{1,\ldots,N\}$, we let $D_j$ denote the domain of $\bH \setminus \bigcup_{k \neq j} \eta_k$ which contains $\eta_j$. 
Furthermore, we let $\pp_j$ (resp.~$\np_j$) denote the starting point (resp.~endpoint) of $\eta_j$. 
(Note that $\np_j$ is random, but determined by $\eta_1,\ldots,\eta_{j-1},\eta_{j+1},\ldots,\eta_N$.)

\begin{theorem}[Resampling property]\label{thm:resampling_property}
The family $\{\bs\eta\} = \OML_\beta$ satisfies the property:
for each $j$, the conditional law of $\eta_j$, given $\eta_1,\ldots,\eta_{j-1},\eta_{j+1},\ldots,\eta_N$, is that of a chordal $\SLE_4$ in $(D_j;\pp_j,\np_j)$.
\end{theorem}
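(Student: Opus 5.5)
The plan is to deduce the resampling property from the domain Markov property of the GFF and the level-line characterization of $\SLE_4$ (Theorem~\ref{thm:level_lines_coupling}).

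\textbf{Step 1: the other curves form a local set.} Fix $j$ and set $A \coloneqq \bigcup_{k \neq j} \eta_k$. By Lemma~\ref{lem:frontiers_are_level_lines}, every $\eta_k$ is a level line of $\Phi + u_\beta$ of some height $(2m_k+1)\lambda$, and is therefore determined by $\Phi$. Remark~\ref{rmk:union} and Lemma~\ref{lem:local_set_properties}\ref{it:conditionally_independent_union} then show that $A$ is a local set of $\Phi$. By Remark~\ref{rmk:conditional_law_level_lines}, the conditional law of $\eta_j$ given $A$ is that of the level line of height $(2m_j+1)\lambda$ from $\pp_j$ of the field $\Phi^A + h_A + u_\beta$, restricted to the component $D_j$. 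Conditionally on $A$, $\Phi^A$ is a zero-boundary GFF on $D_j$.

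\textbf{Step 2: identify the boundary data on $\partial D_j$.} Each side of each $\eta_k$ carries boundary value $2m_k\lambda$ or $(2m_k+2)\lambda$ (Remark~\ref{rmk:boundary_data}). The real boundary segments carry the values of $u_\beta$. I will argue that $\partial D_j$ splits into two arcs, the clockwise and counterclockwise arcs from $\pp_j$ to $\np_j$. On one arc the data are identically $2m_j\lambda$, and on the other identically $(2m_j+2)\lambda$.

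For the curves, monotonicity of first passage sets (Lemma~\ref{lem:fps_monotonicity}), together with the fact that the frontiers do not cross, shows that curves of a different height lying on $\partial D_j$ must have their appropriate side facing $D_j$. Concretely, $D_j$ lies between $\bA_{2m_j\lambda}$ and $\bA_{(2m_j+2)\lambda}$. The neighbouring frontiers are boundaries of components of the complement of these sets, so they present exactly the value $2m_j\lambda$ or $(2m_j+2)\lambda$ toward $D_j$.

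For the real segments on $\partial D_j$, the relevant fact is that $D_j$ is a component of the region where $h_A + u$ is pinched between the two levels. A real segment can touch $D_j$ only if its $u_\beta$-value lies in $\{2m_j\lambda, (2m_j+2)\lambda\}$. This uses that $\beta$ is maximally sloped, so heights change by exact multiples of $2\lambda$. The jumps at the endpoints $\pp_j$ and $\np_j$ then locate the two arcs.

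\textbf{Step 3: conclude.} With this two-valued boundary data, Theorem~\ref{thm:level_lines_coupling} applied in $D_j$ (via a conformal map to $(\bH;0,\infty)$) with no force points, i.e.\ all weights $\rho = 0$, says that the level line of height $(2m_j+1)\lambda$ from $\pp_j$ is an $\SLE_4$. Its target is $\np_j$, the unique boundary jump of the correct sign in $D_j$. Uniqueness of the coupling, since level lines are determined by the field, identifies this curve with $\eta_j$. Hence $\eta_j$ is a chordal $\SLE_4$ in $(D_j;\pp_j,\np_j)$. If needed, reversibility of level lines handles the endpoint convention.

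\textbf{Main obstacle.} I expect Step 2 to be the hardest part, namely the claim that the boundary data on $\partial D_j$ are exactly the two adjacent values with no further jumps. This is a topological argument about frontiers of nested FPS. I would first try to run it via the nesting $\bA_{2m\lambda} \supset \bA_{(2m+2)\lambda}$ and Lemma~\ref{lem:frontiers_are_level_lines}, arguing by induction over the components. A further subtlety is boundary points where several curves meet at a fused point $x_i$; these are a null set for harmonic measure, so they do not affect the boundary data.
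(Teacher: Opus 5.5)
Your proposal is correct and follows essentially the paper's route: condition on the other level lines as a local set, show that the boundary data of $\Phi^A+h_A+u_\beta$ on $\partial D_j$ is $2m_j\lambda$ on one arc and $(2m_j+2)\lambda$ on the other with jumps only at $\pp_j,\np_j$, and conclude by the level-line coupling of $\SLE_4$. The paper handles your Step~2 by an ordered exploration instead of FPS nesting (first the lines of heights $(2m_j-1)\lambda$ and $(2m_j+3)\lambda$, then the other same-height lines from the remaining positive jump points, which is what rules out extra jumps); if you keep the nesting formulation, note that the lower neighbours of height $(2m_j-1)\lambda$ are frontiers of $\bA_{2(m_j-1)\lambda}$ with $D_j$ on their FPS side, not boundaries of complement components of $\bA_{2m_j\lambda}$.
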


\begin{figure}[ht!]
\centering
\includegraphics[width=.6\textwidth]{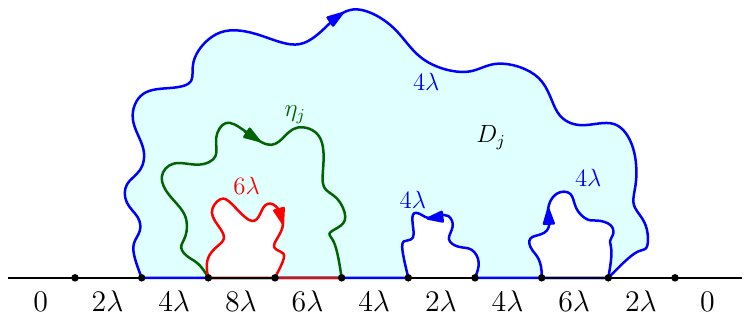}
\caption{Illustration of the proof of the resampling property, Theorem~\ref{thm:resampling_property}. 
In the figure, we explore the domain $D_j$ (colored in light blue) of the curve $\eta_j$. We begin by exploring the level lines of height $(2m-3)\lambda$ (in blue) and then the level lines of height $(2m+1)\lambda$ (in red). 
Then, we further explore the level lines of height $(2m-1)\lambda$ which are not $\eta_j$ (if there are any) to obtain $D_j$. 
Then, it is clear from the boundary data of the GFF that the conditional law of $\eta_j$ given the other curves is that of an $\SLE_4$ in $D_j$.
\label{fig:proof_resampling_property}}
\end{figure}

\begin{proof}
Fix $\pp_j \in \PJ_m$ for some $m \in \{1,\ldots,\maxu\}$, where $\maxu = \tfrac{1}{2\pi} \max u_\beta$, and let $\eta_j$ be the level line of height $2(m-1)\lambda$ starting from $\pp_j$.
Recall from Lemma~\ref{lem:frontiers_are_level_lines} that for each $k \in \{1,\dots,\maxu\}$, every level line of height $(2k-1)\lambda$ grows from a point in $\PJ_k$ and terminates at a point in $\NJ_k$, 
and the boundary data of GFF on the left (resp.~right) of the curve is $2(k-1)\lambda$ (resp.~$2k\lambda$). 
Moreover, recall from Remark~\ref{rmk:conditional_law_level_lines} that the conditional law of a level line given a union $A$ of other level lines 
is that of a level line of the field $\Phi^A + h_A + u_\beta$. Thus, if we sample the level line $\eta^{a,m-1}$ of height $(2m-3)\lambda$ from each point $x_a^{m-1} \in \PJ_{m-1}$, then $\bigcup_{a = 1}^{n_{m-1}} \eta^{a,m-1}$ cuts $\bH$ into regions where the boundary data of the GFF is at most $2(m-2)\lambda$ (regions which lie to the left of the curves) and regions where the boundary data is at least $2(m-1)\lambda$ (regions which lie to the right of the curves). Similarly, if we in addition sample the level lines $\eta^{a,m+1}$ of height $(2m+1)\lambda$ from the points $\PJ_{m+1}$, then the resulting collection of curves 
$\big(\bigcup_{a = 1}^{n_{m-1}} \eta^{a,m-1}\big) \cup \big(\bigcup_{a = 1}^{n_{m+1}} \eta^{a,m+1}\big)$ cuts $\bH$ into domains of three types: 
\begin{enumerate}[leftmargin=*, label=(\roman*)]
\item domains where the boundary data of the GFF is at most $2(m-2)\lambda$ (domains which lie to the left of the curves $\eta^{a,m-1}$, $a \in \{ 1,\dots,n_{m-1}\}$);
\item domains where the boundary data of the GFF is at least $2(m+1)\lambda$ (domains which lie to the right of the curves $\eta^{a,m+1}$, $a \in \{ 1,\dots,n_{m+1}\}$); and 
\item\label{it:domain_type_iii} domains where the boundary data of the GFF belongs to $\{ 2(m-1)\lambda,2m\lambda \}$ (domains which lie to the right of curves $(\eta^{a,m-1})$ 
and to the left of the curves $(\eta^{a,m+1})$.
\end{enumerate}
Then, any level line of height $2(m-1)\lambda$ is contained in a domain of type~\ref{it:domain_type_iii}. We let $\wt{D}_j$ be the connected component of 
$\bH \setminus \big(\big(\bigcup_{a = 1}^{n_{m-1}} \eta^{a,m-1}\big) \cup \big(\bigcup_{a = 1}^{n_{m+1}} \eta^{a,m+1}\big)\big)$ 
which contains $\eta_j$. Next, we sample the level lines of height $2(m-1)\lambda$ from the points $(\PJ_m \cap \partial \wt{D}_j) \setminus \{ \pp_j \}$. 
Upon doing so, we obtain the domain $D_j$ where the boundary data only has two points of discontinuities, $\pp_j$ and $\np_j$, and such that on the clockwise (resp.~counterclockwise) arc from $\pp_j$ to $\np_j$, the boundary data of the GFF is $2(m-1)\lambda$ (resp.~$2m\lambda$). Consequently, the conditional law of $\eta_j$ given the other level lines $\eta_1,\ldots,\eta_{j-1},\eta_{j+1},\ldots,\eta_N$ is that of a chordal $\SLE_4$ in $D_j$ from $\pp_j$ to $\np_j$.
This is what we sought to prove. 
\end{proof}

We can now conclude the existence part of the proof of Theorem~\ref{thm:existence_fused_multiple_sle}.

\begin{proof}[Proof of Theorem~\ref{thm:existence_fused_multiple_sle}: Existence.]
By Theorem~\ref{thm:resampling_property}, the collection $\{ \bs{\eta} \} = \OML_\beta$ satisfies the resampling property; the defining property of global $\SLE_4$. 
Thus, the proof of existence of global $\aSLE_4$ for $\alpha \in \PLP_\multii$ is reduced to showing that $\conn(\bs{\eta}) = \alpha$ with positive probability, whenever $\smash{\beta \KWleq \alpha}$. 
This is a consequence of the proof of Theorem~\ref{thm:CFT properties}~(POS).
\end{proof}

\subsection{Uniqueness of global multiple $\SLE_4$}
\label{subsec:uniqueness_proof}

\begin{proposition}\label{prop:coupling_fused_multiple_sle}
Fix $\alpha \in \PLP_\multii$. 
Suppose a probability measure on curves $\bs \eta \in X_0^{\alpha}(\bH; x_1, \ldots, x_p)$ satisfies the resampling property, i.e., 
it is a~global $\aSLE_4$ in the sense of Definition~\ref{def:fused_multiple_sle}.
Then, there exists a coupling $(\Phi,\bs \eta)$ of the curves $\bs \eta$ with a GFF $\Phi$ so that $\{\bs \eta\}$ has the same law as the collection $\OML_\beta(\Phi)$ 
of level lines of $\Phi + u_\beta$, conditional on the event $\{ \conn = \alpha \}$,
for any boundary data $\beta \in \GLP_\multii$ such that $\alpha \in \PLP_\multii(\beta)$.
\end{proposition}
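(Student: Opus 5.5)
The plan is to build the field \emph{on top of} the given curves. Then I check that the resulting pair (field, curves) has the same local conditional structure as the pair $(\Phi+u_\beta,\OML_\beta)$ conditioned on $\{\conn=\alpha\}$. Finally I identify the two laws through an exploration argument that uses only one curve at a time, avoiding any Markov chain.

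\emph{Construction.} Fix $\beta\in\GLP_\multii$ with $\smash{\beta\KWleq\alpha}$, and let $\bs\eta=(\eta_1,\dots,\eta_N)$ be sampled from the given global $\aSLE_4$.
\begin{itemize}
\item By Lemma~\ref{lem:KWleq importance}, there is a locally constant function on $\bH\setminus\bigcup_k\eta_k$ with boundary values $U_\beta$ on $\bR$ whose values on neighbouring components differ by $\pm1$. Multiplying by $2\lambda$ gives boundary data on every complementary component $O$ of $\bigcup_k\eta_k$. These data are of the form $2\lambda U_\beta$ on $\partial O\cap\bR$ and $2k\lambda$ on the curves, with the two sides of each curve carrying $2(k-1)\lambda$ and $2k\lambda$.
\item Conditionally on $\bs\eta$, sample independent zero-boundary GFFs in each component $O$ and add the corresponding bounded harmonic functions. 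Call the resulting field $\Psi$.
\end{itemize}
The claim is that $(\Psi,\bs\eta)$ has the law of $(\Phi+u_\beta,\OML_\beta(\Phi))$ under $\PR_\beta[\,\cdot\,|\,\conn=\alpha]$. Since $\OML_\beta$ is determined by the field, the curves would then also be level lines of $\Psi$, which is the desired coupling.

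\emph{Local structure.} The first step is to check a one-curve Gibbs property for $(\Psi,\bs\eta)$. Fix $j$, and condition on $\eta_k$ for $k\neq j$ and on $\Psi$ outside $D_j$. By the resampling property, $\eta_j$ is a chordal $\SLE_4$ in $(D_j;\pp_j,\np_j)$. By construction, the field in $D_j$ is an independent GFF on each side of $\eta_j$, with boundary values $2(m-1)\lambda$ and $2m\lambda$ on the two sides. By the Schramm--Sheffield coupling (Theorem~\ref{thm:level_lines_coupling} with two boundary values), the field in $D_j$ is then a GFF on $D_j$ with the two-valued boundary data inherited from $\partial D_j$, and $\eta_j$ is its level line of height $(2m-1)\lambda$. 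The same statement holds for $(\Phi+u_\beta,\OML_\beta)$ conditioned on $\conn=\alpha$. This follows from the proof of Theorem~\ref{thm:resampling_property}, because the event $\{\conn=\alpha\}$ is measurable with respect to the other curves (the endpoint $\np_j$ is determined by them).

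\emph{Identification of laws, by induction on $N$.} For $N=1$ both laws are $\SLE_4$ plus the GFF, so there is nothing to prove. For the inductive step, I would pick an index $j$ such that $\eta_j$ is a link of $\alpha$ whose removal leaves a smaller pattern that still belongs to $\PLP$ for an induced boundary condition; for instance, an outermost link of the top height level $m=\maxu$. Then:
\begin{itemize}
\item Conditionally on $\eta_j$, the remaining curves in each component of $\bH\setminus\eta_j$ form a global multiple $\SLE_4$ with fewer curves, because the resampling property restricts to sub-collections.
\item By the inductive hypothesis and conformal invariance, in each component these curves, together with the field $\Psi$, have the law of level lines of a GFF conditioned on the induced connectivity.
\item Combining this with the single-curve description of $\eta_j$ given the others, the joint law of $(\Psi,\bs\eta)$ is reduced to the marginal law of $\eta_j$ alone.
\end{itemize}

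\emph{Main obstacle: pinning down the marginal of $\eta_j$.} A priori, the marginal of $\eta_j$ under the given measure could differ from its marginal under $\PR_\beta[\,\cdot\,|\,\conn=\alpha]$ while all conditional laws still agree. To rule this out I would show that the law of $\Psi$ is absolutely continuous with respect to the law of $\Phi+u_\beta$ restricted to $\{\conn=\alpha\}$. I would then use the one-curve Gibbs property for every $j$ to conclude that the Radon--Nikodym derivative $F$ is invariant under every local resampling. Concretely, $F$ must be measurable with respect to each $\sigma$-algebra generated by the field outside $D_j$ together with the other curves. The crucial input is that these $\sigma$-algebras, as $j$ varies, intersect trivially on $\{\conn=\alpha\}$. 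For this I would use the fact that level lines are determined by the field (Theorem~\ref{thm:level_lines_coupling}). I would also use the fact that one can pass between $\sigma$-algebras by exploring different subsets of level lines, which are all local sets (Lemma~\ref{lem:local_set_properties}). Establishing this trivial intersection, or an equivalent ergodicity statement that does not secretly rerun a Markov chain, is the delicate part of the argument. Once it holds, $F\equiv1$, and the coupling follows for every admissible $\beta$ simultaneously.
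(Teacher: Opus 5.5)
\paragraph{The gap.} Your construction of $\Psi$ on top of the curves matches the paper's: locally constant data with jumps $\pm 2\lambda$ from Lemma~\ref{lem:KWleq importance}, and independent zero-boundary GFFs in the complementary components. Your one-curve Gibbs check is also fine. The problem is that the argument stops exactly where the content of the proposition lies.

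After your induction on $N$, the marginal law of $\eta_j$ is only known to be a fixed point of the kernel ``sample the other curves given $\eta_j$, then resample $\eta_j$ as a chordal $\SLE_4$ in $D_j$''. Proving that this fixed point is unique \emph{is} the uniqueness problem for global $\aSLE_4$. Your substitute rests on two claims that you neither prove nor can extract from the tools you invoke.
\begin{itemize}
\item Absolute continuity of the law of $(\Psi,\bs\eta)$ with respect to $\PR_\beta[\,\cdot\,|\,\conn=\alpha]$. The only information about the given measure is its resampling property. Nothing you have rules out a second resampling-invariant law that is singular to the GFF one, and excluding that is the very claim.
\item Triviality of $\bigcap_j\mathcal{G}_j$, where $\mathcal{G}_j$ is generated by $(\eta_k)_{k\neq j}$ and $\Psi|_{\bH\setminus D_j}$. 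An event in this intersection is exactly an event invariant under every single-curve resampling. So its triviality is ergodicity of the resampling dynamics, i.e.\ the Markov-chain statement you wanted to avoid.
\end{itemize}
That level lines are determined by the field, and that explored level lines are local sets, says nothing about either point. So the identification of the laws is missing, not merely delicate.

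\paragraph{The missing idea.} No new uniqueness argument is needed in the fused setting; you can reduce to the unfused case, where uniqueness is known. The paper proceeds as follows.
\begin{itemize}
\item From each $\eta_j$, remove the initial piece up to its first hitting of $\partial B(a_j,\epsilon)$ and the terminal piece after its last visit to $\partial B(b_j,\epsilon)$. Let $\Gamma_\epsilon$ be the union of these pieces and map out by $g_{\Gamma_\epsilon}$.
\item The images $\bs\eta^\epsilon$ of the remaining pieces have $2N$ distinct endpoints. They form a global multiple $\SLE_4$ with the unfused pattern $\imath(\alpha)\in\LP_N$.
\item This law is unique by \cite[Theorem~1.2]{BPW:On_the_uniqueness_of_global_multiple_SLEs}. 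By Theorem~\ref{thm:resampling_property}, it is realized as $\OML_{\imath(\beta)}$ on the event $\{\conn=\imath(\alpha)\}$.
\item Hence, given $\Gamma_\epsilon$, the field $\Psi\circ g_{\Gamma_\epsilon}^{-1}$ built on top of the curves is a zero-boundary GFF plus $u_{\imath(\beta)}^\epsilon$, conditioned on that connectivity event.
\item Letting $\epsilon\to0$ yields the coupling with $\Phi+u_\beta$ conditioned on $\{\conn=\alpha\}$, for every admissible $\beta$ simultaneously.
\end{itemize}
In this way the ergodicity input you are trying to establish is imported, through the $\epsilon$-cut, from the non-fused result instead of being re-proved.
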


We note that it is a priori not completely obvious that Proposition~\ref{prop:coupling_fused_multiple_sle} holds. 
The reason for this is that, while we constructed $\aSLE_4$ by virtue of a coupling of $\SLE_4$-type curves with a GFF, 
we do not a priori know that there is no other probability measure on curves satisfying the resampling property, which does not naturally couple with a GFF.  
The most important ingredient in the proof is that for a given $\aSLE_4$ with $\alpha \in \PLP_\multii$, 
we know that $\imath(\alpha) \in \LP_N$ and $\iaSLE_4$ is a global multiple $\SLE_4$ with link pattern $\imath(\alpha)$. This law is unique, by \cite[Theorem~1.2]{BPW:On_the_uniqueness_of_global_multiple_SLEs}, and we constructed it as $\OML_{\imath(\beta)}(\Phi)$. 
Thus, we can construct a field $\Psi$ which, upon mapping out initial and terminal segments of the $\aSLE_4$ curves of small diameter $\epsilon > 0$ (so that we get a family of $\iaSLE_4$ curves) is coupled with $\iaSLE_4$. From this, we can then deduce the coupling by sending $\epsilon \to 0$.

\begin{proof}[Proof of Proposition~\ref{prop:coupling_fused_multiple_sle}]
Fix $\alpha \in \PLP_\multii$. 
We enumerate the curves $\bs \eta = (\eta_1,\ldots,\eta_N)$ from left to right: 
if $a_j \coloneqq \min(\eta_j \cap \bR)$, 
then $a_1 \leq \cdots \leq a_N$ and if $a_j = a_{j+1}$, 
then $\eta_{j+1}$ lies in the bounded connected component of $\bH \setminus \eta_j$.  
Write $b_j \coloneqq \max(\eta_j \cap \bR)$. For each $j \in \{1,\ldots,N\}$, we let $D_j$ denote the connected component of $\bH \setminus \bigcup_{k \neq j} \eta_k$ which contains $\eta_j$ and we let $D_j^-$ (resp.~$D_j^+$) denote the intersection of $D_j$ with the unbounded (resp.~bounded) connected component of $\bH \setminus \eta_j$.  
Since the family $\bs \eta$ satisfies the resampling property, the conditional law of $\eta_j$ given $(\eta_k)_{k \neq j}$ is that of an $\SLE_4$ in $D_j$ from $a_j$ to $b_j$ 
(note here that, by the reversibility of $\SLE_4$, the time-reversed curve is an $\SLE_4$ from $b_j$ to $a_j$). Finally, we denote by $(O_j)$ the collection of connected components of $\bH \setminus \bigcup_{j=1}^N \eta_j$ and note that the unbounded component $O_1$ coincides with $D_j^-$ for at least one $j \in \{1,\dots,N\}$. Any other domain $O_k$, with $k \neq 1$, 
coincides with $D_j^+$ for exactly one $j$, and can coincide with $D_\ell^-$ for one or more $\ell$. 
We shall construct a field by assigning to each domain $O_k$ a zero-boundary GFF plus some constant, but before writing down the choice, let us define some quantities of interest.

Similarly to the proof of Theorem~\ref{thm:corss_proba_H}, for all $j \in \{1,\ldots,N\}$, we define 
\begin{align*}
\; &
\begin{cases}
\tau_j^\epsilon \coloneqq  \inf \{t \geq 0 \colon \eta_j(t) \in \partial B(a_j,\epsilon) \} , \\
\sigma_j^\epsilon \coloneqq  \sup \{t \geq 0 \colon \eta_j(t) \in \partial B(b_j,\epsilon) \}, 
\end{cases}
\qquad 
\textnormal{for all $j \in \{1,\ldots,N\}$;} \qquad 
\\
\; &
\Gamma_\epsilon^j \coloneqq  \eta_j\big([0,\tau_j^\epsilon] \cup [\sigma_j^\epsilon,\infty)\big) 
\qquad \textnormal{and} \qquad 
\Gamma_\epsilon \coloneqq \bigcup_{j=1}^N \Gamma_\epsilon^j.
\end{align*}
Let $g_{\Gamma_\epsilon} \colon \bH \setminus \Gamma_\epsilon \to \bH$ be the conformal map normalized as $|g_{\Gamma_\epsilon}(z) - z| \to 0$ as $|z| \to \infty$. 
Write
\begin{align*}
\eta_j^\epsilon = g_{\Gamma_\epsilon}(\eta_j[\tau_j^\epsilon,\sigma_j^\epsilon]), \qquad 
a_j^\epsilon = g_{\Gamma_\epsilon}(\eta_j(\tau_j^\epsilon)), \qquad \textnormal{and} \qquad  
b_j^\epsilon = g_{\Gamma_\epsilon} ( \eta_j(\sigma_j^\epsilon)),
\end{align*}
for $j \in \{1,\ldots,N\}$, and note that the curves $\bs{\eta}^\epsilon \coloneqq (\eta_1^\epsilon,\ldots,\eta_N^\epsilon)$ 
have the unique law of an $\iaSLE_4$
by~\cite[Theorem~1.2]{BPW:On_the_uniqueness_of_global_multiple_SLEs}. 
Moreover, we already know by Theorem~\ref{thm:resampling_property} that $\iaSLE_4$ can be obtained as a set of level lines for some zero-boundary GFF. 

Take $\beta \in \GLP_\multii$ such that $\alpha \in \PLP_\multii(\beta)$, and let $u_\beta$ be the harmonic function defined via Equation~\eqref{eq:harmonic_function} 
with $\{x_1,\ldots,x_p\} = \{a_1,\ldots,a_N,b_1,\ldots,b_N\}$ (not counting the same point twice).  
Similarly,  let $u_{\imath(\beta)}^\epsilon$ denote the harmonic function defined via Equation~\eqref{eq:harmonic_function}  
with $\{x_1,\ldots,x_p\} = \{a_1^\epsilon,\ldots,a_N^\epsilon,b_1^\epsilon,\ldots,b_N^\epsilon\}$ and $p = 2N$, 
where the boundary data is such that $(u_{\imath(\beta)}^\epsilon \circ g_{\Gamma_\epsilon}^{-1})|_{\bR} = u_\beta|_{\bR}$. 
Since the Dyck path $\imath(\beta)$ only takes steps of size $\pm 1$, 
on the intervals between the points $\{a_1^\epsilon,\ldots,a_N^\epsilon,b_1^\epsilon,\ldots,b_N^\epsilon\}$, the boundary data of $u_{\imath(\beta)}^\epsilon$ jumps by $\pm 2\lambda$: 
%(see Figure~\ref{fig:boundary_data_unfused}).

\begin{figure}[h!]
\centering
\includegraphics[width=.6\textwidth]{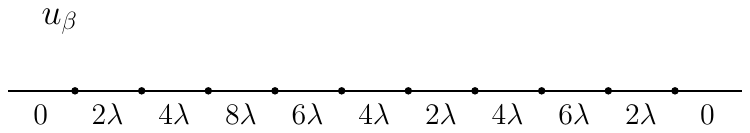} \\
\vspace{1.5em}
\includegraphics[width=.6\textwidth]{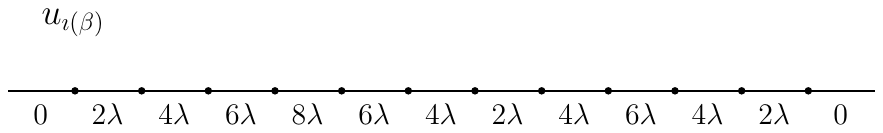}
\caption{Illustration of the boundary data of the functions $u_\beta$ and $u_{\imath(\beta)}$, for $\beta = (0,1,2,4,3,2,1,2,3,1,0)$ (so that $\imath(\beta) = (0,1,2,3,4,3,2,1,2,3,2,1,0)$). 
\label{fig:boundary_data_unfused}}
\end{figure}

For each $k$, we let $\Phi_k$ be a zero-boundary GFF in $O_k$, independent of $(\Phi_\ell)_{\ell \neq k}$ and the curves $\bs{\eta}$. We will define a coupling, by choosing the boundary data for the fields as follows.  In the domain $O_k$, we choose $H_k = u_{\imath(\beta)}^\epsilon(g_{\Gamma_\epsilon}(\partial O_k) \cap \bR)$, and set
\begin{align*}
\Psi|_{O_k} \coloneqq \Phi_k + H_k.
\end{align*}
We write $\Phi_j^\pm = \Phi_k$ and $h_j^\pm = H_k$ if $D_j^\pm = O_k$ and note that if $O_k$ and $O_\ell$ share a boundary segment, then, since $u_{\imath(\beta)}^\epsilon$ jumps by $\pm 2\lambda$, 
we have $H_k - H_\ell \in \{-2\lambda,2\lambda\}$. Thus, it follows that $h_j^+ - h_j^- \in \{-2\lambda,2\lambda\}$. Moreover, we let 
\begin{align*}
\pp_j = 
\begin{cases}
a_j , &\textnormal{if} \ a_j \in \PJ , \\
b_j , &\textnormal{if} \ b_j \in \PJ ,
\end{cases}
\qquad \textnormal{and} \qquad 
\np_j \in \{a_j,b_j\} \setminus \{ \pp_j \}. 
\end{align*}
By construction, the following now holds:
\begin{itemize}[leftmargin=*]
\item $\Psi|_{D_j^-} = \wt{\Phi}_j^- +\,  h_j^-$;
\item $\Psi|_{D_j^+} = \wt{\Phi}_j^+ +\, h_j^+$; and
\item $\eta_j$ is an $\SLE_4$ in $D_j$ from $\pp_j$ to $\np_j$. 
\end{itemize}
Thus, since $h_j^+ - h_j^- \in \{-2\lambda,2\lambda\}$, we obtain a coupling $(\Psi|_{D_j},\eta_j)$ such that $\Psi|_{D_j} = \smash{\wt{\Phi}_j + u_\beta^j}$, 
where $\smash{\wt{\Phi}_j}$ is a zero-boundary GFF in $D_j$ and $\smash{u_\beta^j}$ is the bounded harmonic function in $D_j$ with boundary data $h_j^-$ (resp.~$h_j^+$) on the clockwise (resp.\ counterclockwise) arc from $a_j$ to $b_j$, and $\eta_j$ is the level line of $\Psi|_{D_j}$ of height $h_j^+ - \tfrac{1}{2}(h_j^+ - h_j^-) = \tfrac{1}{2}( h_j^+ + h_j^-)$.  
Note that if $\pp_j = a_j$ (resp.~$\pp_j = b_j$), then $h_j^+ - h_j^- = 2\lambda$ (resp.~$h_j^+ - h_j^- = -2\lambda$). 
See Figure~\ref{fig:exploration}.

Next, we will describe what the couplings imply for the family of curves with mapped out initial segments of diameter $\epsilon$.  
On the one hand, as noted above, ${\bs \eta}^\epsilon$ has the unique law of an $\iaSLE_4$
by~\cite[Theorem~1.2]{BPW:On_the_uniqueness_of_global_multiple_SLEs} --- 
and by Theorem~\ref{thm:resampling_property}, $\iaSLE_4$ can be obtained as the set $\OML_{\imath(\beta)}(\Phi)$ for some zero-boundary GFF $\Phi$, on the event $\{ \conn(\Phi + u_{\imath(\beta)}) = \imath(\alpha) \}$. 
On the other hand, let us consider the above construction. 
Here, $\smash{\Gamma_\epsilon^j}$ is a local set for the zero-boundary GFF $\smash{\wt{\Phi}_j}$. We write its Markovian decomposition with respect to $\Gamma_\epsilon^j$ as
\begin{align*}
\wt{\Phi}_j = \wt{\Phi}_j^{\Gamma_\epsilon^j} + (\wt{\Phi}_j)_{\Gamma_\epsilon^j},
\end{align*}
and denote by $\wt{h}_j^{\Gamma_\epsilon^j}$ the bounded function which is harmonic and satisfies
\begin{align*}
\wt{h}_j^{\Gamma_\epsilon^j}|_{D_j \setminus \Gamma_\epsilon^j} = (\wt{\Phi}_j)_{\Gamma_\epsilon^j} 
\qquad \textnormal{and} \qquad
\wt{h}_j^{\Gamma_\epsilon^j}|_{\Gamma_\epsilon^j} = 0 .
\end{align*}
Then, for each $\epsilon > 0$, the curve $\eta_j^\epsilon$ is a level line of $(\smash{\wt{\Phi}_j}^{\Gamma_\epsilon^j} + \wt{h}_j^{\Gamma_\epsilon^j} + u_\beta^j) \circ g_{\Gamma_\epsilon}^{-1}$ of height $\tfrac{1}{2}( h_j^+ + h_j^-)$. 
Furthermore, $(\Psi \circ g_{\Gamma_\epsilon}^{-1})|_{g_{\Gamma_\epsilon}(O_k)} = (\Phi_k + H_k) \circ g_{\Gamma_\epsilon}^{-1}$ is a zero-boundary GFF plus $H_k$ in $g_{\Gamma_\epsilon}(O_k)$,
and given $\Gamma_\epsilon$, in the complementary connected components of the curves ${\bs \eta}^\epsilon$, the field $\Psi$ constructed above is precisely such that 
$\{ {\bs \eta}^\epsilon \}$ has the same law as the set $\OML(\Psi \circ g_{\Gamma_\epsilon}^{-1})$. 
Thus, it follows that for each $\epsilon > 0$, the conditional law of $\Psi \circ g_{\Gamma_\epsilon}^{-1}$, given $\Gamma_\epsilon$, is that of a zero-boundary GFF plus the harmonic function $u_{\imath(\beta)}^\epsilon$, conditioned on the event $\{ \conn(\Phi + u_{\imath(\beta)}^\epsilon) = \imath(\alpha) \}$. 
In particular, taking $\epsilon \to 0$ we obtain a coupling $(\Phi,{\bs \eta})$ between a zero-boundary GFF $\Phi$ on $\bH$, conditioned on the event $\{ \conn(\OML_\beta(\Phi)) = \alpha \}$, and ${\bs \eta}$ such that $\{ {\bs \eta} \} = \OML_\beta(\Phi)$, as desired. 
Since $\beta \in \GLP_\multii$ was arbitrary, the proof is complete.
\end{proof}

\begin{corollary}\label{cor:independence_of_beta}
Fix $\alpha \in \PLP_\multii$. Then, for any $\beta \in \GLP_\multii$ such that $\alpha \in \PLP_\multii(\beta)$, 
 the conditional law of $\{ \bs{\eta}\}=\OML_\beta$ given the event $\{ \conn = \alpha \}$ is independent of $\beta$.
\end{corollary}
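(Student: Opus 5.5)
The corollary follows from Theorem~\ref{thm:resampling_property} and Proposition~\ref{prop:coupling_fused_multiple_sle}. The idea is to show that, for each admissible $\beta$, the conditional law in question is a global $\aSLE_4$, and then to use Proposition~\ref{prop:coupling_fused_multiple_sle} to identify any global $\aSLE_4$ with the conditioned level-line law for \emph{every} admissible boundary datum at once.

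First fix $\beta \in \GLP_\multii$ with $\alpha \in \PLP_\multii(\beta)$, so that $\beta \KWleq \alpha$. By the positivity part of Theorem~\ref{thm:CFT properties} (equivalently Theorem~\ref{thm:corss_proba_H}), the event $\{\conn = \alpha\}$ has positive probability under $\PR_\beta$. Hence the conditional law $\mu_\beta$ of $\{\bs\eta\} = \OML_\beta$ given $\{\conn = \alpha\}$ is well defined. We next check that $\mu_\beta$ satisfies the resampling property of Definition~\ref{def:fused_multiple_sle}. The event $\{\conn = \alpha\}$ is determined by the topological configuration. Given the curves $(\eta_k)_{k\neq j}$, the endpoints $\pp_j, \np_j$ of $\eta_j$ are fixed, as noted before Theorem~\ref{thm:resampling_property}, and any simple curve in $D_j$ connecting them yields the same link pattern. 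So conditioning on $\{\conn=\alpha\}$ does not change the conditional law of $\eta_j$ given the others. Theorem~\ref{thm:resampling_property} then shows that this conditional law is chordal $\SLE_4$ in $(D_j;\pp_j,\np_j)$. Thus $\mu_\beta$ is a global $\aSLE_4$ on $X_0^\alpha(\bH;x_1,\ldots,x_p)$.

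Now take two admissible data $\beta, \beta' \in \GLP_\multii$, both with $\alpha \in \PLP_\multii(\beta)\cap\PLP_\multii(\beta')$. Apply Proposition~\ref{prop:coupling_fused_multiple_sle} to the global $\aSLE_4$ given by $\mu_\beta$. The proposition states that, for \emph{any} admissible boundary datum, in particular $\beta'$, there is a coupling in which $\{\bs\eta\}\sim\mu_\beta$ has the same law as $\OML_{\beta'}(\Phi)$ conditioned on $\{\conn=\alpha\}$, that is, $\mu_{\beta'}$. Hence $\mu_\beta = \mu_{\beta'}$, which is the claim.

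The main point needing care is the first step. One has to confirm that conditioning on the positive-probability event $\{\conn=\alpha\}$ commutes with the resampling conditional laws, i.e.\ that $\{\conn=\alpha\}$ is measurable with respect to $(\eta_k)_{k\neq j}$ up to a null set. This follows because $\eta_j$ is almost surely a simple curve in $D_j$ between the prescribed endpoints, so its connectivity cannot vary once the other curves are fixed. Beyond this, the argument is a direct combination of Theorem~\ref{thm:resampling_property} and Proposition~\ref{prop:coupling_fused_multiple_sle}.
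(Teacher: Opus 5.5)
Your proposal is correct and follows the paper's route: the conditioned law of $\OML_\beta$ is a global $\aSLE_4$ by Theorem~\ref{thm:resampling_property}, and Proposition~\ref{prop:coupling_fused_multiple_sle} then identifies it with the conditioned law of $\OML_{\beta'}$ for every admissible $\beta'$. You also make explicit two points the paper leaves implicit: that $\PR_\beta[\conn=\alpha]>0$, and that $\{\conn=\alpha\}$ is determined by the curves other than $\eta_j$, so conditioning on it preserves the resampling property.
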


\begin{proof}
By Theorem~\ref{thm:resampling_property}, $\bs{\eta}$ satisfy the resampling property for any $\beta \in \GLP_\multii$. 
Moreover, if a given realization of $\bs{\eta}$ satisfies $\conn(\bs{\eta}) = \alpha$ for some $\alpha \in \PLP_\multii(\beta)$, 
then we can couple it as $\OML_{\tilde{\beta}}(\wt{\Phi}) = \{\bs{\eta}\}$ for some zero-boundary GFF $\wt{\Phi}$ and boundary data $\tilde{\beta} \in \GLP_\multii$. 
Since the marginal law of $\{ \bs{\eta}\}$ remains the same, said conditional law does not depend on $\beta$.
\end{proof}

\begin{proof}[Proof of Theorem~\ref{thm:existence_fused_multiple_sle}: Uniqueness.]
By Proposition~\ref{prop:coupling_fused_multiple_sle}, any global $\aSLE_4$ can be coupled with a zero-boundary GFF $\Phi$, as $\OML_\beta(\Phi)$ for some $\beta \in \GLP_\multii$ for which $\alpha \in \PLP_\multii(\beta)$. By Corollary~\ref{cor:independence_of_beta}, this law is independent of $\beta$. Hence, there is a unique global $\aSLE_4$. 
Recalling that existence was proved in Section~\ref{subsec:resampling_property}, the proof of Theorem~\ref{thm:existence_fused_multiple_sle} is complete.
\end{proof}

\begin{remark}
If there is some $k$ such that $\dir \beta_j > 0$ for all $j \leq k$ and $\dir \beta_j < 0$ for all $j > k$, then there is only one possible link pattern (the rainbow) for the frontiers of $\bA_0^u,\ldots,\bA_{2(\maxu-1)\lambda}^u$.
\end{remark}

\subsection{Marginal law: the unconditional case}\label{subsec:local_description_unconditional_case}

At this point, we have established the existence and uniqueness of global $\aSLE_4$ measures. 
In the remainder of this section, we give a local description of $\aSLE_4$. 
We begin by describing the marginal law of one curve, and verify that for starting points of valence one, 
the conformal block $\CobloF_\beta$ is the well-known 
\quote{$\SLE_\kappa(\rho)$ partition function} (cf.~\cite{Schramm-Wilson:SLE_coordinate_changes,
Kytola:On_CFT_of_SLE_kappa_rho, Lawler:Partition_functions_loop_measure_and_versions_of_SLE}).

\subsubsection{$\SLE_4(\ul{\rho})$ description}
\label{subsec:SLE_kappa_rho_description}

Since the law of each level line of $\Phi + u_\beta$ is that of an $\SLE_4(\ul{\rho})$ curve, with explicit $\ul{\rho}$ (see \cite[Theorem~1.1.1]{Wang-Wu:Level_lines_of_Gaussian_free_field_I}), 
we know the exact Loewner description of said level line.
For each $j \in \{1,\ldots,p\}$, we let
\begin{align*}
\rho_j \coloneqq 2(\beta_j-\beta_{j-1}) = 2 \dir \beta_j.
\end{align*}
Assuming $\rho_k = 2$ (so $\dir \beta_k = 1$, and the jump in boundary data for $u_\beta$ at $x_k$ is $2\lambda$), 
the level line $\smash{\eta_{(2\beta_k -1)\lambda}^{x_k}}$ of $\Phi+u_\beta$ of height $(2\beta_k - 1)\lambda$ emanating from $x_k$ is an $\SLE_4(\ul{\rho}^\Left;\ul{\rho}^\Right)$ process in $(\bH;x_k,\infty)$ with force points 
$\ul{x}^\Left = (x_1,\ldots,x_{k-1})$ and $\ul{x}^\Right = (x_{k+1},\ldots,x_p)$ 
and weights $\ul{\rho}^\Left = (\rho_{k-1},\ldots,\rho_1)$ and $\ul{\rho}^\Right = (\rho_{k+1},\ldots,\rho_p)$. 
Consequently, the SDEs
\begin{align}
\label{eq:GFF level line as SLE kappa rho}
\ud V_k(t) 
= \; & 2 \, \ud B(t) + 2\sum_{\substack{ 1 \leq j \leq p \\ j \neq k}} \frac{(\dir \beta_k)(\dir \beta_j) \, \ud t}{ V_k(t) - V_j(t)} , \qquad V_k(0) = x_k , \\
\nonumber
\ud V_j(t) = \; & \frac{2 \, \ud t }{ V_j(t) - V_k(t)}, \qquad V_j(0) = x_j , \quad j \in \{1,\ldots,p\} \setminus \{ k \} ,
\end{align}
describe the evolution of the image of its tip (and hence its driving function).
If we instead assume that $\rho_k = -2$ (so $\dir \beta_k = -1$, and the jump in boundary data for $u_\beta$ at $x_k$ is $-2\lambda$), then $x_k$ is the terminal point of a level line of $\Phi + u_\beta$.  
However, as described at the end of Section~\ref{subsubsec:level_lines}, 
we can explore this level line in reverse time, by exploring the level line from $x_k$ of the field $-(\Phi + u_\beta)$.  
In the coupling, this corresponds to changing the sign of each weight $\rho_j$: 
the evolution of the tip of the time-reversed level line is obtained by considering the evolution with weights $-\rho_j = (\dir \beta_k) \rho_j = 2 (\dir \beta_k) (\dir \beta_j)$. The same SDE system as in Equation~\eqref{eq:GFF level line as SLE kappa rho} is then obtained.

\subsubsection{Description as an $\SLE_4$ with partition function}

Denote 
\begin{align*}
\chamber_p \coloneqq \{ (x_1 , \ldots , x_p) \in \bR^p \;|\; x_1  < \cdots < x_p \}.
\end{align*}
An \emph{$\SLE_\kappa$ process with partition function $\PartF \colon \chamber_p \to \bR$} started from $x_k$, 
with initial points $(x_1 , \ldots , x_p) \in \chamber_p$, is defined as the Loewner chain driven by $V_k=(V_k(t))_{t \geq 0}$ obtained from the SDEs
\begin{align}\label{eq:SLE with partition function}
\begin{split}
\ud V_k(t) = \; & \sqrt{\kappa} \, \ud B(t) + \kappa \, \partial_k \log \PartF (V_1(t) , \ldots , V_p(t)) \, \ud t , \qquad V_k(0) = x_k , \\
\ud V_j(t) = \; &\frac{2 \, \ud t }{ V_j(t) - V_k(t)} , \qquad V_j(0) = x_j , \quad j \in \{1,\ldots,p\} \setminus \{k \}.
\end{split}
\end{align}
A simple direct computation gives the following result.

\begin{proposition}
\label{prop:uncond one-curve marginal}
Fix $k \in \{1,\ldots,p\}$, points $(x_1 , \ldots , x_p) \in \chamber_p$, valences $\multii=(s_1, \ldots, s_p) \in \bZpos^p$ with $s_k = 1$, 
and boundary data $\beta \in \GLP_\multii$.
The Loewner driving function $V_k$ solving the SDEs~\eqref{eq:GFF level line as SLE kappa rho} 
for the GFF level line given by $\SLE_4(\rho)$ coincides with the solution of the SDEs~\eqref{eq:SLE with partition function}
given by the $\SLE_4$ process with partition function $\PartF = \CobloF_\beta$ given by the conformal block function
\begin{align*}
\CobloF_\beta (x_1,\ldots,x_p) \coloneqq \prod_{1 \leq i < j \leq p} (x_j - x_i)^{\frac{1}{2} \dir \beta_{j} \,\dir \beta_i} ,
\end{align*}
corresponding to the boundary data $\beta$ \textnormal{(}Equation~\eqref{eq:CobloF_ms}\textnormal{)}.
\end{proposition}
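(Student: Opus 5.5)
We plan to compare the two SDE systems term by term. The force-point equations $\ud V_j = 2\,\ud t/(V_j - V_k)$ for $j \neq k$ already coincide, and with $\kappa = 4$ the martingale parts $\sqrt{\kappa}\,\ud B = 2\,\ud B$ agree as well. It therefore suffices to show that the drift in~\eqref{eq:SLE with partition function}, namely $4\,\partial_k \log \CobloF_\beta$, equals the drift $2\sum_{j\neq k} (\dir\beta_k)(\dir\beta_j)/(V_k - V_j)$ in~\eqref{eq:GFF level line as SLE kappa rho}. Since the drift depends only on the current point $(V_1(t),\ldots,V_p(t))$, the identity is needed only as a function identity on $\chamber_p$. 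Pathwise uniqueness of the resulting SDE then identifies the two solutions up to the continuation threshold.

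The key computation uses the explicit product~\eqref{eq:CobloF_ms}:
\begin{align*}
\log \CobloF_\beta(x_1,\ldots,x_p) = \frac{1}{2}\sum_{i<j} \dir\beta_i\,\dir\beta_j \log (x_j - x_i).
\end{align*}
Differentiating in $x_k$, a term with $j = k > i$ contributes $\tfrac12 \dir\beta_i\dir\beta_k/(x_k - x_i)$. A term with $i = k < j$ contributes $-\tfrac12 \dir\beta_k\dir\beta_j/(x_j - x_k) = \tfrac12 \dir\beta_k\dir\beta_j/(x_k - x_j)$. Both cases therefore combine into
\begin{align*}
\partial_k \log \CobloF_\beta = \frac{1}{2}\sum_{j \neq k} \frac{\dir\beta_k\,\dir\beta_j}{x_k - x_j},
\end{align*}
and multiplying by $\kappa = 4$ gives exactly the drift of~\eqref{eq:GFF level line as SLE kappa rho}.

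The only point requiring care, rather than a real obstacle, is justifying that~\eqref{eq:GFF level line as SLE kappa rho} is the correct description in both sign cases. Here we use the hypothesis $s_k = 1$, which gives $\dir\beta_k = \pm 1$. If $\dir\beta_k = +1$, the level line starts at $x_k$, and~\eqref{eq:GFF level line as SLE kappa rho} is the $\SLE_4(\ul\rho)$ description with $\rho_j = 2\dir\beta_j$ from \cite[Theorem~1.1.1]{Wang-Wu:Level_lines_of_Gaussian_free_field_I}, as recalled in Section~\ref{subsec:SLE_kappa_rho_description}. If $\dir\beta_k = -1$, we explore the time reversal, a level line of $-(\Phi+u_\beta)$. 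Its weights are $-\rho_j = 2\dir\beta_k\dir\beta_j$, which is the same formula. In both cases the drift is $\sum_{j\neq k}\rho_j^{\mathrm{eff}}/(V_k - V_j)$ with $\rho_j^{\mathrm{eff}} = 2\dir\beta_k\dir\beta_j$, matching the computation above.
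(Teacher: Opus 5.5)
Your proposal is correct and follows the paper's proof: both reduce the statement to the direct computation $\partial_k \log \CobloF_\beta = \frac{1}{2}\sum_{j\neq k} \dir\beta_k\,\dir\beta_j/(x_k-x_j)$, so that $4\,\partial_k\log\CobloF_\beta$ reproduces the drift in~\eqref{eq:GFF level line as SLE kappa rho}. Your handling of the two cases $\dir\beta_k=\pm 1$ (forward exploration versus time reversal with weights $2\dir\beta_k\dir\beta_j$) matches the discussion in Section~\ref{subsec:SLE_kappa_rho_description} that precedes the proposition.
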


\begin{proof}
A direct computation of the logarithmic derivative of the explicit function $\CobloF_\beta$ yields
\begin{align*}
\partial_k \log \CobloF_\beta (x_1,\ldots,x_p)
= \frac{1}{2} \sum_{\substack{1 \leq j \leq p \\ j \neq k}} \frac{(\dir \beta_k) (\dir \beta_j)}{x_k - x_i} ,
\end{align*}
and we see that the processes solving Equations~\eqref{eq:GFF level line as SLE kappa rho} and~\eqref{eq:SLE with partition function} agree.
\end{proof}

\begin{remark}\label{rem:SLE kappa rho special times} \
\begin{enumerate}[leftmargin=*]
\item From $\SLE_\kappa (\rho)$ theory, we know that if $\rho_{k-1}$ and $ \rho_{k+1}$ have the same sign as $\rho_k$, then the above process can be launched with $x_{k-1}=x_k=x_{k+1}$. Then, for any $t>0$ up to the continuation threshold, we have $V_{k-1}(t) < V_{k}(t) < V_{k+1}(t)$, and the SDEs~(\ref{eq:GFF level line as SLE kappa rho},~\ref{eq:SLE with partition function}) make sense.

\item The terminal point of an $\SLE$ curve with given partition function can be very non-trivial to solve (cf.~\cite{Peltola-Wu:Global_and_local_multiple_SLEs_and_connection_probabilities_for_level_lines_of_GFF, Karrila:Computation_of_pairing_probabilities_in_multiple-curve_models}). 
In our setup, we know the set of possible endpoints immediately from the force points and their weights.
The actual terminal point is random, and is an ingredient in determining the event 
$\{\conn(\bs \eta) = \alpha\} = \big\{\conn(\OML_\beta) = \alpha\big\}$. \qedhere
\end{enumerate}
\end{remark}

\subsection{Marginal law: the conditional case}\label{subsec:local_description_conditional_case}

We now consider the marginal laws of the collection of level lines $\OML_\beta$ given their configuration $\alpha \in \LP_\multii$.
We give a local description for these level lines, which in particular only depends on $\alpha$ but not on $\beta$.

\begin{proposition}
\label{prop:alpha-conditional marginal law of one level line}
Fix $\tilde \multii = (s_1, \ldots, s_{\tilde p}) \in \bZpos^{\tilde p}$, and $\tilde{ \beta } \in \GLP_{\tilde{\multii}}$, and $\tilde \alpha \in \PLP_{\tilde \multii} ( \tilde \beta)$. 
Consider the GFF with boundary data $\tilde{\beta}$ at $x_1 < \cdots < x_{\tilde{p}}$. 
Let $\multii = (s_1, \ldots, s_p) \in \bZpos^p$ be the valence vector of length $p=\tilde{p} + 2$ 
obtained by the following \quote{partial unfusing} of $\tilde{\multii}$ at a given index $k \in \{1,\ldots, \tilde{p} \}$\textnormal{:}
\begin{align*}
s_j = \; &  \tilde s_j, \quad j \in \{1,\ldots,k-1\}, \\
s_{j+2} = \; &  \tilde s_j, \quad  j \in \{k+1,\ldots,\tilde p\}, \\
s_{k+1} = \; & 1, \\
s_k + s_{k+2} + 1 = \; & \tilde s_k,
\end{align*}
and let $\alpha \in \PLP_\multii(\beta)$ and  $\beta \in \GLP_\multii$
be such that $\imath(\alpha) = \imath(\tilde \alpha)$ and $\imath(\beta) = \imath(\tilde \beta)$ \textnormal{(}note that this uniquely determines $\alpha,\beta$\textnormal{)}.  
Conditionally on the event $\smash{\big\{\conn(\OML_{\tilde \beta})= \tilde{\alpha}\big\}}$, the marginal law of the 
$(\summ_k+1)$:st\footnote{We use the notation $\summ_k$ from Equation~\eqref{eq: definition of N} and enumerate the level lines from left to right.} level line
$\eta \in \OML_{\tilde \beta}$, emanating from $x_k$, 
is that of an $\SLE_4$ curve with partition function $\PartF = \PartF_\alpha$ 
given by Equation~\eqref{eq:PartF}\textnormal{:} that is, its Loewner driving function $V_{k+1}(t)$ satisfies
\begin{align*}
\ud V_{k+1}(t) = \; &  2 \, \ud B(t) + 4 \partial_{k+1} \log \PartF_\alpha(V_1(t),\ldots,V_p(t)) \, \ud t, \\
\ud V_j(t) = \; &  \frac{2 \, \ud t }{ V_j(t) - V_{k+1}(t)}, \qquad j \neq k+1,
\end{align*}
where $B$ is a Brownian motion, and the process $(V_1(t),\ldots,V_p(t))$ is started from 
$V_j(0) = x_j$ for $j \in \{1,\ldots, k-1\}$, and 
$V_k(0) = V_{k+1}(0) = V_{k+2}(0) = x_k$, 
and $V_j(0) = x_{j+2}$ for $j \in \{k+1,\ldots,\tilde p\}$.
\end{proposition}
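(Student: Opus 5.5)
We derive the conditional driving dynamics of $\eta$ from the unconditional ones by a Doob transform. The connection probability of Theorem~\ref{thm:corss_proba_H}, applied to the domain slit by the initial segment of $\eta$, supplies the martingale.

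\textbf{Step 1: the unconditional law.} View $\eta$, the $(\summ_k+1)$:st level line from $x_k$, as launched from the middle of three coalesced force points at $x_k$. The partially unfused data $\beta\in\GLP_\multii$ has $\imath(\beta)=\imath(\tilde\beta)$, and the middle point $k+1$ has valence one. By Remark~\ref{rem:SLE kappa rho special times}(1) and Proposition~\ref{prop:uncond one-curve marginal}, the driving process is therefore the $\SLE_4$ with partition function $\CobloF_\beta$, started from $V_k(0)=V_{k+1}(0)=V_{k+2}(0)=x_k$. The sign conditions for launching hold because the steps of $\beta$ at $k,k+1,k+2$ all share the sign of $\dir\tilde\beta_k$. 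For $t>0$ the three points separate, and $V_{k+1}$ satisfies~\eqref{eq:SLE with partition function} with $\PartF=\CobloF_\beta$. Here $g_t$ is the Loewner map of $\eta[0,t]$, and $V_k(t),V_{k+2}(t)$ are the images of the two sides of the slit at $x_k$.

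\textbf{Step 2: conditional probability as a martingale.} Let $\mathscr F_t$ be generated by $\eta[0,t]$ together with $\Phi_{\eta[0,t]}$; this is a local set. By the domain Markov property (Theorem~\ref{thm:level_lines_coupling}), $(\Phi+u_{\tilde\beta})\circ g_t^{-1}$ is a GFF with boundary data $u_\beta$ at $(V_1(t),\dots,V_p(t))$. At positive times this data is maximally sloped with valence vector $\multii$. On $\{t<\text{hitting times}\}$, the event $\{\conn(\OML_{\tilde\beta})=\tilde\alpha\}$ equals the event that the mapped-out configuration is $\alpha$, because $\imath(\alpha)=\imath(\tilde\alpha)$ and the remaining part of $\eta$ is the link from $V_{k+1}$. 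Theorem~\ref{thm:corss_proba_H} then gives
\begin{align*}
M_t\coloneqq\PR\big[\conn=\tilde\alpha\giv\mathscr F_t\big]=\frac{\PartF_\alpha(V_1(t),\dots,V_p(t))}{\CobloF_\beta(V_1(t),\dots,V_p(t))},
\end{align*}
with $M_{\beta,\alpha}=1$ since $\alpha\in\PLP_\multii(\beta)$. Letting $t\downarrow0$ recovers $\PartF_{\tilde\alpha}/\CobloF_{\tilde\beta}$ by fusion, via Lemmas~\ref{lem:VF_fusion} and~\ref{lem:PartF_fusion}, consistently with the $\tilde\beta$ formula. $M_t$ is a bounded positive martingale, positive by (POS) in Theorem~\ref{thm:CFT properties}.

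\textbf{Step 3: Girsanov.} Conditioning on the event is the Doob $h$-transform with density $M_t/M_0$. The martingale part of $\ud M_t$ is $2\,\partial_{k+1}\log(\PartF_\alpha/\CobloF_\beta)\,M_t\,\ud B(t)$. Girsanov's theorem therefore adds the drift $4\,\partial_{k+1}\log(\PartF_\alpha/\CobloF_\beta)\,\ud t$ to $\ud V_{k+1}$. Combined with the drift $4\,\partial_{k+1}\log\CobloF_\beta$, this gives $4\,\partial_{k+1}\log\PartF_\alpha$, which is the claimed SDE. The equations for $V_j$, $j\neq k+1$, are unchanged because they are deterministic Loewner flows. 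The result holds up to any stopping time before $\eta$ approaches other marked points, and by localization it identifies the law up to the terminal time.

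\textbf{Main obstacle.} The delicate part is the singular start at $t=0$, where three points coincide and $M_t$ is defined only through fusion limits. I would handle it by performing the Girsanov argument from a small time $\delta>0$, where all $V_j$ are distinct and everything is smooth. I would then let $\delta\downarrow0$, using continuity of $M_\delta\to M_0$ from the fusion lemmas and the fact that the unconditional process launches well by Remark~\ref{rem:SLE kappa rho special times}. A secondary check is that, conditionally on $\mathscr F_t$, the rest of the configuration is exactly the GFF level-line problem with data $\beta$. This follows as in the proof of Theorem~\ref{thm:corss_proba_H}, using Lemma~\ref{lem:frontiers_are_level_lines} and Remark~\ref{rmk:conditional_law_level_lines}.
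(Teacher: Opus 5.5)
Your proposal is correct and follows the same route as the paper. In both, the conditional probability $\PartF_\alpha/\CobloF_\beta$ evaluated at $(V_1(t),\ldots,V_p(t))$, supplied by Theorem~\ref{thm:corss_proba_H} and extended to $t=0$ via Lemmas~\ref{lem:VF_fusion} and~\ref{lem:PartF_fusion}, serves as the Radon--Nikodym density, and Girsanov's theorem adds the drift $4\,\partial_{k+1}\log(\PartF_\alpha/\CobloF_\beta)$ to the unconditional drift $4\,\partial_{k+1}\log\CobloF_\beta$ from Proposition~\ref{prop:uncond one-curve marginal}; your explicit handling of the coalesced start (the sign check for Remark~\ref{rem:SLE kappa rho special times} and running Girsanov from a small time $\delta>0$) only makes explicit what the paper leaves implicit.
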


\begin{proof}
Let $(\sF_t)_{t \geq 0}$ be the completed right-continuous filtration of the Loewner driving function $V_{k+1}$ of the level line $\eta$ (in its half-plane capacity parametrization).
Since up to its terminal time $T$, each segment of $\eta$ is a local set for the GFF, 
Theorem~\ref{thm:corss_proba_H} implies that
\begin{align*}
\PR_{\tilde \beta} \big[ \conn(\OML_{\tilde \beta}) = \tilde \alpha \;|\; \sF_t \big] 
\; = \;  \frac{\PartF_\alpha (V_1(t), \ldots, V_p(t))}{\CobloF_\beta (V_1(t), \ldots, V_p(t))} =: M(t) , \qquad 0< t < T ,
\end{align*}
which is a tautological local martingale. Moreover, by Lemmas~\ref{lem:VF_fusion}~\&~\ref{lem:PartF_fusion} it can be continuously extended to time $t=0$ by setting
\begin{align*}
M(0) = \PR_{\tilde \beta} \big[ \conn(\OML_{\tilde \beta}) = \tilde \alpha  \big] = \frac{\PartF_{\tilde \alpha} (x_1, \ldots, x_{\tilde{p}}) }{\CobloF_{\tilde \beta } (x_1, \ldots, x_{\tilde{p}}) }.
\end{align*}
This yields the Radon--Nikodym derivative 
\begin{align*}
\frac{\ud \PR_{\tilde \beta}^{\tilde \alpha}}{ \ud \PR_{\tilde \beta} }\bigg|_{\sF_t} = \frac{M(t)}{M(0)} , \qquad \textnormal{for} \qquad 
\PR_{\tilde \beta}^{\tilde \alpha} \coloneqq \PR_{\tilde \beta} \big[ \, \cdot \, | \, \conn(\OML_{\tilde \beta}) = \tilde \alpha \big] .
\end{align*} 
By Girsanov's theorem, if $Y(t)$ denotes the $\PR_{\tilde \beta}$-local martingale part of $V_{k+1}(t)$, 
then
\begin{align*}
Y(t) - \langle Y, L \rangle(t) = 2 B(t) , \qquad L(t) = \int_{0}^t \frac{ \ud M(s)}{M(s)} ,
\end{align*}
where $B$ is a $\PR_{\tilde \beta} \big[ \,\cdot \, | \, \conn(\OML_{\tilde \beta})= \tilde \alpha\big]$-Brownian motion. 
Here $\langle Y, L \rangle(t)$ only depends on the $\PR_{\tilde \beta}$-local martingale parts of the semimartingales $V_{k+1}$ and $L$, and hence straightforward It\^{o} calculus gives
\begin{align*}
\ud Y(t) = 2 \, \ud B(t) + 4 \partial_{k+1} \log \bigg( \frac{\PartF_\alpha (V_1(t),\ldots,V_p(t) )}{\CobloF_\beta (V_1(t),\ldots,V_p(t) )} \bigg) \ud t .
\end{align*}
Summing with the $\PR_{\tilde \beta}$-finite variation part of $V_{k+1}(t)$ given in Proposition~\ref{prop:uncond one-curve marginal} implies that
$$\ud V_{k+1}(t) =  2 \, \ud B(t) + 4 \partial_{k+1} \log \PartF_\alpha(V_1(t),\ldots,V_p(t)) \, \ud t,$$
as claimed (see also Remark~\ref{rem:SLE kappa rho special times}). 
\end{proof}

Let us highlight the following immediate consequences, which follow from the well-known GFF coupling~\cite{Dubedat:SLE_and_free_field, Schramm-Sheffield:A_contour_line_of_the_continuum_GFF, Miller-Sheffield:Imaginary_geometry1, Wang-Wu:Level_lines_of_Gaussian_free_field_I}.

\begin{corollary}\label{cor:properties of one-curve marginal} \
\begin{enumerate}[leftmargin=*]
\item\label{it:conditional_law_indep_of_beta} The above conditional marginal law of one curve given the topological configuration $\tilde \alpha$ does not depend on the GFF boundary data $\tilde \beta$ \textnormal{(}as long as $\tilde \alpha \in \PLP_{\tilde \multii}(\tilde \beta)$\textnormal{)}.

\smallskip

\item 
Even with the initial data $V_k(0) = V_{k+1}(0) = V_{k+2}(0) = x_k$,
the $\SLE_4$ process with partition function $\PartF_\alpha$ is well-defined:
the processes $V_k,V_{k+1},V_{k+2}$ separate immediately and, e.g., $\PartF_\alpha(V_1(t),\ldots,V_p(t)) $ is well-defined for all positive times up to the termination time of the curve.

\smallskip

\item At the termination time, the curve ends at the boundary point paired with $x_k$ according to $\tilde \alpha$.
\end{enumerate}
\end{corollary}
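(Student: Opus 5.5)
Each of the three items comes from combining Proposition~\ref{prop:alpha-conditional marginal law of one level line} with facts already known about GFF level lines. The whole argument rests on one identification: under $\PR_{\tilde\beta}[\,\cdot\,|\,\conn = \tilde\alpha]$, the curve $\eta$ is the $\SLE_4$ with partition function $\PartF_\alpha$ driven from the partially unfused configuration.

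\emph{Item (1).} The SDE in Proposition~\ref{prop:alpha-conditional marginal law of one level line} involves only $\PartF_\alpha$ and the initial points. Now $\alpha$ is determined by $\tilde\alpha$ and the index $k$ through $\imath(\alpha)=\imath(\tilde\alpha)$, together with the partial unfusing of $\tilde\multii$. The partial unfusing depends on $\beta$ only through how the valence $\tilde s_k$ is split, and that split is fixed by the position of the $(\summ_k+1)$:st curve among the links of $\tilde\alpha$ at $x_k$. So the drift, and hence the law of the driving function, does not depend on $\tilde\beta$. Since the Loewner chain determines the curve, the law of the curve does not either. As a consistency check, Corollary~\ref{cor:independence_of_beta} already gives $\beta$-independence of the entire conditional law of $\OML_{\tilde\beta}$, and taking the marginal gives the same conclusion.

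\emph{Item (2).} Here I would avoid analysing the SDE at a degenerate starting point. Instead I would argue through the coupling. Under $\PR_{\tilde\beta}$, the curve $\eta$ is an $\SLE_4(\ul\rho)$ level line. By Remark~\ref{rem:SLE kappa rho special times}(1) and the general $\SLE_\kappa(\ul\rho)$ theory from \cite{Wang-Wu:Level_lines_of_Gaussian_free_field_I}, the force points $V_k$ and $V_{k+2}$ (the images of the two sides of $x_k$ next to the tip) carry weights of the same sign as the tip, so the process makes sense started at $x_k$ and separates immediately. More precisely, $V_k(t)<V_{k+1}(t)<V_{k+2}(t)$ holds almost surely for all $0<t<T$.

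The conditional law $\PR^{\tilde\alpha}_{\tilde\beta}$ is absolutely continuous with respect to $\PR_{\tilde\beta}$, because $\PR_{\tilde\beta}[\conn=\tilde\alpha]>0$ by (POS) of Theorem~\ref{thm:CFT properties}. Therefore the separation also holds $\PR^{\tilde\alpha}_{\tilde\beta}$-almost surely. For $t>0$ the points $(V_1(t),\ldots,V_p(t))$ lie in $\chamber_p$, where $\PartF_\alpha$ is smooth and strictly positive by (POS). Hence $\log\PartF_\alpha$ and its derivative are finite, and the SDE is well-defined up to the termination time. Proposition~\ref{prop:alpha-conditional marginal law of one level line} identifies the conditioned driving process with its solution.

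\emph{Item (3).} On the event $\{\conn=\tilde\alpha\}$, by definition of $\conn$ the level line from $x_k$ ends at the endpoint that $\tilde\alpha$ pairs with it. So this holds almost surely under the conditional measure, and hence for the $\SLE_4$ with partition function $\PartF_\alpha$, because the two laws were identified in item (2).

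The main obstacle is making item (2) rigorous at time $0$. One has to make sure that the absolute continuity argument genuinely transfers immediate separation from the unconditioned level line to the conditioned process. For this I would rely on $M(t)$ extending continuously to $t=0$ (Lemmas~\ref{lem:VF_fusion}~\&~\ref{lem:PartF_fusion}) with $M(0)>0$.
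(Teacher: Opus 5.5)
Your proposal is correct and follows essentially the same route as the paper. The paper gives no separate argument: it presents the three items as immediate consequences of Proposition~\ref{prop:alpha-conditional marginal law of one level line}, whose drift involves only $\PartF_\alpha$ (with $\alpha$ determined by $\tilde\alpha$ and the chosen curve), together with the GFF level-line coupling. Your fallback via Corollary~\ref{cor:independence_of_beta} for item (1) and your absolute-continuity transfer for item (2) are both sound. The latter needs only $\PR_{\tilde\beta}[\conn=\tilde\alpha]>0$, not continuity of $M(t)$ at $t=0$, so the ``main obstacle'' you flag is not really one.
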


\bigskip

\section{Metric graph GFF: FPSs and connection probabilities}
\label{sec:mGFF}

The main goal of this section is to establish Theorem~\ref{thm:corss_proba_MGFF}, 
which states that the connection probabilities of the metric graph GFFs converge to those of the continuum GFF.  
In order to do this, we first introduce the discrete GFF in Section~\ref{subsec:DGFF} and then the metric graph GFF in Section~\ref{subsec:MGFF}. 
Then, in Section~\ref{subsubsec:fps_metric_graph} we review the first passage sets of metric graph GFFs and state a convergence result for them from~\cite{ALS:First_passage_sets_of_2D_GFF}. 
We prove Theorem~\ref{thm:corss_proba_MGFF} in Section~\ref{subsec:mGFF_Connection_probabilities}.

\subsection{Discrete Gaussian free fields}\label{subsec:DGFF}

Let $G= (V, E)$ be a finite connected graph with given conductances $C \colon E \to \bRpos$ for the edges and a non-empty set $V^{\partial}$ of boundary vertices, 
and denote $V^\circ \coloneqq V \setminus V^\partial$.  
We write $x \sim y$ if $\langle v, w \rangle \in E$ and $C(v, w) \coloneqq C(\langle v, w \rangle)$, and we interpret $f \in \mathbb{R}^{V^{\circ}}$ as functions on $V$ by setting $f(v) := f_v$ and $f(w) = 0$ for $w \in V^\partial$.

The \emph{discrete Laplacian} on the graph $G$  maps a function $f \colon V^\circ \to \bR$ linearly to another function $(\Delta_G f) \colon V^\circ \to \bR$ (so it is a matrix $\Delta_G \in \mathbb{R}^{V^\circ \times V^\circ}$); it is defined by\footnote{Note that there are varying normalization conventions in the literature.}
\begin{align*}
(\Delta_G \, f)(v) \coloneqq \sum_{w \sim v} C(v,w) (f(w) - f(v)).
\end{align*}
The \emph{discrete Dirichlet inner product} for two functions $f, g : V^\circ \to \mathbb{R}$ is defined as
\begin{align*}
( f , g )_{\nabla_G} & \coloneqq \sum_{\langle v, w \rangle \in  E} (f(w) - f(v)) \, C(v,w) \, (g(w) - g(v)).
\end{align*}
Writing this as $1/2$ times the analogous sum over directed edges and re-indexing, we obtain
\begin{align*}
( f , g )_{\nabla_G} 
%=\; & \frac{1}{2} \sum_{w \in V^{\circ}} \sum_{ v \sim w } f(w)  \, C(w,v) \, (g(w) - g(v)) - \frac{1}{2} \sum_{v \in V^{\circ}} \sum_{   w \sim v } f(v)  \, C(v,w) \, (g(w) - g(v)) \\
= \; & - \sum_{v \in V^\circ} f(v) (\Delta_G \, g)(v) = -f^T \Delta_G g .
\end{align*}

Finally, the \emph{discrete GFF with zero boundary conditions} on $G$ is
a centered Gaussian vector $\phi = \phi^G$ in $\bR^{V^\circ}$ (which we still interpret as a function on $V$ as above) with law
\begin{align*}
\ud \PR [\phi] 
\coloneqq \; & \;  \frac{1}{Z_G} \exp \Big( \!-\!\tfrac{1}{2} ( \phi , \phi )_{\nabla_G} \Big) \prod_{w \in V^\circ} \ud\phi (w)  \\
= \; & \frac{1}{Z_G} \exp \Big( \tfrac{1}{2} \phi^T \Delta_G \phi \Big) \prod_{w \in V^\circ} \ud\phi (w),
\qquad 
Z_G = \sqrt{\frac{(2\pi)^{|V^\circ|}}{\det (-\Delta_G)}} .
\end{align*}

A function $f \colon V \to \bR$ is said to be \emph{discrete harmonic} on $\wt{V} \subset V$ if $(\Delta_G f)(v) = 0$ for all $v \in \wt{V}$. Given a function $f^\partial \colon V^\partial \to \bR$, there exists a unique function $f \colon V \to \bR$ such that $f|_{V^\partial} = f^\partial$ and $(\Delta_G f)(v) = 0$ for all $v \in V^\circ$,
namely, the harmonic extension of $f^\partial$ to $V$ or $G$.  
We will typically abuse notation slightly and use the same notation for a boundary function and its harmonic extension. Similarly to the continuum case, the \emph{discrete GFF with boundary data} $u^\partial: V^\partial \to \bR$ is defined as $\phi + u$, where $\phi$ a zero-boundary GFF on $G$ and $u$ the harmonic extension of the boundary function $u^\partial$.

Finally, the \emph{discrete GFF on a graph with edge length} refers to the above discrete Gaussian fields with the conductances being inverse edge lengths: $C(v, w) = 1/\mathrm{length}(\langle v, w\rangle)$.

\subsection{Metric graph Gaussian free field}\label{subsec:MGFF}

Given a graph $G=(V,E)$ as in Section~\ref{subsec:DGFF} with conductances $(C(e))_{e \in E}$, we can associate to it a \quote{metric graph} $\wt{G}$ by replacing each edge $e$ with a continuous line segment of length $1/C(e)$. 
The \emph{metric graph GFF} $\wt{\phi}$ on the metric graph $\wt{G}$ is the random model defined by sampling a discrete GFF $\phi$ on $G$ (with some prescribed boundary data, which may split some edge in the middle, thus simply reducing its length) and adding\footnote{That is, linearly interpolating along the each edge $\langle v, w \rangle$ of $G$ by adding a Brownian bridge of length $1/C(\langle v, w \rangle)$ from $\phi(v)$ to $\phi(w)$.} 
on top of each edge $e \in E$ a Brownian bridge of length $1/C(e)$.   
The metric graph GFF on $\wt{G}$ with boundary data $u^\partial \colon V^\partial \to \bR$ is defined as $\wt{\phi} + u$, 
where $u$ is the harmonic extension of $u^\partial$ to $G$ and extended by linear interpolation inside the edges. 
We shall often just call this the \quote{metric graph GFF with boundary data $u$.}

Just as in the case of the continuum GFF, the metric graph GFF satisfies a strong Markov property. We say that a random subset $A$ of $\wt{G}$ is \emph{optional} for $\wt{\phi}$ if for every open deterministic subset $O$ of $\wt{G}$, the event $\{ A \subseteq O \}$ is measurable with respect to the restriction of $\wt{\phi}$ to $O$. The optional sets of a metric graph GFF play the role of local sets of the continuum GFF. Out of convenience, assume that $A$ almost surely has finitely many connected components, so that $\wt{G} \setminus A$ has finitely many connected components and the closure of each component is a metric graph (where an edge can be split by $A$ into different components, or partially covered by $A$).  
Then, the following strong Markov property holds, see~\cite[Section~3]{Lupu:From_loop_clusters_and_random_interlacements_to_the_free_field}.

\begin{proposition}[Strong Markov property, {\cite{Lupu:From_loop_clusters_and_random_interlacements_to_the_free_field}}]
Let $A$ be a random compact subset of $\wt{G}$ with finitely many connected components which is optional for the metric graph GFF $\wt{\phi} = \wt{\phi}^{\wt{G}}$. Then, 
\begin{align*}
\wt{\phi} = \wt{\phi}_A + \wt{\phi}^A,
\end{align*}
where, conditionally on $A$, 
the function $\wt{\phi}^A$ is a zero-boundary metric graph GFF on $\wt{G} \setminus A$, extended to be $0$ on $A$, 
and the function $\wt{\phi}_A$ agrees with $\wt{\phi}$ on $A$, restricts to a harmonic function $h_A$ on $\wt{G} \setminus A$ with boundary values given by $\wt{\phi}|_{\wt{G} \setminus A}$, and is independent of $\wt{\phi}^A$. 
\end{proposition}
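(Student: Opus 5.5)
This strong Markov property for optional sets of the metric graph GFF is quoted from Lupu~\cite[Section~3]{Lupu:From_loop_clusters_and_random_interlacements_to_the_free_field}, but a direct proof goes in three stages. First, deterministic sets. Second, finitely-valued random sets, by decomposing over their values. Third, general optional sets, by approximating them from outside with optional sets taking countably many values.

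\textbf{Deterministic $A$.} Let $K$ be a compact subset of $\wt G$ with finitely many components. Then $K\cup V^\partial$ together with the endpoints of the partial edges cut $\wt G\setminus K$ into finitely many pieces. Each piece is itself a metric graph: a finite union of intervals meeting at vertices. On such a metric graph the metric graph GFF is a Gaussian process whose covariance is the Green's function of Brownian motion on $\wt G$ killed at $V^\partial$. Since Brownian motion on a metric graph is strong Markov, we have
\begin{align*}
G_{\wt G}(x,y)=G_{\wt G\setminus K}(x,y)+\EX_x\big[G_{\wt G}(B_{\tau_K},y)\big] .
\end{align*}
This is the standard Gaussian orthogonal decomposition. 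It gives $\wt\phi=\wt\phi_K+\wt\phi^K$, where $\wt\phi_K$ is the harmonic extension of $\wt\phi|_K$ (linear along edges, discrete harmonic at vertices). The field $\wt\phi^K$ is independent of $\wt\phi|_K$ and has covariance $G_{\wt G\setminus K}$, so it is a zero-boundary metric graph GFF on $\wt G\setminus K$. On a single edge this decomposition is exactly the Brownian bridge Markov property, which matches the construction of Section~\ref{subsec:MGFF}.

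\textbf{Random $A$ with finitely many values.} Suppose $A$ takes finitely many values $K_1,\dots,K_n$. Optionality gives
\begin{align*}
\{A=K_i\}\in\sigma\big(\wt\phi|_{O}\big)\quad\text{for every open } O\supset K_i .
\end{align*}
Intersecting over shrinking neighbourhoods $O$ of $K_i$, and using continuity of $\wt\phi$, puts the event in the germ $\sigma$-field of $K_i$. One then needs that this germ $\sigma$-field is contained in $\sigma(\wt\phi|_{K_i})$ up to null sets, so that $\wt\phi^{K_i}$ is independent of $\{A=K_i\}$. This is the step I expect to be hardest. It amounts to showing the germ $\sigma$-field adds nothing beyond the values on $K_i$. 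For Brownian motion on finitely many edges emanating from the finitely many boundary points of $K_i$ this is a Blumenthal 0--1 argument. Concretely, conditional on $\wt\phi|_{K_i}$, the field $\wt\phi^{K_i}$ near $\partial K_i$ consists of finitely many independent Brownian-type germs started at $0$, and their germ $\sigma$-fields are trivial. Granted this, conditioning on $\{A=K_i\}$ does not change the law of $\wt\phi^{K_i}$, and summing over $i$ gives the claim.

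\textbf{General $A$.} Fix $\delta>0$ and a finite dyadic-type subdivision of $\wt G$ into small closed cells. Let $A_\delta$ be the union of all cells meeting $A$. Then $A_\delta$ is optional, takes finitely many values, and decreases to $A$ as $\delta\to0$. Because $A$ has finitely many components, the sets $A_\delta$ also have uniformly finitely many components. The previous stage applies to each $A_\delta$. To pass to the limit, note that $\wt\phi_{A_\delta}\to\wt\phi_A$ almost surely, by continuity of $\wt\phi$ and convergence of harmonic extensions as the domains $\wt G\setminus A_\delta$ increase to $\wt G\setminus A$. The conditional independence and the zero-boundary GFF law then pass to the limit through Gaussian covariances, exactly as in Lemma~\ref{lem:local_set_properties}\ref{it:decreasing_local_sets} for the continuum.
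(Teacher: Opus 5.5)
The paper does not prove this statement; it quotes it from \cite{Lupu:From_loop_clusters_and_random_interlacements_to_the_free_field}. Your three-stage scheme is the standard proof, modelled on the strong Markov property of Brownian motion at optional times, and it is sound. The three stages are: the deterministic Markov property via the Green's function decomposition; then finitely-valued sets; then outer approximation $A_\delta \downarrow A$ by unions of closed cells.

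\textbf{The step you flag as hardest can be avoided.} You do not need the germ $\sigma$-field or a Blumenthal argument for the sets you actually use in Stage~3. For a union $K$ of closed cells, put
$U_K := \wt G \setminus \bigcup\{C \text{ a cell} : C \not\subseteq K\}$.
This set is open and contained in $K$, and
$\{A_\delta \subseteq K\} = \{A \subseteq U_K\} \in \sigma(\wt\phi|_{U_K}) \subseteq \sigma(\wt\phi|_K)$.
Hence
$\{A_\delta = K\} = \{A_\delta \subseteq K\} \setminus \bigcup_{K' \subsetneq K} \{A_\delta \subseteq K'\} \in \sigma(\wt\phi|_K)$
exactly, not just modulo a germ. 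This mirrors the fact that dyadic upper approximations of an optional time are genuine stopping times. The deterministic Markov property, summed over $K$, then gives Stage~2 for $A_\delta$ directly.

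\textbf{If you keep Stage~2 for arbitrary finitely-valued optional sets}, two points need care.
\begin{enumerate}
\item The claim $\{A = K_i\} \in \sigma(\wt\phi|_O)$ for every open $O \supset K_i$ follows from optionality only after an induction over the inclusion order of the values $K_1,\dots,K_n$.
\item You pass from triviality of the germ of $\wt\phi^{K_i}$ at $\partial K_i$ to $\bigcap_\epsilon \sigma(\wt\phi|_{K_i^\epsilon}) = \sigma(\wt\phi|_{K_i})$ modulo null sets, where $K_i^\epsilon$ is the $\epsilon$-neighbourhood. This uses the interchange $\bigcap_\epsilon(\mathcal{A} \vee \mathcal{B}_\epsilon) = \mathcal{A} \vee \bigcap_\epsilon \mathcal{B}_\epsilon$. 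That interchange fails in general; it holds here because $\mathcal{A} = \sigma(\wt\phi|_{K_i})$ is independent of $\wt\phi^{K_i}$. Also, the germs on different outgoing edges are not independent. They are only conditionally independent bridges given the field values at distance $\epsilon$, and that conditional independence is what makes the Blumenthal argument work.
\end{enumerate}

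\textbf{The limit in Stage~3 is routine.} Test against bounded continuous functionals of $(A_\delta, \wt\phi_{A_\delta})$. Then use two facts: $\wt\phi_{A_\delta} \to \wt\phi_A$ uniformly, and $G_{\wt G \setminus A_\delta} \uparrow G_{\wt G \setminus A}$, since the hitting times of the compacts $A_\delta$ increase to that of $A$.
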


\subsection{First passage sets and convergence}\label{subsubsec:fps_metric_graph}

Consider a metric graph GFF $\wt{\phi}+u$ on $\wt{G}$ with boundary data $u$.  
The (upper) first passage set (FPS) of height $a \in \bR$ of $\wt{\phi}+u$ is defined as
\begin{align*}
\wt{\bA}_a^u(\wt{\phi}) = \wt{\bA}_a(\wt{\phi} + u) 
\coloneqq \big\{ x \in \wt{G} \colon \exists \ \textnormal{continuous path} \ P \ \textnormal{from} \ V^\partial \ \textnormal{to} \ x \ \textnormal{such that} \ (\wt{\phi} + u)|_P \geq a \big\}.
\end{align*}
(Note that, contrary to FPSs of the continuous GFF, those of the metric graph GFF do not automatically include the entire boundary of the domain.)
Moreover, such a set $\wt{\bA}_a^u$ is optional for $\wt{\phi}$ and on $\wt{\bA}_a^u$, 
the function $\wt{\phi}_{\wt{\bA}_a^u}$ takes the boundary value $a$.

We now consider the convergence of metric graph GFF FPSs to continuum GFF FPSs, explored in~\cite[Section~4]{ALS:First_passage_sets_of_2D_GFF}. 
We shall, however, be concerned with polygons rather than complicated domains, so our setting is a bit less technical. 
We let $\wt{\bZ}_n^2$ denote the metric graph induced by $\tfrac{1}{n} \bZ^2$. 
Our precise convergence setup pertaining to the FPSs and their connection probabilities is as follows.
(We have chosen a setting that seems convenient; one could modify the setup in various ways, which we leave to dedicated readers.)

Consider a polygon $(\Omega;y_1,\ldots,y_N)$ such that $\Omega \subset [-C,C]^2$ for some $C > 0$. Furthermore, for each $n \in \bZpos$, 
let $(\Omega^n;y_1^n,\ldots,y_N^n)$ be a polygon such that $\Omega^n \subset [-C,C]^2$ and such that
\begin{enumerate}[label=(\roman{*}), ref=(\roman{*})]
\item $[-C,C]^2 \setminus \Omega^n$ converges to $[-C,C]^2 \setminus \Omega$ in the Hausdorff metric as $n \to \infty$; and 

\smallskip

\item $y_j^n \to y_j$ (as prime ends, if necessary) as $n \to \infty$, for all $j \in \{1,\ldots, N\}$.
\end{enumerate}
Moreover, we set $\wt{\Omega}^n \coloneqq \Omega^n \cap \wt{\bZ}_n^2$ and let $\wt{\phi}^n$ be a metric graph GFF on $\wt{\Omega}^n$. 
(Observe that the boundary values on $\Omega^n$ give the boundary values on $\wt{\Omega}^n$, while $\partial \wt{\Omega}^n$ is not necessarily comprised of vertices of $\tfrac{1}{n} \bZ^2$.)

Next, if $\wt{f}^n$ is a function on $\wt{\Omega}^n$, then we denote by $\wh{f}^n$ the harmonic extension of $\wt{f}^n$ to $[-C,C]^2$ with zero boundary values on $\partial [-C,C]^2$. 
We can also extend $\wt{\phi}^n$ this way and denote the extension by $\wh{\phi}^n$. We will make use of the following result.

\begin{proposition}[{\cite[Proposition~4.7]{ALS:First_passage_sets_of_2D_GFF}}]\label{prop:metric_graph_gff_fps_convergence}
Let $\Phi$ be a zero-boundary GFF on $\Omega$, let $u$ be a bounded harmonic function with piecewise constant boundary values.
Let $(u^n)_{n \geq 1}$ be a collection of bounded harmonic functions in $\wt{\Omega}^n$, which converge uniformly on compact subsets to $u$ as $n \to \infty$. 
Then, we can couple $(\wh{\phi}^n)_{n \geq 1}$ and $\Phi$ such that $\wh{\phi}^n$ converge to $\Phi$ in probability as generalized functions, and upon doing so, 
$(\wh{\phi}^n,\wt{\bA}_a^{u^n}(\wt{\phi}^n) \cup \partial \Omega) \; \smash{\overset{n \to \infty}{\longrightarrow}} \; (\Phi,\bA_a^u(\Phi))$ in probability, where the convergence in the second coordinate is with respect to the Hausdorff metric~\eqref{eq:hausdorff_distance}.
\end{proposition}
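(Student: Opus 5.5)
The statement to prove is Proposition~\ref{prop:metric_graph_gff_fps_convergence}, which is quoted from \cite[Proposition~4.7]{ALS:First_passage_sets_of_2D_GFF}. My plan reproduces that argument in our simpler polygonal setting. There are three steps: couple the fields, establish tightness with a structural description of subsequential limits, and identify every limit as the continuum FPS through the uniqueness Theorem~\ref{thm:fps_existence_uniqueness}.

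\emph{Coupling.} The harmonically extended fields $\wh{\phi}^n$ converge in law to $\Phi$ as generalized functions. This is the standard convergence of the discrete GFF, which rests on convergence of the discrete Green's functions on $\Omega^n\cap\tfrac1n\bZ^2$ to the continuum Green's function on $\Omega$. That convergence holds because $[-C,C]^2\setminus\Omega^n\to[-C,C]^2\setminus\Omega$ in the Hausdorff metric, using Carath\'eodory-type convergence of harmonic measure. Skorokhod's representation theorem then gives a coupling in which the convergence holds in probability.

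\emph{Tightness.} The sets $\wt{\bA}_a^{u^n}(\wt{\phi}^n)\cup\partial\Omega$ live in the compact space $\sK([-C,C]^2)$, so the triple $(\wh{\phi}^n,\wt{\bA}^n,\wh{\phi}^n_{\wt{\bA}^n})$ is tight. Here $\wh{\phi}^n_{\wt{\bA}^n}$ denotes the harmonic extension of the field restricted to the FPS; by the strong Markov property it equals $a-u^n$ on the FPS and is harmonic off it. Take a subsequential limit $(\Phi,A,\Phi_A)$. Optionality of the discrete FPS passes to the limit and shows that $A$ is a local set of $\Phi$: conditionally on the FPS, the remaining field is a zero-boundary metric graph GFF on the complement. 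The limit argument mirrors Lemma~\ref{lem:local_set_properties}\eqref{it:increasing_local_sets} and uses the fact that $A\cup\partial\Omega$ is connected to the boundary.

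\emph{Identification.} I will check the four conditions of Definition~\ref{def:fps} for $A$.
\begin{itemize}
\item Condition (i): on each complementary component, $h^n+u^n=a$ on the discrete FPS boundary, and $\wt{\phi}^n+u^n<a$ inside the component. Letting $n\to\infty$ and using uniform convergence of $u^n\to u$ on compacts gives that $h_A+u$ takes boundary value $a$ and satisfies $h_A+u\le a$.
\item Condition (ii): positivity of $\Phi_A-h_A$ comes from the fact that the discrete field is at least $a-u^n$ on the FPS, so the corresponding pairing with nonnegative test functions is nonnegative. This is a closed condition and survives the weak limit.
\item Condition (iii): the boundary condition follows from local uniform control of $u^n$ near $\partial\Omega$.
\item Condition (iv): the regularity requirement holds because every point of the discrete FPS is connected to the boundary, and this connectivity persists in the Hausdorff limit.
\end{itemize}
By Theorem~\ref{thm:fps_existence_uniqueness}, $A=\bA_a^u(\Phi)$ almost surely. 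Since this deterministic functional of $\Phi$ identifies every subsequential limit, the full sequence converges in probability in the given coupling.

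\emph{Main obstacle.} The hardest point is that Hausdorff limits could lose or gain mass. Thin parts of the discrete FPS might disappear in the limit, or the limit might contain extra pieces. The check of (ii) handles the first issue. For the second, I would first try to show $A\subseteq\bA_a^u$ via a comparison argument: any additional piece would force $\Phi_A-h_A$ to carry negative mass, or would violate the inequality in (i). Near the marked boundary points $y_j$ I expect extra care, controlled by the boundedness of $u$.
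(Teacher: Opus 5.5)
The paper gives no proof of its own here; it imports the statement verbatim from \cite[Proposition~4.7]{ALS:First_passage_sets_of_2D_GFF}. Your overall plan is the natural one: Skorokhod coupling, tightness, limits of optional sets being local sets, identification via Definition~\ref{def:fps} and Theorem~\ref{thm:fps_existence_uniqueness}, then the upgrade to convergence in probability. The identification step, however, has a genuine gap at condition (ii).

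First, $\wh{\phi}^n_{\wt{\bA}^n}$ does not equal $a-u^n$ on the first passage set. It coincides with the field there, and equals $a-u^n$ only where the set borders its complementary components. The excess $\wt{\phi}^n+u^n-a\ge 0$ on the set is exactly what must produce the nonzero positive measure $\Phi_A-h_A$ in the limit.

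Second, the bound $\wt{\phi}^n\ge a-u^n$ does not make the pairing with $f\ge 0$ nonnegative, because $a-u^n<0$ wherever $u^n>a$. For $a=0$ and $u=u_\beta$ this is all of $\Omega^n$. What you actually get is $(\wh{\phi}^n_{\wt{\bA}^n}-\wh{h}^n,f)\ge -\|a-u^n\|_\infty\int_{N_n}f\,\ud z$. Here $\wh{h}^n$ is the discrete analogue of $h_A$, and $N_n$ is the union of lattice faces adjacent to $\wt{\bA}^n$.

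To conclude $\Phi_A-h_A\ge 0$, you must show that the area of $N_n\cap K$ tends to zero for every compact $K$, i.e.\ that the limit set $A$ is Lebesgue-null. The same fact is needed just to identify the weak limit of the bounded functions $\wh{h}^n$ with $h_A$, which vanishes on $A$ by definition. Nothing in the local-set property rules out $|A|>0$. So this needs its own argument, for instance a first-moment bound $\PR[\dist(z,\wt{\bA}^n)\le 1/n]\to 0$ for fixed interior $z$.

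Condition (i) also needs an ingredient you do not supply. Hausdorff convergence $\wt{\bA}^n\to A$ alone does not force the harmonic functions with boundary values $a-u^n$ on $\partial\wt{\bA}^n$ to converge to one with boundary values $a-u$ on $\partial A$. Pieces carrying almost no harmonic measure can survive in the Hausdorff limit. You need a uniform Beurling-type estimate, and that is where the connectivity of $\wt{\bA}^n$ to the boundary really enters. The same estimate underlies the Green's function convergence in your local-set limit, where, unlike in Lemma~\ref{lem:local_set_properties}, the field also changes with $n$.

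Also, $\wt{\phi}^n+u^n<a$ is false inside complementary components, since there can be islands. What you need there is $h^n+u^n\le a$. That is just the maximum principle, because these components only touch boundary arcs where $u^n<a$.

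Finally, your ``main obstacle'' is misplaced. Once (i)--(iv) are verified, Theorem~\ref{thm:fps_existence_uniqueness} already excludes extra pieces, so no comparison $A\subseteq\bA_a^u$ is needed. The real work is in the two points above, together with condition (iii) near the jump points $y_j$. There, locally uniform convergence of $u^n$ in the interior gives no control.
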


\subsection{Connection probabilities} \label{subsec:mGFF_Connection_probabilities}

Similarly to the continuum case, we have the corresponding connection probability result for the frontiers of the metric graph GFF.

Consider a boundary condition $u = u_\beta$ analogous to Equation~\eqref{eq:harmonic_function}, where $\bH$ is replaced by $\Omega$ and $x_1,\ldots,x_p$ by $y_1,\ldots,y_p$. 
We let $\Omega^n$ be a polygon approximating $\Omega$ as in Section~\ref{subsubsec:fps_metric_graph}, 
and we let $u^n \colon \partial \Omega^n \to \bR$ be the boundary data such that $u^n|_{(y_j^n,y_{j+1}^n)} = u|_{(y_j,y_{j+1})}$ for $j \in \{1,\dots,N\}$ (with the convention that $y_{N+1} = y_1$). 
Let $\conn^n$ denote the random valenced link pattern formed by the frontiers of the first passage sets $\wt{\bA}_0^{u^n}(\wt{\phi}^n),\ldots,\wt{\bA}_{2(\maxu-1)\lambda}^{u^n}(\wt{\phi}^n)$. Then, the following holds (see~\cite[Lemma~5.5]{Liu-Wu:Scaling_limits_of_crossing_probabilities_in_metric_graph_GFF} for a similar statement).

\begin{citedtheorem} \label{thm:corss_proba_MGFF}
In the convergence setup of Section~\ref{subsubsec:fps_metric_graph}, we have \textnormal{(}denoting the coupling measure $\PR_\beta$\textnormal{)}
\begin{align*}
\lim_{n \to \infty} \PR_\beta[ \conn^n = \alpha] = \PR_\beta [\conn = \alpha] , \qquad \alpha \in \PLP_\multii.
\end{align*}
\end{citedtheorem}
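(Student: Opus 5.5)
The plan is to reduce the statement to the Hausdorff convergence of first passage sets from Proposition~\ref{prop:metric_graph_gff_fps_convergence}, and then to show that the connectivity pattern is a continuous function of the collection of FPSs, almost surely at the limit configuration. By conformal invariance we work in $\Omega$ with $u=u_\beta$.

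\textbf{Step 1 (coupling).} Apply Proposition~\ref{prop:metric_graph_gff_fps_convergence} simultaneously for the finitely many levels $a=2k\lambda$, $k\in\{0,\ldots,\maxu-1\}$. The harmonic extensions $u^n$ of the boundary data converge to $u$ uniformly on compacts by the Hausdorff convergence of domains and marked points. This gives a single coupling in which
\begin{align*}
\wt{\bA}_{2k\lambda}^{u^n}(\wt{\phi}^n)\cup\partial\Omega \to \bA_{2k\lambda}^{u}(\Phi)
\end{align*}
in probability for every $k$. Passing to a subsequence, we may assume almost sure convergence. It then suffices to show that $\conn^n=\conn$ for all large $n$ almost surely; bounded convergence and a subsequence-of-subsequence argument then finish the proof.

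\textbf{Step 2 (the limit connectivity is determined robustly).} By Lemma~\ref{lem:frontiers_are_level_lines}, the frontier components of $\bA_{2k\lambda}^u$ are exactly the level lines in $\OML_\beta$ of height $(2k+1)\lambda$. Each connects a point of $\PJ_{k+1}$ to a point of $\NJ_{k+1}$, and the connectivity $\conn$ is read off from which boundary arcs $(y_j,y_{j+1})$ lie in the same complementary component $O$ of $\bA_{2k\lambda}^u$ that touches $\partial\Omega$. Equivalently, it is read off from which arcs with boundary value $>2k\lambda$ are joined through the FPS, versus separated by it. We encode $\conn$ as follows: for each $k$ and each pair of arcs with $u\le 2k\lambda$, record whether they lie in a common component of $\Omega\setminus\bA^u_{2k\lambda}$. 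We then show this data is stable under small Hausdorff perturbations. Two ingredients are needed.
\begin{itemize}
\item[(a)] Almost surely the level lines touch $\partial\Omega$ only at their endpoints in the marked points. By Lemma~\ref{lem:force_points}, the weights here are $\ge 0$ on the relevant arcs, with $\SLE_4$ not hitting the boundary.
\item[(b)] Distinct level lines of different heights meet only at marked points, and a component $O$ is not "pinched".
\end{itemize}
From (a) and (b), if two arcs are in the same component $O$, they are connected by a path in $O$ at positive distance from $\bA^u_{2k\lambda}$. If they are in different components, they are separated by a crossing of $\bA^u_{2k\lambda}$ joining two points of $\partial\Omega$ on arcs with value $>2k\lambda$, staying away from the marked points except near them.

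\textbf{Step 3 (transfer to $n$).} Hausdorff convergence handles the first case directly: a path at positive distance from the limit set avoids $\wt{\bA}^{u^n}$ for large $n$, so the arcs stay connected in the discrete complement. The second case is the main obstacle, because Hausdorff closeness of sets does not imply that the discrete FPS also contains a separating crossing. I would handle it by a counting argument rather than a direct topological one. The discrete frontiers also produce $N$ curves with the same valences, so $\conn^n\in\LP_\multii$ is itself compatible with $\beta$ (i.e.\ $\beta\KWleq\conn^n$, by the discrete level-line structure as in Lemma~\ref{lem:KWleq importance}). Connectivities that are present in the limit persist by the first case. I would then argue that a valenced link pattern is determined by the set of pairs of boundary arcs it connects. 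Hence if the discrete pattern merged strictly more arcs, this would contradict the fixed number of curves at each marked point.

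If this counting fails, the fallback is to follow \cite[Lemma~5.5]{Liu-Wu:Scaling_limits_of_crossing_probabilities_in_metric_graph_GFF}. That approach uses the fact that metric graph FPS clusters connecting to distinct arcs in the limit cannot be separated by a macroscopic gap, which comes from the convergence of the metric graph GFF on the complement as well.
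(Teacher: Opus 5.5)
\paragraph{Where the proposal matches the paper.} Your Steps 1--2 and the first case of Step 3 are exactly the paper's proof:
\begin{itemize}
\item pass to a subsequence along which the Hausdorff convergence of Proposition~\ref{prop:metric_graph_gff_fps_convergence} holds almost surely;
\item encode $\conn$ by which boundary arcs share a complementary component of $\bA^u_{2m\lambda}$;
\item use a $\delta$-neighbourhood of a connecting path to transfer connections to the metric graph.
\end{itemize}
The paper then concludes $\conn^\ell=\conn$ directly from this one-sided statement and the finiteness of the indices $(i,j,m)$. It gives no separate argument that \emph{separations} persist.

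\paragraph{The gap: your counting argument fails.} You are right that persistence of separations does not follow from Hausdorff convergence. However, the counting argument you offer for it is false. Knowing that a link pattern is determined by its connectivity data does not exclude a different situation: the discrete data could be a strict coarsening of the limiting data, belonging to another admissible pattern with the same valences.

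Concretely, take $\multii=(1,1,1,1)$ and $\beta=(0,1,0,1,0)$. Then $u=2\lambda$ on $(y_1,y_2)\cup(y_3,y_4)$ and $u=0$ on $(y_2,y_3)\cup(y_4,y_1)$, and only the level $m=0$ occurs.
\begin{itemize}
\item For $\alpha_1=\{\{1,2\},\{3,4\}\}$, the two arcs with $u=0$ lie in one complementary component of the FPS.
\item For $\alpha_2=\{\{1,4\},\{2,3\}\}$, they lie in different components.
\end{itemize}
Both patterns lie in $\PLP_\multii(\beta)$, and both have exactly one curve at each $y_j$. On the disc they are rotations of each other, with the roles of the arcs $u=0$ and $u=2\lambda$ exchanged. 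If $\conn=\alpha_2$, there is no limiting connection to transfer. Your argument is therefore equally consistent with $\conn^n=\alpha_1$, and no count of curves at the marked points rules this out.

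\paragraph{A possible repair.} What is missing is a robust \emph{crossing} of the discrete FPS, and it can be obtained from the dual field.
\begin{itemize}
\item In the continuum, the frontier of the FPS of $-(\Phi+u)$ of level $-2(m+1)\lambda$ consists of the same level lines of height $(2m+1)\lambda$. This follows from Lemma~\ref{lem:frontiers_are_level_lines} applied to $-(\Phi+u)$, together with reversibility of level lines.
\item Suppose two arcs $I,J$ with $u\le 2m\lambda$ lie in different components. Take a level line separating them. The component on its high side touches two distinct arcs $K,L$ with $u\ge 2(m+1)\lambda$, which interlace with $I,J$ on $\partial\Omega$. Moreover, $K$ and $L$ are connected, at positive distance, in the complement of this dual FPS.
\item Apply Proposition~\ref{prop:metric_graph_gff_fps_convergence} also to $-\wt\phi^n$ and $-u^n$ in the same coupling, and repeat your first-case argument. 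This yields, for large $n$, a discrete path from $K$ to $L$ avoiding the dual FPS, along which one may take $\wt\phi^n+u^n>2(m+1)\lambda$.
\item Such a path lies in the discrete upper FPS of level $2m\lambda$. By planarity of the metric graph, it separates $I$ from $J$ in the complement of that FPS.
\end{itemize}
Combined with the connection direction, this gives $\conn^n=\conn$ for all large $n$. Your fallback to \cite[Lemma~5.5]{Liu-Wu:Scaling_limits_of_crossing_probabilities_in_metric_graph_GFF} would likewise need such a crossing statement to be formulated and proved, not just invoked.
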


\begin{proof}
It suffices to prove that for every subsequence of $(n_k)_{k}$, there is a further subsequence $(n_{k_\ell})_\ell$ along which the claimed convergence holds. We may, and will, choose $k_\ell$ so that the convergence in probability of Proposition~\ref{prop:metric_graph_gff_fps_convergence} also holds in the almost sure sense along $(n_{k_\ell})_\ell$. 
We now claim that along this subsequence denoted by $\ell$, the convergence $\smash{\conn^{\ell} \to \conn}$ as $\ell \to \infty$ holds almost surely. 
To this end, note that $\conn$ is uniquely determined by which segments $(y_j, y_{j+1})$, for $1 \leq j \leq p$, are connected to each other within the FPS complements $ \Omega \setminus ( \bA_{2m\lambda}^u \setminus \partial \Omega )$ for $m \in \{0,1,\ldots,\maxu-1\}$. A similar property holds for $\smash{\conn^{\ell}}$ and the metric graph FPSs. 
Observe that, if some segments $(y_j, y_{j+1})$ and $(y_i, y_{i+1})$ are connected by a path $P$ in $ \Omega \setminus ( \bA_{2m\lambda}^u \setminus \partial \Omega )$, 
then, since this FPS was cut out by a chordal curve inside $\Omega$, there exists $\delta >0$ such that a $\delta$-neighborhood of $P$ lies inside $\Omega \setminus ( \bA_{2m\lambda}^u \setminus \partial \Omega )$. 
Now, by the almost sure Hausdorff convergence, the corresponding boundary arcs $\smash{(y^{\ell}_j, y^{\ell}_{j+1})}$ and $\smash{(y^{\ell}_i, y^{\ell}_{i+1})}$ 
are also connected in $ \Omega^{\ell} \setminus \bA_{2m\lambda}^{u^{\ell}}(\smash{\wt{\phi}^{n_{k_\ell}}})$, for all $\ell$ large enough. 
Since there are finitely many possible values for $j,i$, and $m$, we conclude that taking $\ell$ large enough, 
we have $\smash{\conn^{\ell}} = \conn$. This proves the desired convergence along the subsequence, which implies the result.
\end{proof}

\newcommand{\changeurlcolor}[1]{\hypersetup{urlcolor=#1}}      
\changeurlcolor{black}

\bigskip{}
\bibliographystyle{annotate}

\begin{thebibliography}{FLPW24}

\bibitem[AHSY26]{AHSY:Conformal_welding_of_quantum_disks_and_multiple_SLE_the_non-simple_case}
Morris Ang, Nina Holden, Xin Sun, and Pu~Yu.
\newblock Conformal welding of quantum disks and multiple {SLE}: the non-simple
  case.
\newblock {\em Probab. Theory Related Fields, to appear}, 2026.
\newblock Preprint in arXiv:2310.20583.


\bibitem[ALS20a]{ALS:First_passage_sets_of_the_2D_continuum_GFF}
Juhan Aru, Titus Lupu, and Avelio Sep\'ulveda.
\newblock First passage sets of the 2{D} continuum {G}aussian free field.
\newblock {\em Probab. Theory Related Fields}, 176(3-4):1303--1355, 2020.


\bibitem[ALS20b]{ALS:First_passage_sets_of_2D_GFF}
Juhan Aru, Titus Lupu, and Avelio Sep\'ulveda.
\newblock The first passage sets of the {$2D$} {G}aussian free field:
  convergence and isomorphisms.
\newblock {\em Comm. Math. Phys.}, 375(3):1885--1929, 2020.


\bibitem[AMY25]{AMY:Multiple_SLE_from_CLE}
Valeria Ambrosio, Jason Miller, and Yizheng Yuan.
\newblock Multiple $\mathrm{SLE}_\kappa$ from $\mathrm{CLE}_\kappa$ for $\kappa
  \in (4,8)$.
\newblock Preprint in arXiv:2503.08958, 2025.

\bibitem[AS18]{Aru-Sepulveda:Two-valued_local_sets_of_2D_continuum_GFF_connectivity_labels_and_induced_metrics}
Juhan Aru and Avelio Sep\'ulveda.
\newblock Two-valued local sets of the 2{D} continuum {G}aussian free field:
  connectivity, labels, and induced metrics.
\newblock {\em Electron. J. Probab.}, 23(61):1--35, 2018.


\bibitem[ASW19]{ASW:BTLS}
Juhan Aru, Avelio Sep\'ulveda, and Wendelin Werner.
\newblock On bounded-type thin local sets of the two-dimensional gaussian free
  field.
\newblock {\em J. Inst. Math. Jussieu}, 18(3):591--618, 2019.


\bibitem[BBK05]{BBK:Multiple_SLEs_and_statistical_mechanics_martingales}
Michel Bauer, Denis Bernard, and Kalle Kyt{\"o}l{\"a}.
\newblock Multiple {S}chramm-{L}oewner evolutions and statistical mechanics
  martingales.
\newblock {\em J. Stat. Phys.}, 120(5-6):1125--1163, 2005.


\bibitem[BP25]{Berestycki-Powell:Gaussian_free_field_and_Liouville_quantum_gravity}
Nathana{\"e}l Berestycki and Ellen Powell.
\newblock {\em Gaussian free field and {L}iouville quantum gravity}.
\newblock Cambridge Studies in Advanced Mathematics. Cambridge University
  Press, 2025.


\bibitem[BPW21]{BPW:On_the_uniqueness_of_global_multiple_SLEs}
Vincent Beffara, Eveliina Peltola, and Hao Wu.
\newblock On the uniqueness of global multiple $\mathrm{SLE}$s.
\newblock {\em Ann. Probab.}, 49(1):400--434, 2021.


\bibitem[Car03]{Cardy:SLE_and_Dyson_circular_ensembles}
John~L. Cardy.
\newblock Stochastic {L}oewner evolution and {D}yson's circular ensembles.
\newblock {\em J. Phys. A}, 36(24):L379--L386, 2003.


\bibitem[Dub07a]{Dubedat:Commutation_relations_for_SLE}
Julien Dub{\'e}dat.
\newblock Commutation relations for {$\mathrm{SLE}$}.
\newblock {\em Comm. Pure Appl. Math.}, 60(12):1792--1847, 2007.


\bibitem[Dub07b]{Dubedat:Duality_of_SLE}
Julien Dub{\'e}dat.
\newblock Duality of {S}chramm-{L}oewner {E}volutions.
\newblock {\em Ann. Sci. {\'E}c. Norm. Sup{\'e}r.}, 42(5):697--724, 2007.


\bibitem[Dub09]{Dubedat:SLE_and_free_field}
Julien Dub{\'e}dat.
\newblock $\mathrm{SLE}$ and the free field: partition functions and couplings.
\newblock {\em J. Amer. Math. Soc.}, 22(4):995--1054, 2009.


\bibitem[FLPW24]{FLPW:Multiple_SLEs_Coulomb_gas_integrals_and_pure_partition_functions}
Yu~Feng, Mingchang Liu, Eveliina Peltola, and Hao Wu.
\newblock Multiple {SLE}s for {$\kappa\in (0,8)$:} {C}oulomb gas integrals and
  pure partition functions.
\newblock Preprint in arXiv:2406.06522, 2024.

\bibitem[HL21]{Healey-Lawler:N_sided_radial_SLE}
Vivian~O. Healey and Gregory~F. Lawler.
\newblock {$N$}-sided radial {S}chramm–{L}oewner evolution.
\newblock {\em Probab. Theory Related Fields}, 181(1-3):451--488, 2021.


\bibitem[HPW25]{HPW:Multiradial_SLE_with_spiral}
Chongzhi Huang, Eveliina Peltola, and Hao Wu.
\newblock Multiradial {SLE} with spiral: resampling property and boundary
  perturbation.
\newblock Preprint in arXiv:2509.22045, 2025.

\bibitem[Kac80]{Kac:Highest_weight_representations_of_infinite_dimensional_Lie_algebras}
Victor~G. Kac.
\newblock Highest weight representations of infinite dimensional {L}ie
  algebras.
\newblock In {\em Proceedings of the ICM 1978, Helsinki, Finland}, volume~1,
  pages 299--304. Acad. Sci. Fenn., 1980.

\bibitem[Kar19]{Karrila:Multiple_SLE_local_to_global}
Alex Karrila.
\newblock Multiple {$\mathrm{SLE}$} type scaling limits: from local to global.
\newblock Preprint in arXiv:1903.10354, 2019.

\bibitem[Kar20]{Karrila:UST_branches_martingales_and_multiple_SLE2}
Alex Karrila.
\newblock U{ST} branches, martingales, and multiple {$\mathrm{SLE}(2)$}.
\newblock {\em Electron. J. Probab.}, 25:1--37, 2020.


\bibitem[Kar26]{Karrila:Computation_of_pairing_probabilities_in_multiple-curve_models}
Alex Karrila.
\newblock A new computation of pairing probabilities in several multiple-curve
  models.
\newblock {\em ALEA Lat. Am. J. Probab. Math. Stat., to appear}, 2026.
\newblock Preprint in arXiv:2208.06008.


\bibitem[Kem17]{Kemppainen:SLE_book}
Antti Kemppainen.
\newblock {\em Schramm-{L}oewner evolution}, volume~24 of {\em SpringerBriefs
  in Mathematical Physics}.
\newblock Springer Cham, 2017.


\bibitem[KKP20]{KKP:Boundary_correlations_in_planar_LERW_and_UST}
Alex Karrila, Kalle Kyt{\"o}l{\"a}, and Eveliina Peltola.
\newblock Boundary correlations in planar {LERW} and {UST}.
\newblock {\em Comm. Math. Phys.}, 376(3):2065--2145, 2020.


\bibitem[KL07]{Kozdron-Lawler:Configurational_measure_on_mutually_avoiding_SLEs}
Michael~J. Kozdron and Gregory~F. Lawler.
\newblock The configurational measure on mutually avoiding {$\mathrm{SLE}$}
  paths.
\newblock In {\em Universality and renormalization}, volume~50 of {\em Fields
  Inst. Commun.}, pages 199--224. Amer. Math. Soc., Providence, RI, 2007.

\bibitem[KP26]{Karrila-Peltola:Boundary_double-dimer_patterns_and_CFT}
Alex Karrila and Eveliina Peltola.
\newblock Boundary double-dimer patterns and conformal field theory.
\newblock In preparation, 2026.

\bibitem[KW11]{Kenyon-Wilson:Double_dimer_pairings_and_skew_Young_diagrams}
Richard~W. Kenyon and David~B. Wilson.
\newblock Double-dimer pairings and skew {Y}oung diagrams.
\newblock {\em Electron. J. Combin.}, 18(1):1--22, 2011.


\bibitem[Kyt06]{Kytola:On_CFT_of_SLE_kappa_rho}
Kalle Kyt{\"o}l{\"a}.
\newblock On conformal field theory of $\mathrm{SLE}(\kappa,\rho)$.
\newblock {\em J. Stat. Phys.}, 123(6):1169--1181, 2006.


\bibitem[Law09a]{Lawler:Partition_functions_loop_measure_and_versions_of_SLE}
Gregory~F. Lawler.
\newblock Partition functions, loop measure, and versions of {$\mathrm{SLE}$}.
\newblock {\em J. Stat. Phys.}, 134(5-6):813--837, 2009.


\bibitem[Law09b]{Lawler:SLE}
Gregory~F. Lawler.
\newblock {S}chramm-{L}oewner evolution.
\newblock In {\em Statistical Mechanics}, IAS/Park City mathematical series,
  pages 231--295. Amer. Math. Soc., 2009.

\bibitem[LPR25]{LPR:Fused_Specht_polynomials_and_c_equals_1_degenerate_conformal_blocks}
Augustin Lafay, Eveliina Peltola, and Julien Roussillon.
\newblock Fused {S}pecht polynomials and $c=1$ degenerate conformal blocks.
\newblock {\em Trans. Amer. Math. Soc.}, 12:1043--1100, 2025.


\bibitem[LSW04]{LSW:Conformal_invariance_of_planar_LERW_and_UST}
Gregory~F. Lawler, Oded Schramm, and Wendelin Werner.
\newblock Conformal invariance of planar loop-erased random walks and uniform
  spanning trees.
\newblock {\em Ann. Probab.}, 32(1B):939--995, 2004.


\bibitem[Lup16]{Lupu:From_loop_clusters_and_random_interlacements_to_the_free_field}
Titus Lupu.
\newblock From loop clusters and random interlacements to the free field.
\newblock {\em Ann. Probab.}, 44(3):2117--2146, 2016.


\bibitem[LW21]{Liu-Wu:Scaling_limits_of_crossing_probabilities_in_metric_graph_GFF}
Mingchang Liu and Hao Wu.
\newblock Scaling limits of crossing probabilities in metric graph {GFF}.
\newblock {\em Electron. J. Probab.}, 26(37):1--46, 2021.


\bibitem[MS16a]{Miller-Sheffield:Imaginary_geometry1}
Jason Miller and Scott Sheffield.
\newblock Imaginary geometry {I}: interacting $\mathrm{SLE}$s.
\newblock {\em Probab. Theory Related Fields}, 164(3-4):553--705, 2016.


\bibitem[MS16b]{Miller-Sheffield:Imaginary_geometry2}
Jason Miller and Scott Sheffield.
\newblock Imaginary geometry {II}: reversibility of
  $\mathrm{SLE}_\kappa(\rho_1,\rho_2)$ for $\kappa \in (0,4)$.
\newblock {\em Ann. Probab.}, 44(3):1647--1722, 2016.


\bibitem[MS16c]{Miller-Sheffield:Imaginary_geometry3}
Jason Miller and Scott Sheffield.
\newblock Imaginary geometry {III}: reversibility of $\mathrm{SLE}_\kappa$ for
  $\kappa \in (4,8)$.
\newblock {\em Ann. of Math.}, 184(2):455--486, 2016.


\bibitem[MSW20]{MSW:Non-simple_SLE_curves_are_not_determined_by_their_range}
Jason Miller, Scott Shaffield, and Wendelin Werner.
\newblock Non-simple {SLE} curves are not determined by their range.
\newblock {\em J. Eur. Math. Soc.}, 22(3):669--716, 2020.


\bibitem[MW17]{Miller-Wu:Intersections_of_SLE_paths:_the_double_and_cut_point_dimension_of_SLE}
Jason Miller and Hao Wu.
\newblock Intersections of {SLE} paths: the double and cut point dimension of
  {SLE}.
\newblock {\em Probab. Theory Related Fields}, 167(1-2):45--105, 2017.


\bibitem[OPS88]{OPS:Extremals_of_determinants_of_Laplacians}
Brad Osgood, Ralph Phillips, and Peter Sarnak.
\newblock Extremals of determinants of {L}aplacians.
\newblock {\em J. Funct. Anal.}, 80(1):148--211, 1988.


\bibitem[PW19]{Peltola-Wu:Global_and_local_multiple_SLEs_and_connection_probabilities_for_level_lines_of_GFF}
Eveliina Peltola and Hao Wu.
\newblock Global and local multiple $\mathrm{SLE}$s for $\kappa \leq 4$ and
  connection probabilities for level lines of {GFF}.
\newblock {\em Comm. Math. Phys.}, 366(2):469--536, 2019.


\bibitem[PW21]{Powell-Werner:Lecture_notes_on_the_GFF}
Ellen Powell and Wendelin Werner.
\newblock Lecture notes on the {G}aussian free field.
\newblock In {\em Cours Sp\'ecialis\'es}, number~28. Soci\'et\'e Math\'ematique
  de France, 2021.

\bibitem[RS05]{Rohde-Schramm:Basic_properties_of_SLE}
Steffen Rohde and Oded Schramm.
\newblock Basic properties of $\mathrm{SLE}$.
\newblock {\em Ann. of Math.}, 161(2):883--924, 2005.


\bibitem[Sch00]{Schramm:Scaling_limits_of_LERW_and_UST}
Oded Schramm.
\newblock Scaling limits of loop-erased random walks and uniform spanning
  trees.
\newblock {\em Israel J. Math.}, 118(1):221--288, 2000.


\bibitem[Sch06]{Schramm:ICM}
Oded Schramm.
\newblock Conformally invariant scaling limits, an overview and a collection of
  problems.
\newblock In {\em Proceedings of the ICM 2006, Madrid, Spain}, volume~1, pages
  513--543. European Mathematical Society, 2006.

\bibitem[She07]{Sheffield:GFF_for_mathematicians}
Scott Sheffield.
\newblock Gaussian free field for mathematicians.
\newblock {\em Probab. Th. Rel. Fields}, 139(3):521--541, 2007.


\bibitem[SS09]{Schramm-Sheffield:Contour_lines_of_2D_discrete_GFF}
Oded Schramm and Scott Sheffield.
\newblock Contour lines of the two-dimensional discrete {G}aussian free field.
\newblock {\em Acta Math.}, 202(1):21--137, 2009.


\bibitem[SS13]{Schramm-Sheffield:A_contour_line_of_the_continuum_GFF}
Oded Schramm and Scott Sheffield.
\newblock A contour line of the continuum {G}aussian free field.
\newblock {\em Probab. Theory Related Fields}, 157(1):47--80, 2013.


\bibitem[SW05]{Schramm-Wilson:SLE_coordinate_changes}
Oded Schramm and David~B. Wilson.
\newblock $\mathrm{SLE}$ coordinate changes.
\newblock {\em New York J. Math.}, 11:659--669, 2005.


\bibitem[SY24]{Sun-Yu:SLE_partition_functions_via_conformal_welding_of_random_surfaces}
Xin Sun and Pu~Yu.
\newblock {SLE} partition functions via conformal welding of random surfaces.
\newblock {\em Int. Math. Res. Not.}2024(24):14763--14801, 2024.


\bibitem[WW17]{Wang-Wu:Level_lines_of_Gaussian_free_field_I}
Menglu Wang and Hao Wu.
\newblock Level lines of {G}aussian free field {I}: {Z}ero-boundary {GFF}.
\newblock {\em Stochastic Process. Appl.}, 127(4):1045--1124, 2017.


\bibitem[Zha08]{Zhan:Reversibility_of_chordal_SLE}
Dapeng Zhan.
\newblock Reversibility of chordal {$\mathrm{SLE}$}.
\newblock {\em Ann. Probab.}, 36(4):1472--1494, 2008.


\bibitem[Zha24]{Zhan:Existence_and_uniqueness_of_nonsimple_multiple_SLE}
Dapeng Zhan.
\newblock Existence and uniqueness of nonsimple multiple {SLE}.
\newblock {\em J. Stat. Phys.}, 191(8):1--15, 2024.


\end{thebibliography}

\end{document}